\DeclareMathOperator{\lca}{\mathrm{lca}}
\DeclareMathOperator{\fr}{\mathrm{fr}}
\title{Partial and Simultaneous Transitive Orientations via Modular Decompositions} 
\titlerunning{Partial and Simultaneous Transitive Orientations}
\author{Miriam Münch}{Faculty of Computer Science and Mathematics, University of
Passau, Germany}{muenchm@fim.uni-passau.de}{https://orcid.org/0000-0002-6997-8774}{}
\author{Ignaz Rutter}{Faculty of Computer Science and Mathematics, University of
Passau, Germany}{rutter@fim.uni-passau.de}{https://orcid.org/0000-0002-3794-4406}{}
\author{Peter Stumpf}{Faculty of Computer Science and Mathematics, University of
Passau, Germany}{stumpf@fim.uni-passau.de}{https://orcid.org/0000-0003-0531-9769}{}
\authorrunning{M. Münch, I. Rutter and P. Stumpf} 
\keywords{representation extension, simultaneous representation, comparability graph, permutation graph, circular permutation graph, modular decomposition} %
\newcommand\perm{\textsc{Perm}\xspace}
\newcommand{\cperm}{\textsc{CPerm}\xspace}
\newcommand{\comp}{\textsc{Comp}\xspace}
\newcommand{\repext}{\textsc{RepExt}\xspace}
\newcommand{\orext}{\textsc{OrientExt}\xspace}
\newcommand{\simul}{\textsc{SimRep}\xspace}
\newcommand{\sunflower}{\textsc{SimRep$^\star$}\xspace} 
\newcommand{\simorNon}{\textsc{SimOrient}\xspace}
\newcommand{\simor}{\textsc{SimOrient$^\star$}\xspace}
\newcommand{\rep}[1]{\ensuremath {\mathrm{rep}(#1)}\xspace}
\newcommand{\repS}[1]{\ensuremath {\mathrm{rep}_{T}(#1)}\xspace} 
\newcommand{\repB}[1]{\ensuremath {\mathrm{rep}_B(#1)}\xspace}
\newcommand{\repT}[1]{\ensuremath {\mathrm{rep}_T(#1)}\xspace}
\newcommand{\repMu}[1]{\ensuremath {\mathrm{rep}_\mu(#1)}\xspace}
\newcommand{\repNode}[2]{\ensuremath {\mathrm{rep}_{#1}(#2)}\xspace}
\newcommand\TO{\mathrm{to}\xspace}
\begin{document}
\nolinenumbers
\hideLIPIcs
\maketitle

\begin{abstract}
A natural generalization of the recognition problem for a geometric graph class
is the problem of extending a representation of a subgraph to a representation
of the whole graph.
A related problem is to find representations for multiple input graphs that coincide on subgraphs shared by the input graphs. A common restriction is the sunflower case where the
shared graph is the same for each pair of input graphs.
These problems translate to the setting of comparability graphs where the representations
correspond to transitive orientations of their edges.
We use modular decompositions to improve the runtime for the
orientation extension problem and the sunflower orientation problem to linear time.
We apply these results to improve the runtime for the partial representation problem and the
sunflower case of the simultaneous representation problem for permutation graphs to linear time. We also give the first efficient algorithms for these problems on circular permutation graphs. 
\end{abstract}

\section{Introduction}
\label{sec:introduction}

\begin{figure}[b]
  \centering
  \includegraphics{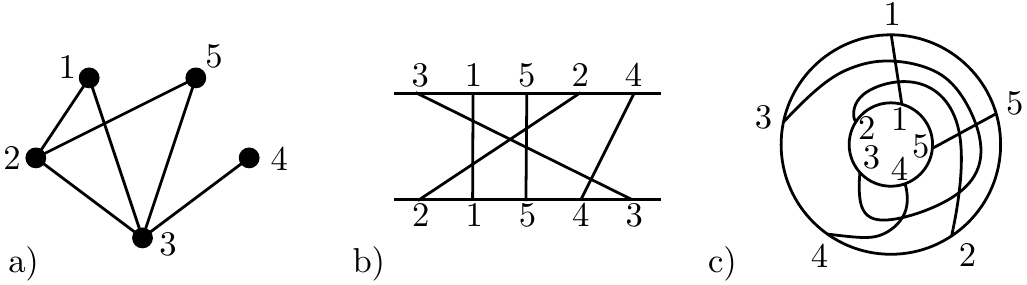}
  \caption{(a) a graph $G$ with (b) a permutation diagram and (c) a circular permutation diagram.}
  \label{fig:exampleCPG}
\end{figure}

Representations and drawings of graphs have been considered since graphs have
been studied~\cite{klavik2017extending}.
A \emph{geometric intersection representation} of a graph $G=(V,E)$ with
regards to a class of geometric objects~$\mathcal C$, 
is a map $R\colon V\to \mathcal C$ that assigns objects of~$\mathcal C$ to the
vertices of~$G$ such that $G$ contains an edge $uv$ if and only if the
intersection~$R(u)\cap R(v)$ is non-empty.
In this way, the class $\mathcal C$ gives rise to a class of graphs, namely the
graphs that admit such a representation.
As an example, consider \emph{permutation diagrams} where $\mathcal C$ consists
of segments connecting two parallel lines $\ell_1$, $\ell_2$, see
Figure~\ref{fig:exampleCPG}b, which defines the class \perm of \emph{permutation
graphs}.
Similarly, the class \cperm of \emph{circular permutation graphs}
is obtained by replacing $\ell_1$, $\ell_2$ with concentric circles and the
geometric objects with curves from $\ell_1$ to $\ell_2$ that pairwise intersect at most
once; see Figure~\ref{fig:exampleCPG}c.

A key problem in this context is the \emph{recognition problem}, which asks
whether a given graph admits such a representation for a fixed class~$\mathcal C$.
Klavík et al.~introduced the \emph{partial representation extension problem}
($\repext(\mathcal C)$) for intersection graphs where a representation $R'$ is
given for a subset of vertices $W\subseteq V$ and the question is whether~$R'$
can be extended to a representation $R$ of~$G$, in the sense that \mbox{$R|_W
=R'$~\cite{klavik2017extending}}.
They showed that \repext can be solved in linear time for interval graphs.
The problem has further been studied for proper/unit interval
graphs~\cite{klavik2017extendingPUI}, function and permutation
graphs (\perm)~\cite{klavik2012extending}, circle graphs~\cite{CFK13}, chordal
graphs~\cite{KKOS15}, and trapezoid graphs~\cite{KrawczykW17}.
Related extension problems have also been considered, e.g., for planar
topological~\cite{adfjk-tppeg-15,jkr-kttpp-13} and
straight-line~\cite{p-epsld-06} drawings, for 1-planar
drawings~\cite{eghkn-ep1pd-20}, for contact
representations~\cite{chaplick2014contact}, and for rectangular
duals~\cite{abs-2102-02013}. 

A related problem is the \emph{simultaneous representation problem}
($\simul(\mathcal C)$) where input graphs $G_1\dots,G_r$ that may share
subgraphs are given and the question is whether they have representations
$R_1,\dots,R_r$ such that for $i,j\in\{1,\dots,r\}$ the shared graph $H=G_i\cap
G_j$ has the same representation in $R_i$ and $R_j$, i.e.,~$R_i|_{V(H)}=R_j|_{V(H)}$.
If more than two input graphs are allowed, usually the \emph{sunflower case}
($\sunflower(\mathcal C)$) is considered, where the shared graph $H=G_i\cap
G_j$ is the same for any $i\ne j\in \{1,\dots,r\}$.
I.e., here the question is whether~$H$ has a representation that can be simultaneously
extended to $G_1,\dots,G_r$.
Simultaneous representations were first studied in the context of planar
drawings~\cite{Handbook2013, brass2007simultaneous}, where
the goal is to embed each input graph without edge crossings while shared
subgraphs have the same induced embedding. Unsurprisingly, many variants are
NP-complete~\cite{Gassner2006, Schaefer2013,
	Angelini2014,Estrella-Balderrama2008}. 

  Motivated by applications in visualization of temporal relationships, and for
  overlapping social networks or schedules, DNA fragments of similar organisms
  and adjacent layers on a computer chip, Jampani and Lubiw introduced the
  problem \simul for intersection graphs~\cite{jampani2012simultaneous}.
They provided polynomial-time algorithms for two chordal graphs and for
$\sunflower(\perm)$.  They also showed that in general $\simul$ is NP-complete
for three or more chordal graphs.
The problem was also studied for interval
graphs~\cite{Jampani2010, Blasius:2015:SPA:2846106.2738054,bok2018note},
proper/unit interval graphs~\cite{rutter2019simultaneous},
circular-arc graphs~\cite{bok2018note} and circle graphs~\cite{CFK13}. 

Many of the considered graph classes are related to the class \comp of
\emph{comparability graphs}~\cite{golumbic2004algorithmic}.
An \emph{orientation}~$O$ of a graph $G=(V,E)$ assigns to each edge of $G$ a direction.
The orientation~$O$ is \emph{transitive} if~$uv,vw\in O$ implies $uw\in O$. 
A comparability graph is a graph for which there is a \emph{transitive
orientation}. 
A \emph{partial orientation} is an orientation of a (not necessariliy induced)
subgraph of $G$. 
Similar to \repext, \sunflower and \simul, the problems \orext, \simor and \simorNon for
comparability graphs ask for a transitive orientation of a graph that
extends a given partial orientation and for transitive orientations
that coincide on the shared graph, respectively. 
The key ingredient
for the $O(n^3)$ algorithm solving $\repext(\perm)$ by Klavík et
al.~\cite{klavik2012extending} is a polynomial-time solution for
\orext based on the transitive orientation algorithm by
Gilmore and Hoffman~\cite{gilmore1964characterization}.  Likewise, the $O(n^3)$ algorithm solving $\sunflower(\perm)$
by Jampani and Lubiw~\cite{jampani2012simultaneous} is based on a
polynomial-time algorithm for \simor based on the transitive orientation algorithm by
Golumbic~\cite{golumbic2004algorithmic}. 

\begin{table}[t]
  \centering
  \subfloat{
    \begin{tabular}{|l|c|c|}
       \hline
       & \repext & \sunflower\\
       \hline
       \comp &~$O((n+m)\Delta)$~\cite{klavik2012extending}&~$O(nm)$~\cite{jampani2012simultaneous}\\
       \hline
       \perm &~$O(n^3)$~\cite{klavik2012extending} &~$O(n^3)$~\cite{jampani2012simultaneous}\\
       \hline
       \cperm & open & open\\
       \hline
    \end{tabular}
  }
  \vspace*{4mm}
  \hfill
  \subfloat{
    \begin{tabular}{|l|c|c|}
       \hline
       & \repext & \sunflower \\
       \hline
       \comp &~$O(n+m)$ &~$O(n + m)$\\
       \hline
       \perm &~$O(n+m)$ &~$O(n + m)$\\
       \hline
       \cperm &~$O(n+m)$ &~$O(n^2)$\\
       \hline
    \end{tabular}
  }
  \caption{Known runtimes on the left and new runtimes on the right. For
    $\sunflower$ we set~$n=\sum_{i=1}^r |V(G_i)|$ and~$m=\sum_{i=1}^r
    |E(G_i)|$.  We use $\repext(\comp)$ and $\sunflower(\comp)$ to refer to
    $\orext$ and $\simor$, respectively, in a slight abuse of
  notation.}
  \label{fig: runtimes}
\end{table}

\noindent\textbf{Contribution and Outline.} In Section~\ref{sec:modDec}, we introduce modular
decompositions which can be used to describe certain subsets of the set of all transitive
orientations of a graph, e.g., those that extend a given partial representation.
Based on this, we give a simple linear-time algorithm for \orext in Section~\ref{sec:ParComp}.
Afterwards, in Section~\ref{sec: simComp}, we develop an algorithm for
intersecting subsets of transitive orientations represented by 
modular decompositions and use this to give a linear-time algorithm for \simor.
In Section~\ref{sec:perm} we give linear-time algorithms for
$\repext(\perm)$ and $\sunflower(\perm)$, improving over the
$O(n^3)$-algorithms of Klavík et al.\ and Jampani and Lubiw, respectively.
We also give the first efficient algorithms for  $\repext(\cperm)$ and $\sunflower(\cperm)$
in Section~\ref{sec:cperm}.
Table~\ref{fig: runtimes} gives an overview of the state of the art and our results.
In Section~\ref{sec:hard} we show that the simultaneous orientation 
problem and the simultaneous representation problem for permutation 
graphs are both \texttt{NP}-complete in the non-sunflower case.

\section{Modular Decompositions}
\label{sec:modDec}

Let~$G = (V, E)$ be an undirected graph.
We write $G[U]$ for the subgraph induced by a vertex set~$U\subseteq V$.  
For a rooted tree~$T$ and a node $\mu$ of $T$, we write~$T[\mu]$
for the subtree of $T$ with root~$\mu$ and~$L(\mu)$ for the leaf-set
of~$T[\mu]$.

A \emph{module} of~$G$ is a non-empty set of vertices~$M \subseteq V$ such that every
vertex~$u \in V\backslash M$ is either adjacent to all vertices in~$M$ or to
none of them.
The singleton subsets and~$V$ itself are called the \emph{trivial modules}.
A module~$M \subsetneq V$ is \emph{maximal}, if there exists no module~$M'$ such that~$M \subsetneq M' \subsetneq V$.
If~$G$ has at least three vertices and no non-trivial modules, then it is
called \emph{prime}. 
We call a rooted tree~$T$ with root~$\rho$ and~$L(\rho) = V$ a \emph{(general) modular
decomposition} for~$G$ if for every node~$\mu$ of~$T$ the set~$L(\mu)$ is a
module; see Figure~\ref{fig: mdecomp}.
\begin{figure}

    {\includegraphics{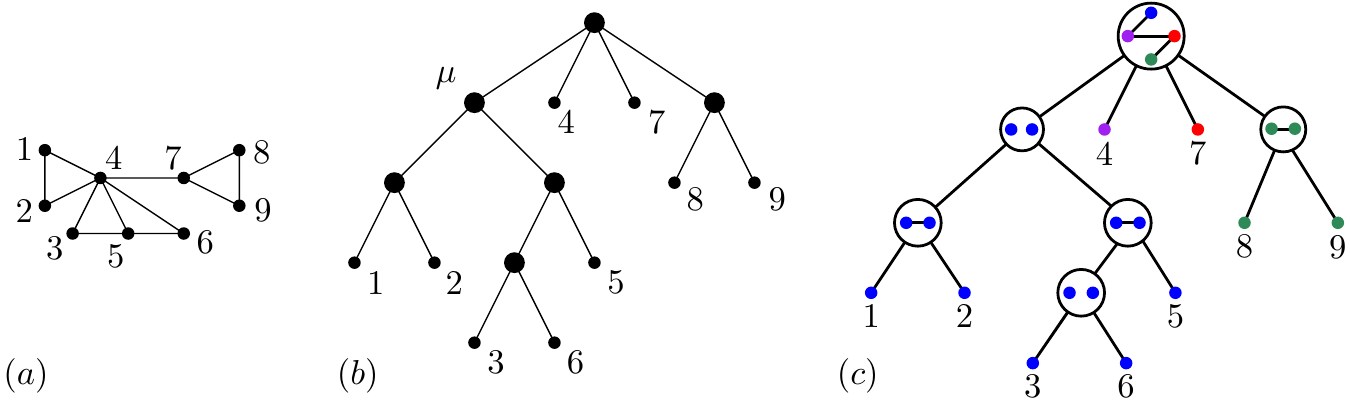}}
  \caption{(a) A graph $G$. (b) The canonical modular decomposition of~$G$ with~$L(\mu) = \{1, 2, 3, 5, 6\}$.
  (c) The quotient graphs corresponding to the nodes in the canonical modular decomposition.}
  \label{fig: mdecomp}
\end{figure}
Observe that for any two nodes~$\mu_1, \mu_2\in T$ such that 
neither of them is an ancestor of the other,~$G$ contains either all edges with one endpoint in~$L(\mu_1)$ and 
one endpoint in~$L(\mu_2)$ or none of them.
For two vertices~$u, v \in V$ we denote the lowest common ancestor of their
corresponding leaves in~$T$ by~$\mathrm{lca}_T(u, v)$. For a set of leaves $L$,
we denote the lowest common ancestor by~$\mathrm{lca}_T(L)$.

With each inner node~$\mu$ of~$T$ we associate a \emph{quotient graph}~$G[\mu]$ that is obtained from~$G[L(\mu)]$ by contracting
$L(\nu)$ into a single vertex for each child~$\nu$ of~$\mu$; see Figure~\ref{fig: mdecomp}.
In the rest of this paper we identify the vertices of~$G[\mu]$ with the
corresponding children of~$\mu$.
Every edge~$uv \in E$ is \emph{represented} by exactly one edge~$\repT{uv}$ in
one of the quotient graphs of~$T$, namely in the quotient graph~$G[\mu]$ of the
lowest common ancestor~$\mu$ of~$u$ and~$v$. 
More precisely, if~$\nu$ and~$\lambda$ are the children of~$\mu$ with~$u \in
L(\nu)$ and~$v \in L(\lambda)$, then \repT{uv}~$=$~$\nu\lambda$. 
For an oriented edge~$uv$,~$\repT{uv}$ is also oriented towards 
its endpoint~$\nu$ with~$v \in L(\nu)$.
If~$T$ is clear from the context, the subscript can be omitted. 
Let~$\mu$ be a node in~$T$. For a vertex~$u \in L(\mu)$ we denote the
child~$\lambda$ of~$\mu$ with~$u \in L(\lambda)$ by~$\text{rep}_{\mu}(u)$.

A node~$\mu$ in a modular decomposition~$T$ is called \emph{empty},
\emph{complete} or \emph{prime} if the quotient graph~$G[\mu]$ is \emph{empty},
\emph{complete} or \emph{prime}, respectively.
By~$K(T)$,~$P(T)$ we denote the set of all complete and prime nodes in~$T$,
respectively.
For every graph~$G$ there exists a uniquely defined modular decomposition,
that we call \emph{the canonical modular decomposition} of~$G$, introduced by
Gallai~\cite{gallai1967transitiv}, such that each quotient graph is either
prime, complete or empty and, additionally, no two adjacent nodes are both complete or are both empty; see Figure~\ref{fig: mdecomp}. Note that in the
literature, these are referred to as modular decompositions, whereas we use
that term for general modular decompositions.
For a prime node~$\mu$ in the canonical modular decomposition of~$G$, for every child~$\nu$ of~$\mu$,~$L(\nu)$ is a maximal module in~$G[L(\mu)]$ and for every maximal module~$M$ in~$G[L(\mu)]$ there exists a child~$\nu$ of~$\mu$ with~$L(\nu) = M$.
McConnell and Spinrad showed that the canonical modular decomposition can be computed
in~$O(|V|+|E|)$ time~\cite{mcconnell1999modular}.
Let~$\mu$ be a node in a modular decomposition for~$G$.  A
\emph{$\mu$-set} $U$ is a subset of~$L(\mu)$ that contains for each child~$\lambda$
of~$\mu$ at most one leaf in~$L(\lambda)$. 
If~$U$ contains for every child~$\lambda$ of~$\mu$ a vertex in~$L(\lambda)$,
we call it \emph{maximal}.

\begin{lemma}
\label{lemma:qoutientGraphs}
  Let~$T$ be a modular decomposition of a graph~$G$.
  After a linear-time preprocessing we can assume that each node of~$T$ is
  annotated with its quotient graph.  Moreover, the following queries can be
  answered in~$O(1)$ time:
  \begin{enumerate}[1)]
  \item Given a non-root node~$\nu$ of~$T$, find the vertex of the
    quotient graph of~$\nu$'s parent that corresponds to~$\nu$.
    
  \item Given a vertex~$v$ in a quotient graph~$G[\mu]$, find the
    child of~$\mu$ that corresponds to~$v$.

  \item Given an edge~$e$ of $G$, determine~$\rep{e}$, the quotient graph that
    contains~$\rep{e}$, and which endpoint of~$\rep{e}$ corresponds to which
    endpoint of~$e$.
  \end{enumerate}
  Additionally, given a node~$\mu$ in~$T$ one can find a maximal~$\mu$-set~$U$
  in~$O(|U|)$ time.

\end{lemma}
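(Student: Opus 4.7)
The plan is to set up, during a single $O(n+m)$ preprocessing pass, four auxiliary tables that directly encode the information requested by queries~1--3 and that enable the maximal $\mu$-set query. Specifically, I would store (i)~for each non-root node~$\nu$ of~$T$ a pointer to the vertex of $G[\mathrm{parent}(\nu)]$ that corresponds to~$\nu$; (ii)~for each vertex of each quotient graph a reverse pointer to the corresponding child of~$\mu$; (iii)~for each edge $uv\in E(G)$ the edge $\rep{uv}$ together with its endpoint correspondence; and (iv)~for each inner node~$\mu$ an arbitrary descendant leaf $\ell(\mu)\in L(\mu)$. With these tables in place, queries~1--3 are constant-time lookups.

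First, I would construct the vertex sets of all quotient graphs by traversing~$T$ once and, for each inner node~$\mu$, iterating over its children to list the vertices of~$G[\mu]$, simultaneously creating tables~(i) and~(ii). Since each non-root node of~$T$ appears as a vertex in exactly one quotient graph, this step costs~$O(n)$ in total. Table~(iv) is then obtained by a single bottom-up DFS in~$O(n)$ time, propagating a descendant leaf up from each leaf.

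The crucial step is filling in the edges of each $G[\mu]$ while correctly routing each edge $uv\in E(G)$ to the quotient graph of $\mu=\lca_T(u,v)$. For this I would equip~$T$ with an $O(n)$-time preprocessing allowing, in $O(1)$ time, both the LCA of two leaves and the two children of that LCA on the paths to them (a standard enhancement of Harel and Tarjan's scheme, or of the Euler-tour-plus-RMQ approach combined with level-ancestor queries). Then for each edge~$uv$ I retrieve~$\mu$ and the children~$\nu,\lambda$ of~$\mu$ with $u\in L(\nu)$, $v\in L(\lambda)$ in~$O(1)$, append $\nu\lambda$ to the adjacency list of~$G[\mu]$, and record table~(iii). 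Since each edge of~$G$ is represented by a unique edge in a single quotient graph, the combined size of all~$G[\mu]$ is $O(n+m)$ and the total work is $O(n+m)$.

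For the maximal $\mu$-set query, I iterate over $\mu$'s child list $\lambda_1,\dots,\lambda_k$ and output $U=\{\ell(\lambda_1),\dots,\ell(\lambda_k)\}$, which is clearly a maximal $\mu$-set and is produced in $O(k)=O(|U|)$ time. The main obstacle is the constant-time LCA-with-children primitive above: a naive walk up from~$u$ and~$v$ could cost $\Theta(\mathrm{depth}(T))$ per edge, so it is essential to use an LCA data structure that also reveals the children of the LCA on the paths to the two queried leaves; once this primitive is in place the remaining bookkeeping is straightforward.
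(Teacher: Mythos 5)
Your proposal is correct in essence but takes a genuinely different route from the paper's. The central difficulty is, for each edge $uv$ of~$G$, to determine $\mu=\lca_T(u,v)$ \emph{and} the two children of $\mu$ whose subtrees contain $u$ and $v$. The paper first bins the edges into lists $L_\mu$ by computing $\lca_T(u,v)$ with Harel--Tarjan, then performs a bottom-up traversal of~$T$ that maintains, via a union-find structure, the ``processed root'' $r(v)$ of each leaf; when~$\mu$ is processed, $r(u)$ and $r(v)$ are exactly the needed children for every $uv\in L_\mu$. Because the union tree coincides with~$T$ and is known in advance, the Gabow--Tarjan variant gives amortized $O(1)$ per operation, and the maximal $\mu$-sets are harvested as a by-product of the same traversal. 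You instead strengthen the LCA structure so that, in $O(1)$ after $O(n)$ preprocessing, it also returns the child of the LCA on the path to each queried leaf (LCA combined with constant-time level-ancestor queries does exactly this). This lets you process each edge directly in one pass, avoiding the union-find machinery entirely; it is a cleaner argument, at the cost of relying on a somewhat heavier (though still standard) tree primitive. Your separate treatment of the maximal $\mu$-set via a precomputed descendant leaf $\ell(\cdot)$ per node is also fine.

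There is one small omission. Simply appending $\nu\lambda$ to the adjacency list of $G[\mu]$ for every $uv$ routed to $\mu$ produces multi-edges, since many edges of~$G$ can map to the same pair of children. The lemma promises that each node is annotated with \emph{its quotient graph}, which is a simple graph, and later uses of the quotient graphs (topological sort of complete quotients, comparing against the two orientations of prime quotients) implicitly assume no parallel edges. The paper handles this by radix-sorting each incidence list and merging parallel edges while keeping cross-pointers between quotient edges and the original edges they represent; you should add the same deduplication step, and make sure table~(iii) points to the merged quotient-edge object rather than to an arbitrary duplicate. This does not affect the $O(n+m)$ bound.
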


\begin{proof}
  We focus on constructing the quotient graphs.  The
  queries can be answered by suitably storing pointers during the
  construction.
  For every node~$\mu$ in~$T$ we initiate the quotient graph with one vertex for
  each child of~$\mu$ and equip the children of~$\mu$ and their corresponding
  vertices with pointers so that queries 1) and 2) can be answered in~$O(1)$
  time.

  Next, we compute the edges of the quotient graphs. 
  The difficulty here is to find for each edge~$uv$ in a quotient graph~$G[\mu]$ the children~$\lambda_u$,~$\lambda_v$ of~$\mu$ with~$u\in
  L(\lambda_u)$ and~$v\in L(\lambda_v)$.
  For each node~$\mu$ of~$T$ we compute a list~$L_\mu$ that contains all
  edges~$uv$ of~$G$ with~$\lca_T(u,v)=\mu$. 
  Namely, we use the lowest-common-ancestor data structure for static trees of
  Harel and Tarjan~\cite{harel1984fast}, to compute~$\lca_T(u,v)$ for each
  edge~$uv$ and add~$uv$ to~$L_{\lca(u,v)}$. 
  Afterwards, we perform a bottom-up traversal of the inner nodes
  of~$T$ that maintains for each leaf~$v$ of~$T$ the \emph{processed
    root}~$r(v) = \mu$, where~$\mu$ is the highest already-processed
  node of~$T$ with~$v \in L(\mu)$.

  Initially, we set~$r(v)=v$ for all leaves of~$T$, and we mark all
  leaves as processed.  When processing a node~$\mu$, we determine the
  edges of~$G[\mu]$ as follows.  We traverse the list~$L_\mu$ and for
  each edge~$uv \in L_\mu$, we determine the children~$r(u)$
  and~$r(v)$ of~$\mu$.  From this, we determine the corresponding
  vertices of~$G[\mu]$ and add an edge with a pointer to~$uv$
  between them. This may create multi-edges. We find those by
  sorting the incidence list of each vertex by the number of the other incident
  node in linear time using radix sort~\cite{cormen2009introduction} and an
  arbitrary enumeration of~$V(G[\mu])$. We
  then replace all parallel edges between two vertices with a single edge and
  annotate it with all pointers of the merged edges.
  For each pointer from an edge~$e$ in~$G[\mu]$ to a represented
  edge~$uv$, we then annotate~$uv$ with a pointer to~$e$. 

  Afterwards, we update~$r(v)$ for all~$v \in L(\mu)$ to~$\mu$ and mark~$\mu$
  as processed.  To maintain the processed roots of all leaves, we employ a
  union-find data structure, which initially contains one singleton set for
  each leaf and when a node~$\mu$ has been processed, we equip it with a
  maximal~$\mu$-set containing an arbitrarily chosen vertex from every set
  associated with a child of~$\mu$ and afterwards unite the sets associated
  with its children.  Since the union-find tree is known in advance (it
  corresponds to~$T$), the union-find operations can be performed in
  amortized~$\mathcal O(1)$ time~\cite{gabow1985linear}.
  
  We observe that the total size of all lists~$L_\mu$ is~$O(m)$ and
  moreover~$T$ has at most~$2n-1$ nodes.  Therefore the whole
  preprocessing runs in linear time.
\end{proof}

Let~$\mu$ be a node in a modular decomposition~$T$ of~$G$ and let~$\mu_1\mu_2$
be an edge in~$G[\mu]$.
Let~$\vec{T}[G]$ denote an assignment of directions to all
edges in~$G[\mu]$ for every node~$\mu$ in~$T$. Such a~$\vec{T}[G]$ is
\emph{transitive} if it is transitive on every~$G[\mu]$.
We obtain an orientation of~$G$ from an orientation~$\vec{T}[G]$ of the  quotient graphs of~$T$ as follows. 
Every undirected edge~$uv$ in~$E$ with \rep{uv}~$= \mu_1\mu_2 \in \vec{T}[G]$,
is directed from~$u$ to~$v$. We say that~$T$ \emph{represents} an
orientation~$O$ of~$G$, if there exists an orientation~$\vec{T}[G]$ of the  quotient graphs of $T$ that gives us~$O$.
We denote the set of all transitive orientations of~$G$ represented by~$T$
by~$\TO(T)$.
We get an orientation of the quotient graphs of~$T$ from an orientation of~$G$, if for each
oriented edge~$\mu_1\mu_2$, all edges represented by~$\mu_1\mu_2$ are oriented
from~$L(\mu_1)$ to~$L(\mu_2)$.
Let~$T$ now be the canonical modular decomposition of~$G$.
Then~$T$ represents exactly the transitive orientations of~$G$~\cite{gallai1967transitiv}.
It follows that~$G$ is a comparability graph if and only if~$T$ can be oriented
transitively.

If $G$ is a comparability graph, every prime quotient graph~$G[\mu]=(V_\mu,E_\mu)$ has exactly two transitive
orientations, one the reverse of the 
other~\cite{golumbic2004algorithmic}, and with the algorithm by McConnell
and Spinrad~\cite{mcconnell1999modular} we can compute one of them
in~$O(|V_\mu|+|E_\mu|)$ time.
Hence the time to compute the canonical modular decomposition in which every
prime node is labeled with a corresponding transitive orientation
is~$O(|V|+|E|)$.

\section{Transitive Orientation Extension}
\label{sec:ParComp}

The \textit{partial orientation extension problem} for comparability graphs
\orext is to decide for a comparability graph~$G$ with a partial
orientation~$W$, i.e.~an orientation of some of its edges, whether there exists
a transitive orientation~$O$ of~$G$ with~$W \subseteq O$.  The notion of
partial orientations and extensions extends to modular decompositions. We get a partial orientation of the quotient graphs of~$T$ from~$W$ such that exactly the edges that
represent at least one edge in~$W$ are oriented and all edges in~$W$ that are
represented by the same oriented edge~$\mu_1\mu_2$, are directed
from~$L(\mu_1)$ to~$L(\mu_2)$.

\begin{lemma}
\label{lemma: 1}
 Let~$T$ be the canonical modular decomposition of a comparability graph~$G$
 and let~$W$ be a partial orientation of~$G$ that gives us a partial
 orientation~$P$ of the quotient graphs of~$T$.
 Then~$W$ extends to a transitive orientation of~$G$ if and only if~$P$
 extends to a transitive orientation of the quotient graphs of~$T$.
\end{lemma}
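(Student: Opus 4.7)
The plan is to leverage the correspondence established earlier in the section: for the canonical modular decomposition~$T$, the transitive orientations of~$G$ are exactly those represented by some transitive orientation~$\vec{T}[G]$ of the quotient graphs of~$T$, and this correspondence goes both ways (an orientation of~$G$ induces an orientation of the quotient graphs by collapsing all edges represented by a given quotient edge, and conversely). The claim then reduces to verifying that this correspondence sends~$W$-extensions to~$P$-extensions and vice versa. Since~$P$ is assumed to be well-defined, any two edges of~$W$ sharing a representative are oriented compatibly, so no additional consistency check is needed.

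For the forward direction, suppose~$O$ is a transitive orientation of~$G$ with~$W\subseteq O$, and let~$\vec{T}[G]$ be the orientation of the quotient graphs obtained from~$O$; by Gallai's theorem it is transitive. To show~$P\subseteq \vec{T}[G]$, pick any oriented edge~$\mu_1\mu_2\in P$, say directed from~$\mu_1$ to~$\mu_2$. By definition of~$P$, there exists some~$uv\in W$ with~$\rep{uv}=\mu_1\mu_2$ and~$u\in L(\mu_1)$,~$v\in L(\mu_2)$, oriented from~$u$ to~$v$. Since~$W\subseteq O$, the edge~$uv$ is oriented from~$u$ to~$v$ in~$O$, so by construction~$\vec{T}[G]$ orients~$\mu_1\mu_2$ from~$\mu_1$ to~$\mu_2$, matching~$P$.

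For the backward direction, suppose~$\vec{T}[G]$ is a transitive orientation of the quotient graphs with~$P\subseteq \vec{T}[G]$, and let~$O$ be the orientation of~$G$ induced by~$\vec{T}[G]$. By Gallai~$O$ is transitive. To see~$W\subseteq O$, take any~$uv\in W$ oriented from~$u$ to~$v$, and let~$\rep{uv}=\mu_1\mu_2$ with~$u\in L(\mu_1)$ and~$v\in L(\mu_2)$. By the definition of~$P$, the edge~$\mu_1\mu_2$ lies in~$P$ oriented from~$\mu_1$ to~$\mu_2$; since~$\vec{T}[G]$ extends~$P$, it orients~$\mu_1\mu_2$ the same way, so the induced orientation~$O$ sends~$uv$ from~$u$ to~$v$, agreeing with~$W$.

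The argument is essentially a bookkeeping exercise threading the definitions through Gallai's correspondence, so I do not expect a genuine obstacle; the only care needed is to keep track of which endpoint of a represented edge lies in which child-block of the common parent.
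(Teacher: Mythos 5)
Your proof is correct and follows essentially the same argument as the paper: both directions unwind the definitions of the induced orientations, using that an oriented edge of $P$ (resp.\ $W$) corresponds under $\rep{\cdot}$ to an oriented edge of $W$ (resp.\ $P$), and invoking Gallai's correspondence for transitivity.
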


\begin{proof}
  Let~$O$ be a transitive orientation of~$G$ that extends~$W$.
  Let~$\mu_1\mu_2$ be an oriented edge in~$P$.
  Then~$\mu_1\mu_2$ represents an oriented edge~$uv$ in~$W$.
  Then~$uv$ is an oriented edge in~$O$ and~$\mu_1\mu_2$ is in the transitive
  orientation~$\vec{T}[G]$ of the quotient graphs of~$T$ we get from~$O$. Hence,~$\vec{T}[G]$ extends~$P$.
  
  Conversely, let~$\vec{T}[G]$ be a transitive orientation of the quotient graphs of~$T$ that extends~$P$.
  Let~$uv$ be an oriented edge in~$W$.
  Then~$uv$ is represented by an oriented edge~$\mu_1\mu_2$ in~$P$.
  Then~$\mu_1\mu_2$ is an oriented edge in~$\vec{T}[G]$ and~$uv$ is in the
  transitive orientation~$O$ of~$G$ we get from~$\vec{T}[G]$. Hence,$O$ extends~$W$.
\end{proof}

  To solve \orext efficiently we confirm that the partial orientation actually gives us a partial orientation~$P$ of the quotient graphs of
  the canonical modular decomposition~$T$. Otherwise we can reject.
  By Lemma~\ref{lemma: 1} we now just need to check for each node~$\mu$ of $T$
  whether~$P$ can be extended to $G[\mu]$.
  To this end, we use that $\mu$ is empty, complete or prime.
  Since transitive orientations of cliques are total orders and prime graphs
  have at most two transitive orientations, the existence of an extension can
  easily be decided in each~case.

\begin{theorem}
  \orext can be solved in linear time. 
\end{theorem}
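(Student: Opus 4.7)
The plan is to use Lemma~\ref{lemma: 1} to reduce \orext to testing, for each quotient graph of the canonical modular decomposition separately, whether the induced partial orientation extends to a transitive orientation of that quotient graph.

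I would first compute the canonical modular decomposition $T$ of $G$ in linear time via McConnell and Spinrad's algorithm~\cite{mcconnell1999modular}, together with all quotient graphs annotated as in Lemma~\ref{lemma:qoutientGraphs}, and precompute for every prime node $\mu$ one transitive orientation $O_\mu$ of $G[\mu]$ (the other being its reverse $\bar O_\mu$). All of this is linear overall. Next, I would translate $W$ into the partial orientation $P$ of the quotient graphs by a single pass over $W$: using the representative pointers of Lemma~\ref{lemma:qoutientGraphs}, each oriented edge $uv \in W$ maps in $O(1)$ time to $\rep{uv} = \mu_1\mu_2$ together with the direction it forces on $\mu_1\mu_2$. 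If two edges of $W$ with the same representative disagree on the forced direction, the instance is rejected. This pass also runs in $O(|V|+|E|)$ time.

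By Lemma~\ref{lemma: 1} it now suffices to decide, for each node $\mu$ of $T$, whether $P$ restricted to $G[\mu]$ extends to a transitive orientation of $G[\mu]$; the algorithm accepts iff this holds at every node. The point is that in the canonical modular decomposition every inner node is empty, complete, or prime, so each local test is cheap. An empty node has nothing to check. For a complete node, $G[\mu]$ is a clique and its transitive orientations are precisely the linear orders of $V(G[\mu])$, so extendability reduces to testing whether the directed graph defined by $P$ on $V(G[\mu])$ is acyclic, which a topological sort decides in time linear in the size of $G[\mu]$. For a prime node, $G[\mu]$ has only the two transitive orientations $O_\mu$ and $\bar O_\mu$, so a single pass over the oriented edges of $P$ in $G[\mu]$ suffices to check agreement with $O_\mu$ or with $\bar O_\mu$.

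Summing over all nodes of $T$, both the vertex counts and the edge counts of the quotient graphs sum to $O(|V|+|E|)$, so the total running time is linear. The only non-trivial ingredient is the translation step, where overloaded quotient edges must receive a consistent direction from $W$; once Lemma~\ref{lemma:qoutientGraphs} provides the representative pointers, this is routine.
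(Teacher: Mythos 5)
Your proposal is correct and follows essentially the same approach as the paper: reduce via Lemma~\ref{lemma: 1} to the quotient graphs of the canonical modular decomposition, translate $W$ into $P$ with the representative pointers of Lemma~\ref{lemma:qoutientGraphs} (rejecting on conflicts), and handle empty, complete, and prime nodes by nothing, topological sort, and comparison against the two transitive orientations, respectively.
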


\begin{proof}  Let~$W$ be the given partial orientation of a comparability graph~$G=(V,E)$. 
After the linear-time
preprocessing of Lemma~\ref{lemma:qoutientGraphs}, we can compute the partial
orientation~$P$ of the quotient graphs of~$T$ we get from~$W$ in linear time by determining~$\rep{uv}$
for every edge~$uv \in W$. If~$P$ does not exist, then there is an 
edge~$\mu_1\mu_2$ in a quotient graph that represents an edge~$e_1\in W$
oriented from~$L(\mu_1)$ to~$L(\mu_2)$ and an edge~$e_2\in W$ oriented
from~$L(\mu_2)$ to~$L(\mu_1)$.  Then~$W$ can not be extended to a transitive
orientation of~$G$ since in any orientation represented by~$T$ the
edges~$e_1$,\,$e_2$ are both oriented in the same direction between~$L(\mu_1)$
and~$L(\mu_2)$. Hence, we can reject in this case.

Otherwise, to solve \orext for~$G$, it suffices to solve \orext
for every quotient graph in the canonical modular decomposition~$T$ of~$G$ with the
partial orientation from~$P$ by Lemma~\ref{lemma: 1}.
Let~$\mu$ be a node in~$T$. We distinguish cases based on the type of~$\mu$.
If~$\mu$ is empty, nothing needs to be done.
If~$\mu$ is complete, the problem of extending the partial orientation
of~$G[\mu]$ is equivalent to the problem of finding a total order of the nodes
of~$G[\mu]$ that respects~$P$. This can be done via topological sorting in linear
time.
If~$\mu$ is prime,~$G[\mu]$ has exactly two transitive orientations, where one
is the reverse of the other.
Therefore we check in linear time whether one of these orientations of~$G[\mu]$
 is an extension of the partial orientation of~$G[\mu]$. Otherwise, no
 transitive extension exists. 

Since we can compute~$T$ in~$O(|V|+|E|)$ time, in total we can decide whether
the partial orientation~$W$ is extendible in the same time.
we get a corresponding transitive orientation of~$G$ by the extension
of~$P$ and can also be computed in the same time.
\end{proof}

\section{Sunflower Orientations}
\label{sec: simComp}

The idea to solve \simor is to obtain for each
input graph~$G_i$ a restricted modular decomposition of the shared
graph~$H$ that represents exactly those transitive orientations of~$H$ that can
be extended to~$G_i$. The restricted modular decompositions can be expressed by
constraints for the canonical modular decomposition of~$H$. These constraints are then
combined to represent all transitive orientations of~$H$ that can be extended
to each input graph~$G_i$. With this the solution is straightforward.

Let~$H$ be a comparability graph. Then we define a 
\emph{restricted modular decomposition}~$(T,D)$ of~$H$ to be a tuple where~$T$
is a modular decomposition of~$H$ where every node is labeled as complete,
empty or prime, such that for every node~$\mu$ labeled as complete or
empty,~$H[\mu]$ is complete or empty, respectively, and~$D$ is a function that
assigns to each prime labeled node~$\mu$ a transitive orientation~$D_\mu$,
called \emph{default orientation}.
In the following, when referring to the type of a node~$\mu$ in a restricted
modular decomposition, we mean the type that~$\mu$ is labeled with.
A transitive orientation of~$(T,D)$, is a transitive orientation of the quotient graphs of~$T$ where
every prime node~$\mu$ has orientation~$D_\mu$ or its reversal~$D_\mu^{-1}$.
Let~$\TO(T, D)$ denote the set of transitive orientations of~$H$ we get from
transitive orientations of~$(T,D)$. We say~$(T,D)$ \emph{represents} these
transitive orientations.  Note that for the canonical modular decomposition~$B$ of~$H$ we
have~$\TO(T, D) \subseteq \TO(B)$.  

Let~$G$ be a comparability graph with an induced subgraph~$H$.  A modular
decomposition~$T$ of~$G$ gives us a restricted modular decomposition~$(T|_H,D)$
of~$H$ as follows; see Figure~\ref{fig:reducedMD}.
\begin{figure}[!t]
  \centering
  {\includegraphics{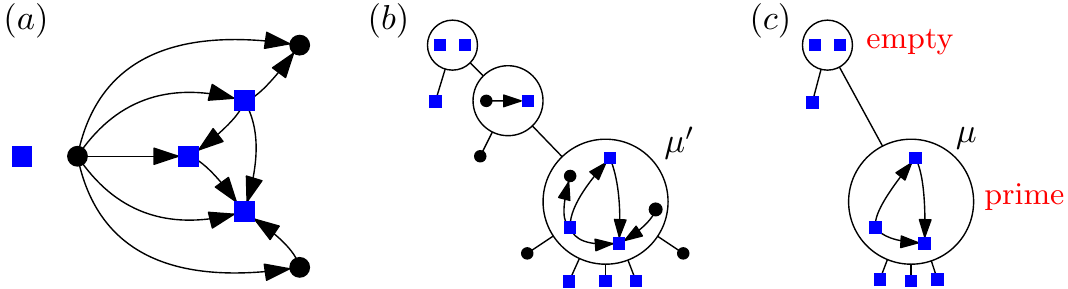}}
  \caption{(a) A graph~$G$ with an induced subgraph~$H$ (blue square vertices).
    (b) A modular decomposition~$T$ of~$G$ with a transitive orientation. (c)
    The restricted modular decomposition~$(T|_H, D)$ of~$H$ derived from the
  transitive orientation in (b). Note that $H[\mu]$ is a clique but $\mu$ is
  labeled prime.}
  \label{fig:reducedMD}
\end{figure}
We obtain~$T|_H$ from~$T$ by (i) removing all leaves that do not
correspond to a vertex of~$H$ and then (ii) iteratively contracting all inner
nodes of degree at most~2.
With a bottom-up traversal we can compute~$T|_H$ in time linear in the size
of~$T$.
A node~$\mu$ in~$T|_H$ \emph{stems from}~$\lca_T(L(\mu))$.
Every node~$\mu \in T|_H$ that stems from a prime node~$\mu' \in T$ we label as
prime and set~$D_\mu$ to a transitive orientation of~$G[\mu']$ restricted
to the edges of~$H$. 
The remaining nodes are labeled according to the type of their quotient graph.
Note that~$H[\mu]$ is isomorphic to an induced subgraph of~$G[\mu']$.

\begin{lemma}
\label{lemma: mdS}
  Let~$T$ be a modular decomposition of a graph~$G$ with an induced subgraph~$H$.
  Then~$\TO(T|_H, D)$ is the set of orientations of~$H$ extendable to
  transitive orientations of~$G$. 
\end{lemma}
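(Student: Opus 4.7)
The plan is to prove set equality by establishing both inclusions separately. The key preliminary observation concerns the structural relation between $T|_H$ and $T$: each node $\mu \in T|_H$ stems from a unique $\mu' = \lca_T(L(\mu)) \in T$, and after the contractions in step (ii) of the construction of $T|_H$, the children of $\mu$ in $T|_H$ biject with those children $\nu'$ of $\mu'$ in $T$ for which $L(\nu') \cap V(H) \neq \emptyset$: each such $\nu'$ either survives as a child of $\mu$, or gets contracted to its unique surviving $H$-descendant. Because $H$ is induced in $G$ and the leaf sets of children of $\mu'$ are modules of $G$ with uniform adjacency to outside vertices, this bijection identifies $H[\mu]$ with the subgraph of $G[\mu']$ induced by the corresponding vertex set. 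When $\mu'$ is prime, its two transitive orientations $P$ and $P^{-1}$ therefore restrict on $E(H[\mu])$ to exactly the two permitted orientations $D_\mu$ and $D_\mu^{-1}$.

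For the inclusion $\TO(T|_H, D) \supseteq \{O|_H : O \text{ transitive orientation of } G\}$, I would take a transitive orientation $O$ of $G$ and observe that it induces a transitive orientation on each quotient graph $G[\mu']$, whose restriction to the embedded copy of $H[\mu]$ is a transitive orientation of $H[\mu]$. If $\mu$ is labeled empty or complete, any transitive orientation of $H[\mu]$ is admissible for $(T|_H, D)$. If $\mu$ is labeled prime, the orientation of $G[\mu']$ must be $P$ or $P^{-1}$, so its restriction is $D_\mu$ or $D_\mu^{-1}$, again admissible. Hence $O|_H \in \TO(T|_H, D)$.

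For the reverse inclusion, I would start from $O' \in \TO(T|_H, D)$ and construct a transitive orientation $O$ of $G$ with $O|_H = O'$ by specifying a transitive orientation of every $G[\mu']$. If $\mu'$ has an image $\mu \in T|_H$, I match the orientation on $H[\mu]$ dictated by $O'$ as follows: for prime $\mu$, exactly one of $P, P^{-1}$ restricts to $O'|_{H[\mu]}$ and I pick it; for complete $\mu$, $O'$ gives a linear order on the $H$-relevant children of $\mu'$, which I extend by topological sort to a linear order on all children of $\mu'$; for empty $\mu$, there is nothing to orient. If $\mu'$ has no image in $T|_H$, I choose any transitive orientation of $G[\mu']$. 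The combined orientation of all quotient graphs is then a transitive orientation of $G$ by the representation property of modular decompositions, and by construction it equals $O'$ on every $E(H[\mu])$, hence on all of $E(H)$.

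The main obstacle I foresee is the bookkeeping around the contraction step in the definition of $T|_H$: one must verify carefully that $H[\mu]$ really equals the induced subgraph of $G[\mu']$ on the set of vertices corresponding to children of $\mu'$ that survive in $T|_H$, so that the restrictions $P|_{E(H[\mu])}$ and $P^{-1}|_{E(H[\mu])}$ coincide with $D_\mu$ and $D_\mu^{-1}$ as required. Once this identification is justified, both inclusions reduce to straightforward case distinctions on the type of each node of $T|_H$.
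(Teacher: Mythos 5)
Your proof is correct and follows essentially the same route as the paper's: both directions proceed node-by-node through the quotient graphs, with a case split on whether the node $\mu'=\lca_T(L(\mu))$ from which $\mu$ stems is empty, complete, or prime, and both rely on the identification of $H[\mu]$ with an induced subgraph of $G[\mu']$. Your explicit bijection between the children of $\mu$ in $T|_H$ and the $H$-relevant children of $\mu'$ in $T$ is precisely the justification for the step the paper condenses to ``Note that $H[\mu]$ is isomorphic to an induced subgraph of $G[\mu']$,'' and your constructive description of the reverse inclusion (matching the prime default, extending the order by topological sort on complete nodes, arbitrary elsewhere) makes explicit what the paper summarizes as ``$\vec{T}_H[H]$ can be mapped to $T$ and be extended.''
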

 
\begin{proof}
 Let~$O_H$ be a transitive orientation of~$H$
 that can be extended to a transitive orientation~$O_G$ of~$G$.
 Then~$O_G$ gives us a transitive orientation~$\vec{T}[G]$ of the quotient graphs of~$T$ that in turn gives us a transitive
 orientation~$\vec T|_H$ of~$(T|_H,D)$.
 Since~$O_G$ is an extension of~$O_H$ and~$\vec{T}[G]$ contains~$\vec{T}_H[H]$,~$O_H$ is the transitive orientation of~$H$ we get from~$\vec{T}_H[H]$. Hence~$(T|_H,D)$ represents~$O_H$.
  
  Conversely, let~$\vec{T}_H[H]$ be a transitive orientation of~$(T_H,D)$. 
  For any node~$\mu$ in~$T|_H$ let~$\mu' = \mathrm{lca}_T(L(\mu))$.
  Recall that~$H[\mu]$ is isomorphic to an induced 
  subgraph of~$G[\mu']$.
  We already know that~$G[\mu']$ is either empty, complete or prime.
  If~$G[\mu']$ is empty,~$H[\mu]$ and the corresponding transitive orientation
  are also empty. 
  If~$G[\mu']$ is complete, then~$H[\mu]$ is also complete and any transitive
  orientation of~$H[\mu]$ can be extended to a transitive orientation
  of~$G[\mu']$.
  If~$G[\mu']$ is prime, by construction~$H[\mu]$ is also labeled as prime with
  a default orientation~$D_{\mu}$ given by a transitive orientation~$D_{\mu'}$
  of~$G[\mu']$.
Hence~$\vec{T}_H[H]$ either contains~$D_{\mu'} \cap E(H[\mu]) = D_{\mu}$
or~$D_{\mu'}^{-1} \cap E(H[\mu]) = D_{\mu}^{-1}$.
  Thus~$\vec{T}_H[H]$ can be mapped to~$T$ and be extended to a transitive orientation~$\vec{T}[G]$ of the quotient graphs of~$T$.
  Then~$\vec{T}[G]$ gives us a transitive orientation~$O_G$ of~$G$ since~$T$ represents exactly the transitive orientations of~$G$.
  Let~$O_H$ be~$O_G$ restricted to~$H$. Then by construction~$O_H$ equals the
  orientation of~$H$ we get from~$\vec{T}_H[H]$. Thus~$\vec{T}_H[H]$ gives us a
  transitive orientation of~$H$.
\end{proof}

Consider the canonical modular decompositions $T^1,\dots,T^r$ of the input graphs $G_1,\dots,G_r$,
and let $(T^1|_H,D_1),\dots, (T^r|_H, D_r)$ be the corresponding restricted modular decompositions.
Then we are interested in the intersection $\TO(G_1,\dots,G_r)=\bigcap_{i=1}^r\TO(T^i|_H,D_i)$
since it contains all transitive orientations of $H$ that can be extended to
all input graphs. However, the trees $T^1|_H,\dots,T^r|_H$ have different
shapes, which makes it difficult to compute a representation of
$\TO(G_1,\dots,G_r)$ directly. Instead, we describe $\TO(T^1|_H,D_1),\dots, \TO(T^r|_H, D_r)$ (whose
intersection is simple to compute)
with constraints on the canonical modular decomposition~$B$ of~$H$.

Let~$(T,D)$ be a restricted modular decomposition of~$H$ and let~$B$ be the
canonical modular decomposition of~$H$.
We collect constraints on the orientations of individual nodes $\mu$ of~$B$
that are imposed by $\TO(T,D)$. Afterwards we show that the established
constraints are sufficient to describe $\TO(T,D)$.
If~$\mu$ is empty, then~$H[\mu]$ is empty and has a unique transitive
orientation, which requires no constraints. The other types of~$\mu$ are
discussed in the following two sections.

\subsection{Constraints for Prime Nodes}

In this section we observe that prime nodes in~$B$ correspond to prime nodes
in~$(T,D)$ and that the dependencies between their orientations can be
described with 2-SAT formulas.
Recall that the transitive orientation of a prime comparability graph is unique up to reversal.
We consider each prime node~$\mu\in B$ equipped
with a transitive orientation~$D_\mu$, also called a \emph{default orientation}.
All default orientations for~$B$ can be computed in linear time~\cite{mcconnell1999modular}.

\begin{lemma}
  \label{lemma: prim}
  \label{lemma: prime2}
  Let~$\mu\in B$ be prime. Then~$\mu'=\lca_T(L(\mu))$ is prime, 
  every~$\mu$-set is a~$\mu'$-set, and for any edge~$uv$ with~$\repB{uv} \in
  H[\mu]$ we have~$\repT{uv}\in H[\mu']$.
\end{lemma}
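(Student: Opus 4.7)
The plan is to reduce all three claims to a single structural fact: for every child~$\lambda'$ of~$\mu'$ in~$T$, the intersection~$M:=L(\lambda')\cap L(\mu)$ is contained in~$L(\lambda)$ for some single child~$\lambda$ of~$\mu$ in~$B$. Parts~2 and~3 of the lemma are then one-line consequences, and part~1 follows from a short adjacency argument.

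To prove the structural fact, I would first observe that~$L(\mu)\subseteq L(\mu')$ by the definition of~$\mu'=\lca_T(L(\mu))$, and that~$M$ is a module of~$H[L(\mu)]$ because~$L(\lambda')$ is a module of~$H$; the minimality of~$\mu'$ in~$T$ excludes the case~$M=L(\mu)$. Writing~$\lambda_1,\dots,\lambda_k$ for the children of~$\mu$ in~$B$, let~$\bar M=\{\lambda_i:L(\lambda_i)\cap M\neq\emptyset\}\subseteq V(H[\mu])$. A short argument, analysing whether a vertex~$w\in L(\lambda_l)$ with~$\lambda_l\notin\bar M$ sees all or none of~$M$ and using that each~$L(\lambda_j)$ is itself a module of~$H$, shows that~$\bar M$ is a module of~$H[\mu]$. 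Since~$\mu$ is prime in~$B$, $\bar M$ is either a singleton, in which case~$M\subseteq L(\lambda_i)$ as desired, or all of~$V(H[\mu])$. I would rule out the second case as follows: since~$M\subsetneq L(\mu)$ there is~$y\in L(\mu)\setminus M$, say~$y\in L(\lambda_i)$. For~$M$ to remain a module,~$y$ must see all of~$M$ or none of~$M$; but for each~$j\ne i$, the vertex~$y$ is adjacent to every vertex of~$M\cap L(\lambda_j)$ precisely when~$\lambda_i\sim\lambda_j$ in~$H[\mu]$. Uniformity across all~$j\ne i$ therefore forces~$\lambda_i$ to be universal or isolated in~$H[\mu]$, contradicting the primality of~$H[\mu]$ on at least four vertices.

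Parts~2 and~3 then follow directly. A~$\mu$-set~$U\subseteq L(\mu)$ has at most one vertex in each child of~$\mu$ in~$B$; combined with~$M\subseteq L(\lambda_i)$ this gives~$|U\cap L(\lambda')|=|U\cap M|\le|U\cap L(\lambda_i)|\le 1$, so~$U$ is a~$\mu'$-set. For an edge~$uv$ with~$\repB{uv}\in H[\mu]$, the endpoints lie in distinct children of~$\mu$ in~$B$ and therefore, by the structural fact, in distinct children of~$\mu'$ in~$T$; so~$\lca_T(u,v)=\mu'$ and~$\repT{uv}\in H[\mu']$.

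For part~1 I would rule out the labels empty and complete for~$\mu'$. Since~$H[\mu]$ is prime, it has both an edge and a non-edge; picking endpoint representatives in~$L(\mu)$ for each and applying part~3 places these endpoints in distinct children of~$\mu'$ in~$T$. Because those children are modules of~$H$, adjacency in~$H[\mu']$ is determined at the quotient level, so the edge and the non-edge in~$H$ descend to an edge and a non-edge in~$H[\mu']$. Consequently~$H[\mu']$ is neither complete nor empty, which by the definition of a restricted modular decomposition forbids labeling~$\mu'$ as complete or empty and forces the prime label. The main obstacle is the structural fact itself, specifically ruling out~$\bar M=V(H[\mu])$, which relies on the absence of universal and isolated vertices in any prime graph on at least four vertices.
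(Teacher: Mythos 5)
Your proposal is correct and follows essentially the same strategy as the paper's proof: both reduce the claims to the fact that a nontrivial proper module of~$H[L(\mu)]$ must lie inside a single child of the prime node~$\mu$, and both deduce parts~2 and~3 from it before using an induced edge/non-edge (resp.\ an induced prime subgraph) to force~$\mu'$ to be labeled prime. The only difference is that you re-derive the ``module confined to a single child'' fact via the quotient-module argument (your set~$\bar M$ and the universal/isolated contradiction), whereas the paper invokes it directly as a standard property of prime nodes in the canonical modular decomposition.
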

\begin{proof}
  We first show that every~$\mu$-set is also a~$\mu'$-set.
  Assume there is a~$\mu$-set~$U$ that is not a~$\mu'$-set.
  Since~$U\subseteq L(\mu)\subseteq L(\mu')$, there exist two 
  vertices~$u\ne v \in U$ with~$\lambda =\text{rep}_{\mu'}(u) =
  \text{rep}_{\mu'}(v)$.
  From~$L(\lambda)$ being a module of~$H[L(\mu')]$ and~$L(\mu) \subseteq
  L(\mu')$ it follows that~$X=L(\lambda) \cap L(\mu)$ is a module
  of~$H[L(\mu)]$. By the definition of~$\mu'$, we have
 ~$X\subsetneq L(\mu)$. %
  Since~$\mu$ is prime, there is a child~$\nu$ of~$\mu$ with~$u,v\in
  X\subseteq L(\nu)$ contradicting~$U$ being a~$\mu$-set. 
  
  As a direct consequence of~$\mu$-sets being~$\mu'$-sets, we also have for any
  edge~$uv$ with~$\repB{uv} \in H[\mu]$ that~$\repT{uv}\in H[\mu']$ since~$\{u,v\}$ is a~$\mu$-set. 
  Now let~$U$ be a maximal~$\mu$-set.
  Then~$H[U]$ is isomorphic to~$H[\mu]$ and thus
  prime. Since~$U$ is also a~$\mu'$-set and the subgraph
  of~$H[\mu']$ induced by the vertices representing~$U$ is isomorphic
  to~$H[U]$,~$\mu'$ is neither empty, nor complete and thus prime.
\end{proof}

For a modular decomposition~$T'$ of~$H$ with a node~$\lambda$ and~$O\in \TO(T')$
let~$O_{\downarrow \lambda}$ denote the orientation of the quotient
graph~$H[\lambda]$ we get from~$O$.
Note that for a transitive orientation~$D_\lambda$ of~$H[\lambda]$ and
a~$\lambda$-set~$U$ each orientation~$O\in\TO(T')$
with~$O_{\downarrow\lambda}= D_\lambda$ gives us the same orientation on~$H[U]$.
We say that~$D_\lambda$ \emph{induces} this orientation on~$H[U]$.

Let~$\mu\in B$ be prime, let~$\mu'=\lca_T(L(\mu))$ and let~$U$ be a
maximal~$\mu$-set (and by Lemma~\ref{lemma: prim} a~$\mu'$-set).
We set~$D^{\delta}_{\mu'}=D_{\mu'}$ if~$D_\mu$,~$D_{\mu'}$ induce the same
transitive orientation on the prime graph~$H[U]$ and we set~$D^{\delta}_{\mu'}=D^{-1}_{\mu'}$
 if the induced orientations are the reversal of each other.
Note that~$D^{\delta}_{\mu'}$ does not depend on the choice of~$U$.
From the definition of~$D^{\delta}_{\mu'}$ and the observation
that~$O_{\downarrow \mu}, O_{\downarrow \mu'}$ are both determined by~$O$ restricted to~$H[U]$ we directly get the following lemma.

\begin{lemma}
  \label{lem:primChar}
  For~$O\in\TO(T,D)$ we have~$O_{\downarrow\mu} 
  =D_\mu \Leftrightarrow O_{\downarrow\mu'} = D^{\delta}_{\mu'}$ 
\end{lemma}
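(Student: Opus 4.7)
The plan is to argue that both $O_{\downarrow\mu}$ and $O_{\downarrow\mu'}$ must, through the edge-representation map, induce the same orientation on the prime induced subgraph $H[U]$, so they are forced to ``point the same way'' and the definition of $D^\delta_{\mu'}$ exactly records how that correspondence transfers between $H[\mu]$ and $H[\mu']$.

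First I would fix a maximal $\mu$-set $U$ (which exists and, by Lemma~\ref{lemma: prim}, is also a $\mu'$-set), and recall that $H[U]$ is isomorphic both to $H[\mu]$ and to the subgraph of $H[\mu']$ induced by the vertices representing $U$. Pick any edge $uv$ of $H$ with $u,v\in U$. By definition of $O_{\downarrow\mu}$ and $O_{\downarrow\mu'}$, the arc $\repB{uv}\in H[\mu]$ and the arc $\repT{uv}\in H[\mu']$ both point from the child containing $u$ to the child containing $v$ exactly when $uv$ points from $u$ to $v$ in $O$. Transporting along the canonical isomorphisms $H[U]\cong H[\mu]$ and $H[U]\cong$ (induced subgraph of) $H[\mu']$, this shows that $O_{\downarrow\mu}$ and $O_{\downarrow\mu'}$ induce the same orientation on $H[U]$; call it $o$.

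Next I would use primeness. Because $\mu,\mu'$ are prime, $H[\mu]$ and $H[\mu']$ have exactly two transitive orientations each, namely $D_\mu, D_\mu^{-1}$ and $D_{\mu'}, D_{\mu'}^{-1}$, so $O_{\downarrow\mu}\in\{D_\mu,D_\mu^{-1}\}$ and $O_{\downarrow\mu'}\in\{D_{\mu'},D_{\mu'}^{-1}\}$. Because $H[U]$ is itself prime, its two transitive orientations are reverses of each other; hence $D_\mu$ and $D_\mu^{-1}$ induce distinct orientations on $H[U]$, and likewise $D_{\mu'}$ and $D_{\mu'}^{-1}$ induce distinct orientations on $H[U]$. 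By the very definition of $D^\delta_{\mu'}$, the two orientations $D_\mu$ and $D^\delta_{\mu'}$ induce the same orientation on $H[U]$, while $D_\mu$ and $(D^\delta_{\mu'})^{-1}$ induce opposite orientations.

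Finally I would combine these observations. The equivalence chain $O_{\downarrow\mu}=D_\mu\Leftrightarrow o=$ the orientation that $D_\mu$ induces on $H[U]\Leftrightarrow o=$ the orientation that $D^\delta_{\mu'}$ induces on $H[U]\Leftrightarrow O_{\downarrow\mu'}=D^\delta_{\mu'}$ yields the lemma, where the first and last steps use that each transitive orientation of the prime graphs $H[\mu], H[\mu']$ is uniquely determined by its restriction to $H[U]$. The only subtle step is the very first one, where I must be careful to note that the common orientation $O$ of the edges of $H$ forces the induced orientations on $H[U]$ to agree; the rest is bookkeeping of which of the two orientations of a prime graph we are in.
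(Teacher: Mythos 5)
Your proof is correct and follows essentially the same route as the paper's, which simply states that the lemma follows ``directly'' from the definition of $D^\delta_{\mu'}$ and the observation that $O_{\downarrow\mu}$ and $O_{\downarrow\mu'}$ are both determined by $O$ restricted to $H[U]$. You have merely unpacked that terse remark: your steps (that both induced orientations on $H[U]$ agree, that primeness of $H[U]$ makes the two transitive orientations of each quotient graph induce distinct orientations on $H[U]$, and that the definition of $D^\delta_{\mu'}$ identifies which of the two cases holds) are exactly the reasoning the authors leave implicit.
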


We express the choice of a transitive orientation for a prime node~$\mu$ by a
Boolean variable~$x_{\mu}$ that is~$\mathtt{true}$ for the default orientation
and~$\mathtt{false}$ for the reversed orientation.

According to Lemma~\ref{lem:primChar} we set~$\psi_{\mu}$ to be $x_\mu \leftrightarrow x_{\mu'}$ 
if~$D_{\mu'}^{\delta}= D_{\mu'}$ and~$x_\mu \not\leftrightarrow x_{\mu'}$ if~$D_{\mu'}^{\delta}= D^{-1}_{\mu'}$.
Note that for a prime node~$\mu' \in T$ there may exist more than one
prime node~$\mu$ in~$B$ such that~$\mu' = \mathrm{lca}_T(L(\mu))$, and we may
hence have multiple prime nodes that are synchronized by these constraints.  We
describe these dependencies with the formula~$\psi = \bigwedge_{\mu \in
P(B)} \psi_\mu$.  With the above meaning of
variables, any choice of orientations for the prime nodes of~$B$ that can be
induced by~$T$ necessarily satisfies~$\psi$.
With Lemma~\ref{lemma:qoutientGraphs} we can compute~$\psi$
efficiently.

\begin{lemma}
\label{lemma: phiprime}
We can compute~$\psi$ in~$O(|V(H)|+|E(H)|)$ time.
\end{lemma}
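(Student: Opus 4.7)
The plan is to apply the preprocessing of Lemma \ref{lemma:qoutientGraphs} separately to both $T$ and $B$ and to additionally equip $T$ with the static lowest-common-ancestor data structure of Harel and Tarjan~\cite{harel1984fast}; all of this runs in $O(|V(H)|+|E(H)|)$ time. After this setup, the only remaining task is, for each prime node $\mu\in P(B)$, to identify the corresponding node $\mu' = \lca_T(L(\mu))$ in $T$ and to decide whether $D_\mu$ and $D_{\mu'}$ induce the same orientation on some maximal $\mu$-set, since by definition of $D^{\delta}_{\mu'}$ this fixes the clause $\psi_\mu$.

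For a fixed prime $\mu\in P(B)$, I would first obtain a maximal $\mu$-set $U$ in $O(|U|)$ time via Lemma \ref{lemma:qoutientGraphs}. Since $\mu$ is prime, $H[\mu]$ has at least three vertices and so $|U|\ge 3$; by Lemma \ref{lemma: prime2}, $U$ is also a $\mu'$-set, so any two distinct elements of $U$ lie in different children of $\mu'$. Hence $\mu' = \lca_T(u,v)$ for any choice of $u\ne v \in U$, which I obtain from a single $O(1)$ LCA query. Next, because $H[\mu]$ is prime it is connected and hence contains an edge $\alpha\beta$; following the pointer stored by Lemma \ref{lemma:qoutientGraphs} I retrieve a represented edge $ab\in E(H)$ with $a\in L(\alpha)$, $b\in L(\beta)$, and look up $\repT{ab} = \alpha'\beta'$ together with the correspondence between $\{\alpha',\beta'\}$ and $\{a,b\}$ in $O(1)$. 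Comparing the direction in which $D_\mu$ orients $\alpha\beta$ relative to $a,b$ with the direction in which $D_{\mu'}$ orients $\alpha'\beta'$ relative to $a,b$ decides whether $D^{\delta}_{\mu'} = D_{\mu'}$ or $D^{\delta}_{\mu'} = D_{\mu'}^{-1}$; I set $\psi_\mu$ to $x_\mu \leftrightarrow x_{\mu'}$ in the first case and $x_\mu \not\leftrightarrow x_{\mu'}$ in the second.

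The runtime is then the sum over all $\mu\in P(B)$ of $O(|U|)$ for extracting the $\mu$-set plus $O(1)$ for the LCA query, pointer lookups, and clause construction. Since $|U|$ equals the number of children of $\mu$ in $B$ and $B$ has $O(|V(H)|)$ nodes in total, the work after preprocessing sums to $O(|V(H)|)$, matching the claimed bound. The main obstacle I anticipate is avoiding an $\Omega(|L(\mu)|)$-cost scan when locating $\mu'$: this is exactly what Lemma \ref{lemma: prime2} rescues, by guaranteeing that the small maximal $\mu$-set $U$ already witnesses the LCA in $T$ and that the orientations induced on $H[U]$ determine $D^{\delta}_{\mu'}$, so a single constant-time comparison on one edge of $H[\mu]$ suffices.
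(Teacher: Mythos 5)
Your proof is correct and follows essentially the same strategy as the paper: preprocess via Lemma~\ref{lemma:qoutientGraphs}, and for each prime $\mu\in P(B)$ use Lemma~\ref{lemma: prime2} to locate $\mu'$ in $O(1)$ and compare $D_\mu$ with $D_{\mu'}$ on a single edge, with the $\mu$-set extraction summing to $O(|V(H)|)$ overall. The only cosmetic differences are that you find $\mu'$ via an explicit Harel--Tarjan LCA query instead of the $\repT{\cdot}$ pointer lookup of Lemma~\ref{lemma:qoutientGraphs}\,(3), and you compare orientations on an arbitrary represented edge $ab$ of the quotient edge $\alpha\beta$ rather than an edge with both endpoints in $U$ (which is valid because $\{a,b\}$ extends to a maximal $\mu$-set and $D^{\delta}_{\mu'}$ is independent of that choice).
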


\begin{proof}
 Let~$\mu$ be a prime node in~$B$ and let~$\mu'= \mathrm{lca}_T(L(\mu))$.
 By Lemma~\ref{lemma:qoutientGraphs} we can compute a
 maximal~$\mu$-set~$U$ in constant time after a linear-time preprocessing.
 By Lemma~\ref{lemma: prim} we have for every edge~$uv$
 with~$\repNode{B}{uv}\in H[\mu]$ that~$\repNode{T}{uv}\in H[\mu']$.
 Hence we can find~$\mu'$ in~$T$ by determining~$\repNode{T}{uv}$ for an
 arbitrary edge~$uv$ with ~$u, v \in U$, which by
 Lemma~\ref{lemma:qoutientGraphs} takes constant time. For an arbitrary
 oriented edge~$e \in O$ we check in constant time whether~$e \in D_\mu$ or~$e
 \in D_\mu^{-1}$ and add the clause~$x_\mu \leftrightarrow x_{\mu'}$ or~$x_\mu
 \not\leftrightarrow x_{\mu'}$, respectively. Doing this for every prime node in
 total takes time linear in the size of~$T$.
\end{proof}

\subsection{Constraints for Complete Nodes} 
Next we consider the case where~$\mu$ is complete.
The edges represented in~$H[\mu]$  
may be represented by edges in more than one quotient graph in~$T$, each of
which can be complete or prime. Depending on the type of the involved quotient
graphs in~$T$ we get new constraints for the orientation of~$H[\mu]$.

Note that choosing a transitive orientation of~$H[\mu]$ is
equivalent to choosing a linear order of the vertices of~$H[\mu]$.
As we will see, each node~$\nu$ of~$T$ that represents an edge
of~$H[\mu]$ imposes a consecutivity constraint on a subset of the
vertices of~$H[\mu]$.
Therefore, the possible orders can be represented by a \emph{PQ-tree} that
allows us to represent all permissible permutations of the elements of a
set~$U$ in which certain subsets~$S \subseteq U$ appear consecutively.

\begin{figure}[!t]
  \centering{
    \includegraphics{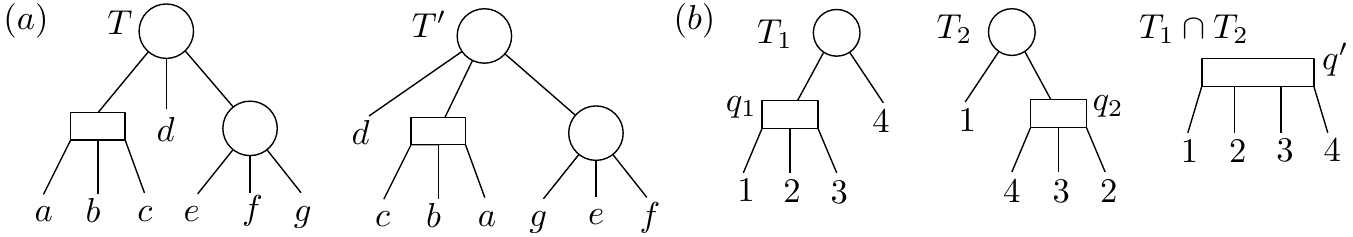}
  }
  \caption{(a)~Two equivalent PQ-trees~$T$,~$T'$ with~$fr(T) = abcdefg$
  and~$fr(T')=dcbagef$. (b)~The Q-node~$q'$ in~$T_1 \cap T_2$ contains~$q_1$
forwards and~$q_2$ backwards.}
  \label{fig: pq}
\end{figure}

PQ-trees were first introduced by Booth and Lueker~\cite{booth1975pq, booth1976testing}.
A PQ-tree~$T$ over a finite set~$U$ is a rooted tree whose leaves are the
elements of~$U$ and whose internal nodes are labeled as~$P$- or~$Q$-nodes. 
A~$P$-node is depicted as a circle, a~$Q$-node as a rectangle; see Figure~\ref{fig: pq}.
Two PQ-trees~$T$ and~$T'$ are \emph{equivalent}%
, if~$T$ can be transformed into~$T'$ by arbitrarily
permuting the children of arbitrarily many~$P$-nodes and reversing the order of
arbitrarily many~$Q$-nodes; see Figure~\ref{fig: pq}a. A transformation that transforms~$T$ into an 
equivalent tree~$T'$ is called an \emph{equivalence transformation}.
The \emph{frontier}~$\fr(T)$ of a PQ-tree~$T$ is the order of its leaves from left to right.
The tree~$T$ \emph{represents} the frontiers of all equivalent~$PQ$-trees.
The PQ-tree that does not have any nodes is called the \emph{null tree}.

  Let~$T_1$,~$T_2$ be two PQ-trees over a set~$U$.
  Their \emph{intersection}~$T=T_1\cap T_2$ is a PQ-tree that represents exactly
  the linear orders of~$U$ represented by both~$T_1$ and~$T_2$.
  It can be computed in~$O(|U|)$ time~\cite{booth1975pq}.
  For every Q-node~$q$ in~$T_1$ node~$q'=\lca_T(L(q))$ is also a Q-node.
  We say that~$q'$ \emph{contains}~$q$ \emph{forwards}, if~$T_1[q]$ can be
  transformed by an equivalence transformation that does not reverse~$q$ into a
  PQ-tree~$T'$ such that~$\fr(T[q])$ contains~$\fr(T[q'])$; see
  Figure~\ref{fig: pq}b. 
 Else~$q'$ \emph{contains}~$q$ \emph{backwards}.
 Similarly every Q-node in~$T_2$ is contained in exactly one Q-node in~$T$
 (either forwards or backwards).
  Haeupler et al.~\cite{haeupler2013testing} showed that one can modify Booth's
  algorithm such that given two PQ-trees~$T_1, T_2$ it not only outputs ~$T_1
  \cap T_2$ but also for every Q-node in~$T_1, T_2$ which Q-node in~$T$
  contains it and in which direction.
  
  \begin{lemma}
  \label{lemma:pqIntersection}
    Let~$T_1, \dots, T_k$ be PQ-trees over a set~$U$.
    Then we can compute their intersection~$T = \bigcap_{i = 1}^k T_i$  and
    determine for every Q-node~$q$ in~$T_1, \dots, T_k$
    the Q-node in~$T$ that contains~$q$ and in which direction in~$O(k \cdot |U|)$ time.
  \end{lemma}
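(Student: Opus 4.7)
My plan is to iterate the pairwise intersection algorithm of Haeupler et al.\ and then consolidate the per-step containment data with a single backward sweep. Put $S_1 = T_1$ and, for $i = 2, \dots, k$, set $S_i = S_{i-1} \cap T_i$ so that $T = S_k$. By the cited algorithm, each pairwise intersection runs in $O(|U|)$ time and additionally outputs, for every Q-node $q$ in $S_{i-1}$ and in $T_i$, the Q-node $\mathrm{next}(q)\in S_i$ that contains $q$ together with a direction bit $d(q)\in\{\text{fwd},\text{bwd}\}$. Since every intermediate PQ-tree has $O(|U|)$ nodes and we perform $k-1$ such intersections, this phase costs $O(k\cdot|U|)$ time and stores pointers of total size $O(k\cdot|U|)$.

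In the second phase I turn the ``one level up'' pointers into pointers that go all the way to $T$. I sweep $i$ from $k-1$ down to~$1$: for every Q-node $q$ of $S_i$, I look up $q' := \mathrm{next}(q)\in S_{i+1}$, fetch the image $q^\star\in T$ of $q'$ together with its already-computed cumulative direction $\delta$ (with base case $q^\star=q'$ and $\delta=\text{fwd}$ when $q' \in S_k$), and store the image of $q$ in $T$ as $q^\star$ with cumulative direction $d(q)\oplus\delta$, where $\oplus$ denotes XOR of the forwards/backwards bits. Since each $S_i$ has $O(|U|)$ Q-nodes, this sweep also runs in $O(k\cdot|U|)$ time.

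Finally, every Q-node $q$ of every input tree $T_j$ already has its image $\mathrm{next}(q)\in S_j$ with direction $d(q)$ recorded by the first phase; combining this in constant time with the image of $\mathrm{next}(q)$ in $T$ produced by the backward sweep gives the desired Q-node of $T$ containing $q$ and the direction of containment. The total number of input Q-nodes across $T_1,\dots,T_k$ is $O(k\cdot|U|)$, so this last step also costs $O(k\cdot|U|)$, matching the claimed bound.

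The main technical issue I expect is checking that the direction bits compose correctly along a chain of containments. This should follow because the forwards/backwards flag describes whether the contained Q-node's frontier appears within the container's frontier in the same or the reversed order, and order reversal is an involution; therefore two consecutive containment flags combine by XOR, which is exactly what the backward sweep accumulates. A brief induction on the length of the chain turns this observation into a formal correctness proof.
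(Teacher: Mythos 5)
Your proposal is correct and follows essentially the same approach as the paper: iterate pairwise intersections $S_i = S_{i-1}\cap T_i$ using the Haeupler et al.\ algorithm to obtain one-level containment pointers with direction flags, then perform a single backward pass to compose them into pointers to $T$; the paper phrases the composition as a product of $\pm1$ edge labels in a DAG while you phrase it as XOR of direction bits, but these are the same thing.
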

 
  \begin{proof}
    Let~$S_1 = T_1$ and for every~$j \in \{2, \dots, k\}$ let ~$S_j : = S_{j-1} \cap T_i$.
    To compute~$T = S_k$ we stepwise
    compute~$S_j = S_{j-1} \cap T_j$ for every~$j$. 
    During the computation of~$T$ we construct a DAG~$D$ whose vertices are the
    Q-nodes of~$T_1, \dots, T_k$ and~$S_2, \dots, S_k$. 
    Initially~$D$ contains no edges.
  	For every~$j$ we add a directed edge from every Q-node~$q$
  	in~$T_j$ and~$S_{j-1}$ to the Q-node~$q'$ in~$S_j$ that contains~$q$.
    We label the edge with 1 if~$q'$ contains~$q$ forwards, and with -1 otherwise. 
    By the result of Haeupler et al.~this can be done in~$O(k\cdot |U|)$ time~\cite{haeupler2013testing}.
    Note that by construction every vertex has at most one outgoing edge and
    for every Q-node~$q$ in~$T_1,\dots, T_k$ there is a unique path to the
    Q-node~$q'$ in~$S_k$ that contains it. The product of the edge labels along
    this path is 1 and -1 if~$q'$ contains~$q$ forwards and backwards,
    respectively.
    
    To determine for every Q-node~$q$ in~$T_1,\dots, T_k$ which Q-node in~$T$
    contains it and in which direction, we start at the sinks in~$D$ and
    backward propagate for every vertex in~$D$ the information which unique
    sink~$q'$ can be reached from it and what is the product of edge labels
    along the path to~$q'$.
This needs~$O(k \cdot |U|)$ time since the~$D$ has~$O(k \cdot |U|)$ vertices
and edges. 
  \end{proof}

Let~$\mu'=\mathrm{lca}_T(L(\mu))$ for the rest of this section 
 \begin{figure}
  \centering
  \includegraphics{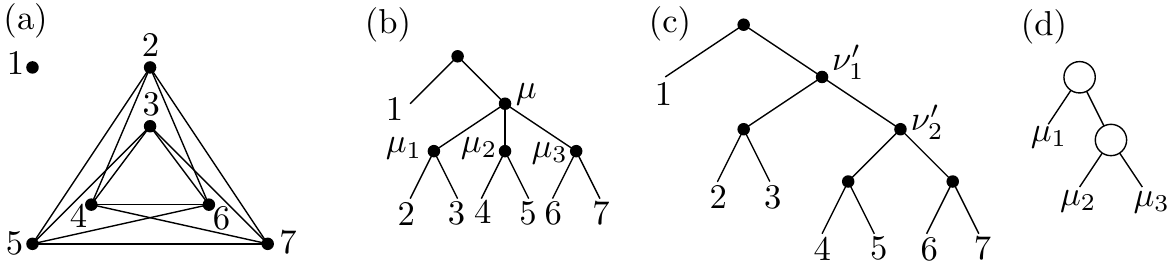}
  \caption{(a) A graph~$H$. (b) The canonical modular decomposition~$B$ of~$H$. (c) A restricted modular decompositions~$T$ of~$H$.
  (d) The PQ-tree~$S_\mu$.  The active nodes of~$T$ with regard to the~$\mu$-set~$\{2, 4, 6\}$ are~$\nu_1'$,~$\nu_2'$, 2, 4 and 6.}
  \label{fig: PQ}
\end{figure}
and let~$U \subseteq V$ be a maximal~$\mu$-set.  We call a node of~$T$
\emph{active} if it is either a leaf in~$U$ or if at least two of
its subtrees contain leaves in~$U$.  Denote by~$A$ the set of
active nodes in~$T$ and observe that~$A$ can be turned into a tree~$S_\mu$ by
connecting each node~$\nu' \in A \setminus \{\mu'\}$ to its lowest
active ancestor; see Figure~\ref{fig: PQ}.
Let now~$\vec B[H]$ be an orientation of the quotient graphs of ~$B$ induced by an orientation~$\vec
T[G] \in\TO(T, D)$ and consider a node~$\nu'\ne \mu'$ of~$S_\mu$.
Let~$X = U\cap L(\nu')$ and let~$Y = U \setminus L(\nu')$.  Since~$U$ is
a~$\mu$-set of a complete node, any pair of vertices in~$X \times Y$ is
adjacent.
Moreover, for each~$y \in Y$, the edges from~$y$ to~$X$ are all
oriented towards~$X$, or they are all oriented towards~$y$, since
every node of~$T$ that determines the orientation of such an edge
contains all vertices of~$X$ in a single child. This implies that in the
order of the~$\mu$-set given by the order of~$H[\mu]$, the
set~$L(\nu')$ is consecutive.  Moreover, if~$\nu'$ is prime, its
default orientation~$D_{\nu'}$ induces a total order on the active
children of~$\nu'$ that is fixed up to reversal.  Hence we turn~$S_\mu$ into a
PQ-tree by first turning all complete nodes into P-nodes and all prime nodes
into Q-nodes with the children ordered according to the linear order determined
by the default orientation which we call the \emph{initial} order of the
Q-node. Finally, we replace each leaf~$v \in U$ by the corresponding
vertex~$\mathrm{rep}_\mu(v)$ of~$H[\mu]$; see Figure~\ref{fig: PQ}.  As argued
above, the linear order of~$H[\mu]$ is necessarily represented by~$S_\mu$.

We show that tree~$S_\mu$ is independent from the choice of the maximal~$\mu$-set~$U$.
We use that any node of~$T$ has a laminar relation to the children of~$\mu$. 

\begin{lemma}
\label{lemma: rotBlau}
  For any child~$\mu_1$ of~$\mu$ and any node~$\kappa'$ of~$V(T)$ we
  have~$L(\mu_1)\subseteq L(\kappa')\cap L(\mu)$ or~$L(\kappa')\cap
  L(\mu)\subseteq L(\mu_1)$. 
\end{lemma}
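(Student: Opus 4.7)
My plan is to derive this laminarity claim from the strong-module structure of the canonical modular decomposition~$B$. Since~$\mu_1$ is a child of~$\mu$ in~$B$, the set~$L(\mu_1)$ is a strong module of~$H$: it cannot overlap any other module of~$H$. On the other hand,~$L(\kappa')$ is a module of~$H$, because~$T$ is itself a modular decomposition of~$H$. The classical trichotomy for a strong module and any module of the same graph therefore applies: one of~$L(\mu_1)\subseteq L(\kappa')$,~$L(\kappa')\subseteq L(\mu_1)$, or~$L(\mu_1)\cap L(\kappa')=\emptyset$ must hold.

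I then translate each branch of this trichotomy into the lemma's conclusion. When~$L(\mu_1)\subseteq L(\kappa')$, combining with~$L(\mu_1)\subseteq L(\mu)$ gives~$L(\mu_1)\subseteq L(\kappa')\cap L(\mu)$, the first alternative. When~$L(\kappa')\subseteq L(\mu_1)$, the chain~$L(\kappa')\cap L(\mu)\subseteq L(\kappa')\subseteq L(\mu_1)$ yields the second alternative directly.

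The delicate branch is the disjoint case~$L(\mu_1)\cap L(\kappa')=\emptyset$, where the inclusion~$L(\kappa')\cap L(\mu)\subseteq L(\mu_1)$ is only immediate when~$L(\kappa')\cap L(\mu)$ is already empty. To handle this I plan to apply the same trichotomy to every other child~$\mu_j$ of~$\mu$---each of which is likewise a strong module---and use that the leaf sets~$L(\mu_j)$ form a partition of~$L(\mu)$. Tracking for each~$\mu_j$ whether~$L(\mu_j)\subseteq L(\kappa')$,~$L(\kappa')\subseteq L(\mu_j)$, or~$L(\mu_j)\cap L(\kappa')=\emptyset$ allows me to describe~$L(\kappa')\cap L(\mu)$ precisely as either empty or contained in a single child's leaf set or obtained as a union of whole children's leaf sets, and in each sub-case to check the claimed inclusion. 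This is the step I expect to require the most care, as it is where the complete-type assumption on~$\mu$ and the laminar behaviour of strong modules have to be combined explicitly.
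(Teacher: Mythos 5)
Your strong-module trichotomy is the right idea, and in the cases where it applies it is cleaner than the paper's own proof, which instead argues by constructing a forbidden pair of simultaneously transitive orientations represented by~$T$ and~$B$. Cases~A ($L(\mu_1)\subseteq L(\kappa')$) and~B ($L(\kappa')\subseteq L(\mu_1)$) do give the two disjuncts immediately. However, your plan for the disjoint branch cannot succeed, because the lemma is false as stated exactly when~$L(\mu_1)\cap L(\kappa')=\emptyset$ and~$L(\kappa')\cap L(\mu)\neq\emptyset$. A small counterexample: let~$H=K_3$ on~$\{1,2,3\}$, let~$G$ be~$H$ plus a vertex~$4$ adjacent only to~$1$, and let~$T$ be the canonical decomposition of~$G$ restricted to~$H$. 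Then~$T$ has a complete root whose children are the leaf~$1$ and a complete node~$\kappa'$ with~$L(\kappa')=\{2,3\}$, while~$B$ consists of a single complete node~$\mu$ with leaf children~$1,2,3$. Taking~$\mu_1$ to be the leaf~$1$ gives~$L(\mu_1)=\{1\}$ and~$L(\kappa')\cap L(\mu)=\{2,3\}$, and neither set contains the other. Carrying through your planned partition analysis would in fact uncover this obstruction rather than resolve it: you would find that~$L(\kappa')\cap L(\mu)$ is a nonempty union of whole child leaf-sets that omits~$L(\mu_1)$, and neither disjunct can be reached.

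The version that is true---and the one every application in the paper actually uses---carries the extra hypothesis~$L(\mu_1)\cap L(\kappa')\neq\emptyset$. Under that hypothesis your trichotomy closes the proof in one line: the strong module~$L(\mu_1)$ and the module~$L(\kappa')$ meet, hence nest, and each nesting direction yields one disjunct. Every invocation of the lemma does supply such a witness (for instance~$u_1\in L(\mu_1)\cap L(\nu')$ in the proof that~$S_\mu^1=S_\mu^2$, and~$w,x\in L(\mu_1)\cap L(\omega')$ in the proof of Lemma~\ref{lemma: schnitt}), so the downstream results are unaffected; it is only the lemma statement that overclaims. Note that the paper's own proof shares this blind spot: its first case asserts~$L(\mu_2)\subseteq L(\kappa')\cap L(\mu)$ for the remaining children~$\mu_2$, which is false in general, and its \emph{otherwise} case tacitly picks~$u\in L(\kappa')\cap L(\mu_1)$ without ensuring that this set is nonempty.
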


  \begin{figure}
  \centering
  {\includegraphics{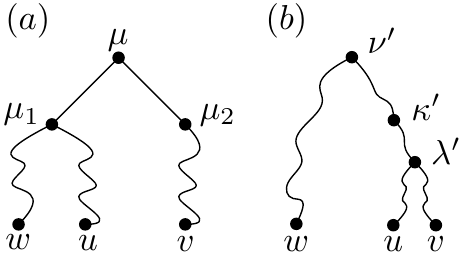}}
  \caption{(a) A subtree of~$B$ with root~$\mu$. (b) A subtree of~$T$ with
  root~$\nu'$.}
  \label{fig: rotBlau}
\end{figure}
\begin{proof}
  If~$L(\kappa')$ contains only leaves of at most one child~$\mu_1$ of~$\mu$,
  we have~$ L(\kappa')\cap L(\mu)\subseteq L(\mu_1)$
  and~$L(\mu_2)\subseteq L(\kappa')\cap L(\mu)$ for each other
  child~$\mu_2$ of~$\mu$.
  Otherwise, let~$\mu_1$,\,$\mu_2$ be children of~$\mu$ with leaves~$u
  \in L(\kappa') \cap L(\mu_1)$,~$v \in  L(\kappa') \cap L(\mu_2)$ and
  let~$\lambda'= \mathrm{lca}_T(u, v)$; see Figure~\ref{fig: rotBlau}. 
  Note that~$\lambda'$ is a descendant of~$\kappa'$ or~$\lambda' = \kappa'$.
  Assume that there exists a vertex~\mbox{$w \in L(\mu_1)\setminus L(\kappa')$} and
  let~$\nu' = \mathrm{lca}_T(u, w)$. Then~$\kappa'$ is a descendant of~$\nu'$
  since~$u\in L(\nu')\cap L(\kappa')$ and~$w\in L(\nu')\setminus L(\kappa')$.
  Note that~$\repS{vw} = \repS{uw}\in H[\nu']$ and~$\repS{uv}\in H[\lambda']$.
  Hence,~$T$ represents a transitive orientation of~$H$ with~$uv$ and~$vw$.
  This contradicts~$B$ representing all transitive orientations of~$H$ since we
  have~$\repB{uv}=\repB{wv}$.
  It follows that~$L(\mu_1)\setminus L(\kappa')=\emptyset$ and
  analogously~$L(\mu_2)\setminus L(\kappa')=\emptyset$. This concludes the
  proof.
\end{proof}

\begin{lemma}
  Let~$S_\mu^1$,\,$S_\mu^2$ be the PQ-trees for two
  maximal~$\mu$-sets~$U_1$,\,$U_2$.  Then~$S_\mu^1 = S_\mu^2$.
\end{lemma}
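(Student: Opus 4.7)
The plan is to show that every ingredient of $S_\mu$---the underlying tree, the node labels, and the orderings of Q-nodes---is determined by $T$ and $\mu$ alone. For each $\kappa' \in V(T)$ let $M(\kappa')$ be the set of those children $\mu_i$ of $\mu$ in $B$ with $L(\mu_i) \cap L(\kappa') \neq \emptyset$; this set is manifestly $U$-independent. The pivotal consequence of Lemma~\ref{lemma: rotBlau} is that $|M(\kappa')| \geq 2$ forces $L(\mu_i) \subseteq L(\kappa')$ for every $\mu_i \in M(\kappa')$: the alternative conclusion $L(\kappa') \cap L(\mu) \subseteq L(\mu_i)$ would push every other $\mu_{i'}$ out of $L(\kappa')$, contradicting $|M(\kappa')| \geq 2$. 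Combined with the fact that each $L(\mu_i)$ is a strong module of $H$ (as a child of the canonical node $\mu$ in $B$), one further obtains that whenever $L(\mu_i)$ intersects a child $\kappa_j'$ of $\kappa'$ without being contained in it, the leaf set $L(\kappa_j')$ must lie inside $L(\mu_i)$.

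Next I would show that the set of active inner nodes is $U$-independent, since $\kappa'$ is active precisely when $M(\kappa')$ is ``distributed'' across at least two children of $\kappa'$ via the containment structure above. Likewise, the parent in $S_\mu$ of the leaf labelled $\mu_i$ is the lowest active $\kappa'$ with $\mu_i \in M(\kappa')$; by the pivotal consequence this coincides with the lowest active ancestor of every representative $v_i \in L(\mu_i) \cap U$, so the leaf-parent assignment is $U$-independent, and the parent relation among active inner nodes depends only on the active set and the shape of $T$.

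For the labels and orderings, the P/Q-labels are inherited from the complete/prime labels of $\kappa'$ in $T$. For the initial order of each Q-node, note that because $\mu$ is complete in $B$, any two distinct $\mu_{i_1}, \mu_{i_2} \in M(\kappa')$ whose representatives land in distinct children of $\kappa'$ represent an edge of $H[\mu]$; hence the corresponding vertices of $H[\kappa']$ are adjacent, so the restriction of $D_{\kappa'}$ to the active children of $\kappa'$ forms a total order. By the module-consistency of transitive orientations (all edges of $H$ between $L(\mu_{i_1})$ and $L(\mu_{i_2})$ are oriented the same way), this total order descends to a well-defined total order on $M(\kappa')$, which after relabelling via $\mathrm{rep}_\mu$ is precisely the initial order of the Q-node in $S_\mu$---independent of which representative from each $L(\mu_i)$ we picked.

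The main obstacle is the ``spread'' case in which some $L(\mu_i)$ straddles several children of an active $\kappa'$, so that the specific child of $\kappa'$ hosting the representative of $\mu_i$ can depend on $U$. The strong-module containment settles the structural question by pinning down which children of $\kappa'$ may serve as hosts, and the module-consistency of $D_{\kappa'}$ then guarantees that the resulting labelled ordered tree remains the same regardless of which such host is chosen, so $S_\mu^1 = S_\mu^2$.
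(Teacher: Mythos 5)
Your proof is correct in substance and uses the same two key ingredients as the paper's proof: Lemma~\ref{lemma: rotBlau} and the module-consistency of transitive orientations (that is, all edges between $L(\mu_{i_1})$ and $L(\mu_{i_2})$ must be oriented identically in any transitive orientation of $H$). The presentational route differs: the paper fixes a candidate active node $\nu'$ w.r.t.\ $U_1$, picks corresponding representatives in $U_2$, and directly establishes three identities (same leaf set, distinct children, same orientation) to compare $S_\mu^1$ and $S_\mu^2$ pairwise; you instead set out to show that each constituent of $S_\mu$ (active set, parent map, labels, Q-node orders) is determined by $T$ and $\mu$ alone, which is the stronger, tidier viewpoint. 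Two remarks. First, your appeal to strong modules to get the \emph{unrestricted} inclusion $L(\kappa_j') \subseteq L(\mu_i)$ is correct (children of nodes in the canonical decomposition are strong modules and never overlap any module), but it is more than needed: the restricted inclusion $L(\kappa_j')\cap L(\mu)\subseteq L(\mu_i)$ already follows from Lemma~\ref{lemma: rotBlau} and suffices for every step, including showing there is no active node strictly inside $T[\kappa_j']$. Second, your ``distributed across at least two children'' condition is the right $U$-independent characterization of an active node; note that it is strictly stronger than $|M(\kappa')|\geq 2$ alone, since several children of $\mu$ may all be forced into a single child of $\kappa'$ -- you implicitly get around this, but it is worth stating that the existence of two distinct $\mu_{i_1},\mu_{i_2}\in M(\kappa')$ hitting two distinct children of $\kappa'$ is what is required, and that Lemma~\ref{lemma: rotBlau} then forces the two representatives in any maximal $\mu$-set into distinct children of $\kappa'$. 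With those clarifications, the argument is sound and the handling of the ``spread'' case via module-consistency matches the paper's Statement~(iii).
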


\begin{proof}
  Let~$\nu'$ be a non-leaf node in~$T$ that is active with regard to~$V_1'$.
  Then there exist two vertices~$u_1, v_1 \in V_1'$ and two distinct
  children~$\nu_1'$,~$\nu_2'$ of~$\nu'$ such that~$u_1 \in L(\nu_1')$ and~$v_1
  \in L(\nu_2')$.
  Let~$\mu_1 = \repMu{u_1}$,~$\mu_2 = \repMu{v_1}$ and let~$u_2$,\,$v_2$ be the
  vertices in~$V_2'$ 
  with~$\repMu{u_2} = \mu_1$,~$\repMu{v_2} = \mu_2$.
  We prove the following:
  \begin{enumerate}[(i)]
    \item \label{moduloMuSet:itm1} ~$u_2,v_2\in L(\nu')$ 
    \item \label{moduloMuSet:itm2}~$\repNode{\nu'}{u_2}\ne \repNode{\nu'}{v_2}$
    \item \label{moduloMuSet:itm3}~$\repNode{\nu'}{u_1v_1}\in D_{\nu'}
      \Leftrightarrow \repNode{\nu'}{u_2v_2}\in D_{\nu'}$.  
  \end{enumerate}
  For Statement~\ref{moduloMuSet:itm1} we get~$u_2\in L(\mu_1)\subseteq L(\nu')\cap
  L(\mu)$ by Lemma~\ref{lemma: rotBlau} since~$v_1\not\in L(\mu_1)$ and
  similarly~$v_2\in L(\nu')\cap L(\mu)$. 

  For Statement~\ref{moduloMuSet:itm2} assume~$\nu''=\repNode{\nu'}{u_2} = \repNode{\nu'}{v_2}$.
  Then we have~$u_2\in L(\nu'')$ and~$u_2\not\in L(\mu_1)$. By
  Lemma~\ref{lemma: rotBlau} we get~$u_1\in L(\mu_1)\subseteq L(\nu'')\cap
  L(\mu)$, i.e.,~$\nu''=\nu'_1$ and similarly~$\nu''=\nu'_2$ in contradiction
  to~$\nu'_1\ne\nu'_2$.

  For Statement~\ref{moduloMuSet:itm3} note that~$u_1v_1$ and~$u_2v_2$ are
  indeed represented in~$\nu'$ by Statement~\ref{moduloMuSet:itm2} and they are
  both represented by~$\mu_1\mu_2$ in~$H[\mu]$.  

  It follows that the active inner nodes are the same for~$U_1$,\,$U_2$ and
  after replacing the leaves with the children of~$\mu$ we obtain the same trees.
  Statement~\ref{moduloMuSet:itm3} then provides that the ordering of the
  children of~$Q$-nodes is also the same and~$S_\mu^1$,\,$S_\mu^2$ are indeed
  the same PQ-tree.
\end{proof}

By construction, each~Q-node~$q$ of~$S_\mu$ stems from a prime
node~$\nu'$ in~$T$, and the orientation of~$T[\nu']$ determines the
orientation of~$q$, namely~$q$ is reversed if and only if~$T[\nu']$ is
oriented as~$D_{\nu'}^{-1}$.  Since a single prime node~$\nu'$ of~$T$
may give rise to Q-nodes in several PQ-trees~$S_{\mu}$, we need to
ensure that the orientations of these Q-nodes are either all consistent
with the default orientation of~$T[\nu']$ or they are all consistent
with its reversal. To model this, we introduce a Boolean
variable~$x_q$ for each Q-node~$q$ in one of the PQ-trees with the interpretation that~$x_q = \mathtt{true}$ if and only if~$q$ has its initial order. We
require~$x_q$ to be equal to the variable that orients the 
prime node corresponding to~$q$.  More precisely, for every prime node~$\nu'$ in~$T[\mu']$ that gives rise to~$q$
we add the constraint~$(x_{\nu'} \leftrightarrow x_q)$ to~$\chi_{\mu}$, where the variable~$x_{\nu'}$ is the variable that
encodes the orientation of the prime node~$\nu'$.  We
construct a Boolean formula by setting~$\chi = \bigwedge_{\mu \in K(B)}
\chi_{\mu}$.

\begin{lemma}
  \label{lemma: linear-time PQ} 
  We can compute all PQ-trees~$S_\mu$ and the formula~$\chi$ in~$O(n + m)$ time.  
\end{lemma}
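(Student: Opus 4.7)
The plan is to build each $S_\mu$ in time $O(|U|)$ for a maximal $\mu$-set $U$ and to produce the clauses of $\chi_\mu$ alongside. Since $U$ contains exactly one vertex per child of $\mu$ in $B$, and $B$ has $O(n)$ nodes in total, $\sum_{\mu\in K(B)}|U|=O(n)$, so an $O(|U|)$ bound per complete node yields the claim once the preprocessing is accounted for.

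As global preprocessing in total time $O(n+m)$, I apply Lemma~\ref{lemma:qoutientGraphs} to $B$ and to $T$, build an LCA data structure on $T$~\cite{harel1984fast}, compute DFS entry times on $T$, and for every prime node $\nu'$ of $T$ compute once a topological ordering $\pi_{\nu'}$ of the vertices of $H[\nu']$ consistent with $D_{\nu'}$; the latter costs $O(|V(H[\nu'])|+|E(H[\nu'])|)$ per $\nu'$ and hence $O(n+m)$ in total. For a fixed complete node $\mu$, I then (i)~obtain $U$ in $O(|U|)$ via Lemma~\ref{lemma:qoutientGraphs}, which as a byproduct also supplies the map $v\mapsto\repMu{v}$; (ii)~bucket-sort $U$ by DFS entry time using a globally shared array that is reset only at the buckets used; (iii)~extract the non-leaf active nodes of $T$ as the LCAs of consecutive elements of the sorted sequence $u_1,\dots,u_k$, a standard Steiner-tree observation costing $|U|-1$ constant-time LCA queries; and (iv)~assemble $S_\mu$ on these active nodes by a stack-based bottom-up sweep in the same DFS order, in time $O(|U|)$. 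I label each internal node as a P- or Q-node according to the type of its originating node in $T$, relabel each leaf $v$ by $\repMu{v}$, and for each Q-node stemming from a prime $\nu'$ I fix the initial order of its children by bucket-sorting them using the precomputed positions $\pi_{\nu'}$. This produces the correct initial order because the active children of $\nu'$ form a clique in $H[\nu']$ (any two of them are joined by an edge of $H$ represented exactly at $\nu'$, and $\mu$ being complete forces that edge to exist) and are therefore totally ordered by $D_{\nu'}$, with the induced order coinciding with the one read off $\pi_{\nu'}$.

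For $\chi$, I store with each Q-node $q$ the prime node $\nu'\in T$ it stems from during construction, and emit the clause $(x_{\nu'}\leftrightarrow x_q)$ in $O(1)$. Since the total size of all $S_\mu$ is $O(\sum_{\mu}|U|)=O(n)$, so is the total number of Q-nodes, hence $\chi=\bigwedge_{\mu\in K(B)}\chi_\mu$ is built in time $O(n)$. The main obstacle will be to keep the per-node work at $O(|U|)$ even though $T$ is a global structure potentially much larger than $|U|$ and even though $D_{\nu'}$ is handed to us only as an orientation of a possibly large prime quotient graph; this is precisely what the LCA and DFS preprocessing on $T$, the per-prime-node topological orders $\pi_{\nu'}$, and the quotient-graph annotations of Lemma~\ref{lemma:qoutientGraphs} together enable.
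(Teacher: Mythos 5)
Your plan follows the paper's proof closely: DFS discovery times on $T$, an LCA data structure, the Steiner\mbox{-}tree observation that the inner active nodes are exactly the LCAs of consecutive elements of $U$ in DFS order, and a topological order on each prime $H[\nu']$ to fix the initial order of the Q-nodes. The one genuine variation is that you precompute a single global topological order $\pi_{\nu'}$ per prime $\nu'$ of $T$ and then order a Q-node's children by their positions in $\pi_{\nu'}$, whereas the paper builds, per Q-node, the induced orientation of the complete graph on the relevant vertices of $H[\nu']$ and topologically sorts that; both are fine, and your supporting observation (any two active children of $\nu'$ carry an edge of $H[\nu']$ because $\mu$ is complete and that edge is represented at $\nu'$) is correct.

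There is, however, a gap in the claimed $O(|U|)$ per-node cost. A ``globally shared array that is reset only at the buckets used'' lets you clean up cheaply, but reading the keys back out in sorted order still requires scanning the array in index order, which takes $\Theta(|V(T)|)$ for the DFS-time array and $\Theta(|V(H[\nu'])|)$ for the $\pi_{\nu'}$ array; knowing which cells to reset does not tell you where those cells sit in increasing key order. So, as written, step~(ii) costs $\Theta(n)$ per complete node and the Q-node bucket sort costs $\Theta(|V(H[\nu'])|)$ per Q-node, neither of which is $O(|U|)$, and over all nodes this can be superlinear. The paper avoids this by performing one \emph{global} radix sort over all pairs $(\mu, u.d)$, which is $O(n)$ in total and delivers each $U_\mu$ already in DFS order. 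If you prefer to keep a per-$\mu$ argument, you can instead use that $H[\mu]$ is a clique for $\mu\in K(B)$, so $\sum_{\mu\in K(B)}\binom{|U_\mu|}{2}\le |E(H)|$; hence an $O(|U_\mu|^2)$ comparison sort, and likewise an $O(|X|^2)$ sort of each Q-node's children, sums to $O(n+m)$. Either fix closes the gap, but the linear per-node bucket-sort bound as you state it does not hold.
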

\begin{proof}
  As a preprocessing we run a DFS on~$T$ starting at the root and
  store for every node~$\nu$ its discovery-time~$\nu.d$, i.e., the
  timestamp when~$\nu$ is first discovered, and its
  finish-time~$\nu.f$, i.e., the timestamp after all its neighbors
  have been examined.  We also employ the preprocessing from
  Lemma~\ref{lemma:qoutientGraphs}.
  We construct all PQ-trees and~$\chi$ with the following steps.
  \begin{enumerate}
  \item Take a maximal~$\mu$-set~$U_\mu$ for every~$\mu \in K(B)$.
  \item For every~$\mu \in K(B)$ compute the set of active nodes and for every
    active node compute its parent in~$S_\mu$.
    
  \item For every~$\mu \in K(B)$ determine for each inner node of~$S_\mu$
    whether it is a P- or a Q-node. If it is a Q-node, determine the linear
    order of its children, and construct the formula~$\chi_\mu$.
  \end{enumerate}

  Step~1 can be done in~$O(n)$ time by Lemma~\ref{lemma:qoutientGraphs}. 
  
  For Step~2, note that each active node is a least common ancestor of two leaves in~$U_\mu$.
  While it is easy to get all active nodes as least common ancestors, getting
  the edges of $S_\mu$ requires more work.
  Observe that the DFS on~$T$ visits the nodes of~$S_\mu$ in the same order as a DFS on~$S_\mu$.
  Consider~$S_\mu$ embedded such that the children of each node are ordered from left to right by their discovery-times. This also orders the leaves from left to right by their discovery-times. 
  Let~$\lambda$ be an inner node of~$S_\mu$. Let~$\lambda_1$,\,$\lambda_2$
  be two neighboring children of~$\lambda$ with~$\lambda_1$ to the left
  of~$\lambda_2$. Then~$\lambda$ is the least common ancestor of the rightmost
  leaf in~$L(\lambda_1)$ and the leftmost leaf in~$L(\lambda_2)$.
  Hence, each node of~$S_\mu$ is a least common ancestor for a consecutive pair of
  leaves.
 
  We add for every node~$u$ in a set~$U_\mu$ a tuple~$(\mu, u.d)$
  to an initially empty list~$L$.
  We then sort the tuples in~$L$ in linear time using radix sort~\cite{cormen2009introduction}.
  In the sorted list, for every~$\mu \in K(B)$ all tuples~$(\mu, u.d)$ are
  consecutive and the consecutive sublist is sorted by discovery time.

  For~$\mu \in K(B)$ let~$L_\mu$ be a list containing the vertices in~$U_\mu$
  ordered by their discovery time which we get directly from the consecutive
  sublist of~$L$ containing the tuples corresponding to~$\mu$.
  For every pair~$u, v \in U_\mu$ adjacent in~$L_\mu$ we compute~$\lambda =
  \lca_T(u,v)$ using the lowest-common-ancestor data structure for static trees
  by Harel and Tarjan~\cite{harel1984fast}   and insert~$\lambda$ into~$L_\mu$
  between~$u$ and~$v$.
  For a vertex~$u \in U_\mu$ its parent in~$S_\mu$ is the neighbor in~$L_\mu$
  that has a lower position in~$T$. 
  Note that~$u$ is a descendent in~$T$ of all its neighbors in~$L_\mu$. Hence
  if~$u$ has two neighbors in~$L_\mu$ one of them is a descendent of the other.
  Thus the parent of~$u$ in~$S_\mu$ is the neighbor with the higher discovery
  time.  Now we remove all vertices in~$U_\mu$ and possible duplicates of the
  remaining nodes from~$L_\mu$.
  Note that still every~$\lambda$ is a descendent in~$T$ of its neighbors
  in~$L_\mu$.
  Hence we iteratively choose a node~$\lambda$ in~$L_\mu$ whose discovery time
  is higher than the discovery time of its neighbors, compute its parent
  in~$S_\mu$ by comparing the discovery times of its neighbors with each other
  and remove~$\lambda$ from~$L_\mu$.
  
  In Step~3, we turn each active node that stems from a complete node
  into a P-node and each active node that stems from a prime node into
  a Q-node.  For a Q-node~$q$ that stems from a prime node~$\nu$, we
  determine the linear order of its children as follows.  Take the
  set~$X$ of vertices of~$H[\nu]$ that correspond to children of~$\mu$
  in~$S_\mu$, determine the orientation of the complete graph on~$X$
  induced by~$D_\nu$ and sort it topologically.  In total this
  take~$O(n+m)$ time for all active nodes in all PQ-trees.  Using the information
  computed up to this point, it is straightforward to output the
  formula~$\chi$.
\end{proof}

Finally, we combine the constraints from the complete nodes with the
constraints from the prime nodes by setting~$\varphi_{T}= \psi \land \chi$.
The formula~$\varphi_T$ allows us to describe a restricted set of transitive
orientations of~$G$.
We define~$S_T = \{S_\mu \mid \mu \in K(B)\}$.
The canonical modular decomposition~$(B, S_T, \varphi_{T})$ of~$H$ where every complete
node is labeled with the corresponding PQ-tree together with~$\varphi_{T}$ we
call a \emph{constrained} modular decomposition.

We say that a transitive orientation $O$ of $H$ \emph{induces a variable
assignment satisfying $\varphi_T$} if it induces an assignments of the
variables corresponding to prime nodes in $B$ and $Q$-nodes such that
$\varphi_T$ is satisfied for an appropriate assignment for the variables
corresponding to prime nodes in $T$.
Let~$\TO(B, S_T, \varphi_T)$ denote the set containing all transitive
orientations $O\in \TO(B)$ where for every
complete node~$\mu\in B$ the order $O_{\downarrow \mu}$ corresponds to a total
order represented by~$S_\mu$ and that induces a variable assignment that
satisfies~$\varphi_T$.

\subsection{Correctness}
\label{sec:correctness}

We now show that~$\TO(B, S_T, \varphi_T) = \TO(T, D)$. To this end we use that
Lemma~\ref{lemma: prime2} allows to find for an edge~$uv$ that is represented in a
prime node~$\mu$ of~$B$ the prime node~$\mu'=\lca_T(L(\mu))$ of~$T$ where it is represented.
This allows us to establish the identity of certain nodes. The following lemma
does something similar for complete nodes.

\begin{lemma}
\label{lemma: schnitt}
  Let~$uv$,~$wx$ be edges of~$H$ represented in complete nodes~$\mu$,~$\nu$ of~$B$
  and by the same edge in a complete node of~$T$. Then~$\mu=\nu$.
\end{lemma}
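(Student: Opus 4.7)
The plan is to exploit the non-overlap property of strong modules in the canonical modular decomposition $B$: for every node $\sigma$ of $B$, the leaf set $L(\sigma)$ is a strong module of $H$ and therefore does not overlap any other module, so for any module $M$ of $H$ we have $L(\sigma) \cap M = \emptyset$, $L(\sigma) \subseteq M$, or $M \subseteq L(\sigma)$. Let $\lambda$ be the complete node of $T$ with $\repT{uv} = \repT{wx}$. Relabelling if necessary, we may write $\lambda_1, \lambda_2$ for the children of $\lambda$ with $u, w \in L(\lambda_1)$ and $v, x \in L(\lambda_2)$, and both $L(\lambda_1)$ and $L(\lambda_2)$ are modules of $H$. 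Set $\mu_1 = \repMu{u}$ and $\mu_2 = \repMu{v}$, which are distinct children of $\mu$.

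First I would prove $L(\lambda_1), L(\lambda_2) \subseteq L(\mu)$. The strong module $L(\mu)$ intersects the module $L(\lambda_1)$ at $u$, so non-overlap gives $L(\mu) \subseteq L(\lambda_1)$ or $L(\lambda_1) \subseteq L(\mu)$; the former is impossible because $v \in L(\mu) \setminus L(\lambda_1)$, hence $L(\lambda_1) \subseteq L(\mu)$, and symmetrically $L(\lambda_2) \subseteq L(\mu)$. In particular $w, x \in L(\mu)$, so we may define $\mu_i = \repMu{w}$ and $\mu_k = \repMu{x}$. If $\mu_i \neq \mu_k$ then $\nu = \lca_B(w,x) = \mu$, as desired, so it remains to rule out $\mu_i = \mu_k$.

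The main step is showing $\mu_i \neq \mu_k$. Suppose for contradiction $\mu_i = \mu_k$; then the strong module $L(\mu_i)$ meets both $L(\lambda_1)$ (at $w$) and $L(\lambda_2)$ (at $x$). Non-overlap applied to $L(\mu_i)$ and $L(\lambda_1)$ gives $L(\mu_i) \subseteq L(\lambda_1)$ or $L(\lambda_1) \subseteq L(\mu_i)$, and the first option contradicts $x \in L(\mu_i) \setminus L(\lambda_1)$; hence $L(\lambda_1) \subseteq L(\mu_i)$, and symmetrically $L(\lambda_2) \subseteq L(\mu_i)$. But then $u \in L(\lambda_1) \subseteq L(\mu_i)$ and $v \in L(\lambda_2) \subseteq L(\mu_i)$ put both $u$ and $v$ into the single child $\mu_i$ of $\mu$, forcing $\mu_1 = \mu_i = \mu_2$ and contradicting $\mu_1 \neq \mu_2$. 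I expect the main subtlety to be choosing this symmetric application of non-overlap over a direct invocation of Lemma~\ref{lemma: rotBlau}, since the former treats the strong modules $L(\mu_1), L(\mu_2), L(\mu_i)$ and the modules $L(\lambda_1), L(\lambda_2)$ uniformly and makes the contradiction transparent.
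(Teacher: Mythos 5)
Your proof is correct, and it takes a genuinely different route from the paper's. The paper argues by contradiction on $\mu \neq \nu$, splitting into cases depending on whether $\mu$ is an ancestor of $\nu$, $\nu$ is an ancestor of $\mu$, or the two are incomparable, and in each case combines Lemma~\ref{lemma: rotBlau}, Lemma~\ref{lemma: prime2}, and the structural fact that no two adjacent nodes of the canonical decomposition~$B$ are both complete, to derive a contradiction. You instead argue directly, invoking only the strong-module property of the nodes of~$B$: since each~$L(\sigma)$ for~$\sigma \in B$ is a strong module of~$H$, it is comparable to or disjoint from every module of~$H$, in particular from the modules~$L(\lambda_1), L(\lambda_2)$ given by the two children of the complete~$T$-node~$\lambda$ incident to~$\repT{uv}$. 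Applying this twice — once to pull~$L(\lambda_1), L(\lambda_2)$ into~$L(\mu)$, and once to rule out~$\repMu{w} = \repMu{x}$ — gives~$\lca_B(w,x)=\mu$ directly. This is shorter and more self-contained: it bypasses the case analysis on the tree shape of~$B$ and the appeals to Lemma~\ref{lemma: rotBlau}, Lemma~\ref{lemma: prime2}, and the no-adjacent-complete-nodes property entirely, at the modest cost of making explicit (and relying on) the standard characterization of the nodes of the canonical modular decomposition as the strong modules of~$H$, a fact the paper uses only implicitly.
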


\begin{proof}
  Let~$\omega'$ be the node in~$T$ with~$\repT{uv}\in H[\omega']$.
  Assume~$\mu \neq \nu$. Then one of them is the ancestor of the other or they
  are both distinct from~$\lambda = \mathrm{lca}_B(\mu, \nu)$.
  First consider the case that~$\mu$ is an ancestor of~$\nu$.
  Then~$\mu$ has a child~$\mu_1$ such that~$L(\nu) \subseteq L(\mu_1)$. Note
  that~$\repMu{u} \neq \repMu{v}$, hence we have~$\repMu{u} \neq \mu_1$
  or~$\repMu{v} \neq \mu_1$.
  Without loss of generality assume~$\repMu{u} \neq \mu_1$.
  Since~$u\in L(\omega')\cap L(\mu)\setminus L(\mu_1)$ we have~$L(\mu_1)\subseteq L(\omega')$ by Lemma~\ref{lemma: rotBlau} and analogously
  it follows that~$L(\repMu{u})\subseteq L(\omega')$.
  Since $\repT{wx} = \repT{uv} \in H[\omega']$ it is $\omega' = lca_T(L(\mu_1)$.

  Assume that~$\mu_1$ is prime. Then by Lemma~\ref{lemma: prime2},~$\omega'$
  is prime which is a contradiction to the assumption that ~$\omega'$ is
  complete.  Hence~$\mu_1$ must be complete but as~$B$ is the modular
  decomposition of~$H$, no two adjacent nodes in~$B$ are complete.
  Thus~$\mu$ is not an ancestor of~$\nu$ and
  analogously we get that~$\nu$ is not an ancestor of~$\mu$.
  
  It remains to consider the case that~$\nu \neq \lambda \neq \mu$.
  Let~$\lambda_1$ be the child of~$\lambda$ such that~$L(\mu) \subseteq
  L(\lambda_1)$ and let~$\lambda_2$ be the child of~$\lambda$ such that~$L(\nu)
  \subseteq L(\lambda_2)$.
  Again~$\lambda$ has to be complete since otherwise by Lemma~\ref{lemma:
  prime2}~$\omega'$ would be prime which is a contradiction.
  By Lemma~\ref{lemma: rotBlau} we have that~$L(\lambda_1) \cup L(\lambda_2)
  \subseteq L(\omega')$.
  
  Assume that~$\lambda_1$ is prime. Then by Lemma~\ref{lemma:
  prime2},~$\omega'$ is prime which leads to a contradiction.
 Hence~$\lambda_1$ must be complete but again as~$B$ is the modular
 decomposition of~$H$, no two adjacent nodes in~$B$ are complete.
\end{proof}

\begin{theorem}
  \label{theo: singleInput}
  Let~$B$ be the canonical modular decomposition for a graph~$H$ and let~$T$ be a
  restricted modular decomposition for~$H$. 
 Then~$\TO(B, S_T, \varphi_T) = \TO(T, D)$ and~$\TO(B, S_T, \varphi_T)$ can be
 computed in time that is linear in the size of~$H$.
\end{theorem}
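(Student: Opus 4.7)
The plan is to prove both inclusions of the set equality $\TO(B, S_T, \varphi_T) = \TO(T, D)$ and then to observe that the linear-time bound follows by combining the earlier lemmas. For the inclusion $\TO(T, D) \subseteq \TO(B, S_T, \varphi_T)$, the argument is essentially by construction. Given $O \in \TO(T, D)$, I first note that $O \in \TO(B)$ since $B$ represents all transitive orientations of $H$. For each complete node $\mu \in B$ with a maximal $\mu$-set $U$, I would reuse the modularity argument already sketched when introducing $S_\mu$: for each active non-root node $\nu'$ of $S_\mu$, setting $X = U \cap L(\nu')$ and $Y = U \setminus L(\nu')$, every vertex of $Y$ is oriented uniformly toward or away from $X$ in $O$, so $L(\nu') \cap L(\mu)$ appears consecutively in $O_{\downarrow \mu}$ and Q-nodes inherit their orientation from the corresponding $D_{\nu'}$. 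That the induced variable assignment satisfies $\psi$ is immediate from Lemma~\ref{lem:primChar}, and that it satisfies $\chi$ is immediate from the definition of $\chi_\mu$.

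The reverse inclusion $\TO(B, S_T, \varphi_T) \subseteq \TO(T, D)$ is the more delicate half. Given $O \in \TO(B, S_T, \varphi_T)$ together with a satisfying assignment, I would build an orientation $\vec{T}[H]$ of the quotient graphs of $T$ by (a) orienting each prime node $\nu'$ of $T$ as $D_{\nu'}$ or $D_{\nu'}^{-1}$ according to $x_{\nu'}$, and (b) for each complete node $\nu'$ of $T$ and each edge $\nu_1'\nu_2'$ of $H[\nu']$, picking an arbitrary edge $uv$ of $H$ with $u \in L(\nu_1')$ and $v \in L(\nu_2')$ and copying $O(uv)$. The main obstacle is to show that (b) is well-defined: I would first use Lemma~\ref{lemma: prim} to rule out that $\repB{uv}$ lies in a prime node of $B$ (which would force $\nu'$ to be prime as well), and then apply Lemma~\ref{lemma: schnitt} to conclude that all choices of $uv$ share the same complete $\mu = \repB{uv} \in B$; the PQ-tree $S_\mu$ together with the fact that $O_{\downarrow \mu}$ is one of its frontiers then ensures that $L(\nu_1') \cap L(\mu)$ and $L(\nu_2') \cap L(\mu)$ appear as consecutive blocks in a fixed relative order, yielding a single orientation. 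Transitivity of $\vec{T}[H]$ on each complete $H[\nu']$ follows from transitivity of $O$ applied to triples of representatives, and on each prime $H[\nu']$ from the transitivity of $D_{\nu'}$ and $D_{\nu'}^{-1}$. Finally, that $\vec{T}[H]$ induces $O$ back on $H$ is verified edge by edge, using Lemma~\ref{lemma: prim} and Lemma~\ref{lem:primChar} together with $\psi$ at prime nodes of $T$, and the construction itself at complete nodes.

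For the algorithmic claim, the canonical modular decomposition $B$ is computed in $O(|V(H)| + |E(H)|)$ time by the algorithm of McConnell and Spinrad, the formula $\psi$ in the same time by Lemma~\ref{lemma: phiprime}, and the PQ-trees in $S_T$ together with $\chi$ in the same time by Lemma~\ref{lemma: linear-time PQ}. Setting $\varphi_T = \psi \wedge \chi$ then yields the constrained modular decomposition $(B, S_T, \varphi_T)$ within the claimed linear time bound.
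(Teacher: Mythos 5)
Your forward inclusion and the treatment of the linear-time bound essentially coincide with the paper's proof, which also defers to the discussion around the construction of $S_\mu$ and $\varphi_T$ and then cites Lemmas~\ref{lemma: phiprime} and~\ref{lemma: linear-time PQ}. The interesting divergence is in the reverse inclusion. You argue constructively: starting from $O \in \TO(B, S_T, \varphi_T)$ and a satisfying assignment, you explicitly build an orientation $\vec{T}[H]$ of the quotient graphs of $T$ and then verify well-definedness, transitivity, and that it re-induces $O$. The paper instead argues by contradiction: it assumes $O \notin \TO(T, D)$, extracts a witnessing pair of edges $uv, wx$ represented by conflicting orientations (or inconsistently relative to a default orientation) in a single quotient graph $H[\omega']$ of $T$, and then does a case analysis on the types of $\lca_B(u,v)$, $\lca_B(w,x)$ to exhibit a violated PQ-constraint or a violated clause of $\varphi_T$. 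Both routes hinge on the same structural lemmas --- Lemma~\ref{lemma: prim} to transfer prime nodes from $B$ to $T$, Lemma~\ref{lemma: schnitt} to collapse the two $B$-nodes involved when $\omega'$ is complete, and the laminar relation of Lemma~\ref{lemma: rotBlau} to control which children of a complete $B$-node land under which $T$-node.

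What each approach buys: the paper's contradiction argument is leaner because a single conflict localizes the problem to one node of $T$ and one or two nodes of $B$, so one never has to certify a globally consistent $\vec{T}[H]$. Your constructive route is conceptually more transparent (you actually exhibit the preimage under the map from orientations of $T$'s quotient graphs to orientations of $H$), but the burden of proof is heavier: the well-definedness of step~(b) is not ``immediate'' from $O_{\downarrow\mu}$ being a frontier of $S_\mu$ --- you need Lemma~\ref{lemma: rotBlau} to argue that for any representative $u$ of $\nu_1'$ the child $\mathrm{rep}_\mu(u)$ of $\mu$ actually sits below the $\nu_1'$-child of the corresponding P-node in $S_\mu$ --- and the claim that $\vec{T}[H]$ induces $O$ back, and that its complete nodes are acyclic, needs to be threaded through a fixed maximal $\nu'$-set rather than stated ``edge by edge''. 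These are gaps of rigor rather than of substance: the plan is sound and all the needed ingredients appear in your sketch, but a referee would ask you to spell out exactly the steps that the paper's proof by contradiction is designed to avoid.

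One small notational slip: in the reverse direction you write $\mu = \repB{uv}$, but $\repB{uv}$ is an edge of a quotient graph; you mean $\mu = \lca_B(u,v)$, the node in which $uv$ is represented.
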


\begin{proof}
   
  Let~$O_H \in \TO(T, D)$ and let~$\vec B[H]$ be the orientation of the quotient graphs of~$B$ 
  inducing~$O_H$.
  Then we have already seen that it is necessary that
  every complete node~$\mu$ in~$B$
  is oriented according to a total order represented by~$S_\mu$ 
  and that $O_H$ induces a variable assignment that satisfies~$\varphi_T$.
  Hence~$O_H \in \TO(B, S_T, \varphi_T)$.

  Conversely, let~$O_H \in \TO(B, S_T, \varphi_T)$  and assume~$O_H \notin \TO(T, D)$.
  Then~$O_H$ is either not represented by~$T$ or does not induce~$D$.
  I.e.,~$O_H$ contains two directed
  edges~$uv$,~$wx$ with~$\repT{uv}$  and~$\repT{wx}$ in the same quotient
  graph~$H[\omega']$, such that~$\repT{uv} = \repT{xw}$, or~$\omega'$ is prime
  and~$\repT{uv} \in D_{\omega'}$ but~$\repT{wx} \in D_{\omega'}^{-1}$. 
  Note that if~$\omega'$ is prime, then the first case implies the second one.
  Let~$\mu = \mathrm{lca}_B(u, v)$ and~$\nu = \mathrm{lca}_B(w, x)$.
  We distinguish cases based on the types of~$\mu$ and~$\nu$.
  Let~$\mu'= \lca_T(L(\mu))$,~$\nu'= \lca_T(L(\nu))$ and let~$\vec B[H]$ be the
  orientation of the quotient graphs of~$B$ that induces~$O_H$. 
  Without loss of generality, assume~$\repB{uv} \in D_{\mu}$ and~$\repB{wx} \in
  D_{\nu}$.
  
  \noindent\textbf{Case 1:~$\mu$ and~$\nu$ are both prime.}
  By Lemma~\ref{lemma: prime2} we have that~$\mu' = \omega' = \nu'$ is prime.
  By construction,~$\varphi_T$ enforces that~$uv$,~$wx$ are either represented
  in~$D_{\omega'}$ or in~$D^{-1}_{\omega'}$. Hence, this case does not occur.
  
  \noindent\textbf{Case 2:~$\mu$ is prime and~$\nu$ is complete.}
  By Lemma~\ref{lemma: prime2} we have that~$\mu' = \omega'$ is prime.
  Let~$U$ be a~$\nu$-set containing~$w$ and~$x$. 
  Since~$\repT{wx}\in H[\omega']$, node~$\omega'$ is active with respect to~$U$.
  Hence the PQ-tree~$S_\nu$ contains a Q-node~$q$ that stems from~$\omega'$. 

  By construction~$\varphi_{T}$ contains the constraints~$(x_{\omega'}
  \leftrightarrow x_q)$ and~$(x_{\omega'} \leftrightarrow x_\mu)$ but~$\vec{B}[H]$
  induces~$x_{\mu} = \texttt{true}$,~$x_{q} = \texttt{false}$.
  Hence the variable assignment induced by~$\vec{B}[H]$ does not
  satisfy~$\varphi_{T}$ and thus~$O_H \notin \TO(B, S_T, \varphi_T)$.
  
  \noindent\textbf{Case 3:~$\mu$ is complete and~$\nu$ is prime.}  
  Similar to Case 2.
  
  \noindent\textbf{Case 4:~$\mu,\nu$ are both complete.}
  Here we further distinguish two subcases depending on the
  type of~$\omega'$.
  First assume that~$\omega'$ is prime. 
  Let~$V_1'$ be a~$\mu$-set containing~$u$,~$v$ and 
  let~$V_2'$ be a~$\nu$-set containing~$w$,~$x$. 
  Since~$\repT{uv}$ and~$\repT{wx}$ are edges in~$H[\omega']$, node~$\omega'$
  is active with respect to both~$V_1'$,~$V_2'$.
  Hence~$S_\mu$,~$S_\nu$ contain Q-nodes~$q_1$,~$q_2$ stemming from~$\omega'$. %
  By construction~$\varphi_{T}$ contains the constraints~$(x_{\omega'}
  \leftrightarrow x_{q_1})$ and~$(x_{\omega'} \leftrightarrow x_{q_2})$,
  but~$\vec{B}[H]$ induces~$x_{q_1} = \texttt{ true}$, ~$x_{q_2} = \texttt{
  false}$.
   Hence the variable assignment induced by~$O_H$ does not
   satisfy~$\varphi_{T}$ and thus~$O_H \notin \TO(B, S_T, \varphi_T)$.

   It remains to consider the case that~$\omega'$ is complete.
   By Lemma~\ref{lemma: schnitt} we have~$\mu = \nu$.
   Since~$\omega'$ is not prime, we must have~$\repT{uv} = \repT{xw}$ by assumption.
   Let~$U$ be a~\mbox{$\mu$-set}. 
   Since~$\repT{uv} = \repT{xw}\in H[\omega']$, node~$\omega'$ is 
   active with respect to~$U$ and~$\repNode{\omega'}{u} =
   \repNode{\omega'}{x}$,~$\repNode{\omega'}{v} = \repNode{\omega'}{w}$.
   Hence~$S_\mu$ contains a P-node that stems from~$\omega'$ and demands a
   total order of the children of~$\mu$ where~$\repMu{u}$,~$\repMu{x}$ are either
   both smaller than both~$\repMu{v}$,~$\repMu{w}$, or~$\repMu{u}$,~$\repMu{x}$
   are both greater than both~$\repMu{v}$,~$\repMu{w}$.
   Since~$\vec{B}[H]$ induces~$\repMu{u} < \repMu{v}$ but~$\repMu{x} >
   \repMu{w}$,~$H[\mu]$ is not oriented according to a total order represented
   by~$S_\mu$ and thus~$O_H \notin \TO(B, S_T, \varphi_T)$.
  
   By Lemmas~\ref{lemma: phiprime} and \,\ref{lemma: linear-time PQ} the
   formula~$\varphi_T=\psi\land\chi$ and~$S_T$ can be computed in linear time. 
\end{proof}

  Let~$T$ be a modular decomposition of a graph~$G$ with an induced subgraph~$H$.
  Let~$B$ be the canonical modular decomposition of~$H$ and let~$T|_H$ be the restricted
  modular decomposition for~$H$ we get from~$T$.
  From Lemma~\ref{lemma: mdS} and Theorem~\ref{theo: singleInput} we directly get
  the following corollary.

\begin{corollary}
  The set~$\TO(B, S_{T|_H}, \varphi_{T|_H})$ contains exactly those
  transitive orientations of~$H$
  that can be extended to a transitive orientation of~$G$.
\end{corollary}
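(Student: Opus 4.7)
The plan is to chain together the two results cited just above the corollary. By Lemma~\ref{lemma: mdS}, applied to the modular decomposition $T$ of $G$ and its induced subgraph $H$, the set $\TO(T|_H, D)$ is exactly the set of transitive orientations of $H$ that extend to a transitive orientation of $G$. So the corollary reduces to showing that $\TO(B, S_{T|_H}, \varphi_{T|_H}) = \TO(T|_H, D)$.

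For the latter equality I would simply invoke Theorem~\ref{theo: singleInput} with the restricted modular decomposition $T|_H$ of $H$ in place of the abstract $T$ used in the theorem statement. The theorem is proved for an arbitrary restricted modular decomposition $(T',D')$ of $H$ together with the canonical modular decomposition $B$ of $H$, and asserts $\TO(B, S_{T'}, \varphi_{T'}) = \TO(T', D')$. Setting $T' = T|_H$ immediately yields $\TO(B, S_{T|_H}, \varphi_{T|_H}) = \TO(T|_H, D)$, which is precisely what is needed.

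Combining both equalities gives the claim. Since no further argument is required beyond matching notation, the only thing to check carefully is that $T|_H$ really is a restricted modular decomposition of $H$ in the sense needed by Theorem~\ref{theo: singleInput}; this is exactly what the construction preceding Lemma~\ref{lemma: mdS} establishes, so there is no genuine obstacle.
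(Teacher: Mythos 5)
Your proof is correct and matches the paper's intent exactly: the corollary is stated as a direct consequence of chaining Lemma~\ref{lemma: mdS} with Theorem~\ref{theo: singleInput} applied to the restricted modular decomposition $T|_H$, which is precisely what you do. The sanity check that $T|_H$ is a valid restricted modular decomposition is a reasonable detail to flag, and as you note it is supplied by the construction preceding Lemma~\ref{lemma: mdS}.
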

Let~$(B, S^1, \varphi_1), \dots, (B, S^r, \varphi_r)$ be constrained modular decompositions
for~$H$. Let~$S_\mu = \bigcap_{i = 1}^r S_\mu^i$ and~$S = \{S_\mu \mid \mu \in K(B)\}$.
The intersection~$(B, S, \varphi)$ of~$(B, S^1, \varphi_1), \dots,(B, S^r, \varphi_r)$ is
the constrained modular decomposition of~$H$ where every complete node
$\mu$ is labeled with the PQ-tree~$S_\mu$ equipped with the 
2-\textsc{Sat}-formula~$\varphi =(\bigwedge_{i = 1}^r \varphi_i) \land
\varphi'$ where~$\varphi'$ synchronizes the Q-nodes in the~$S_\mu$'s with the
Q-nodes in the~$S^i_\mu$'s as follows.
Recall that for every Q-node~$q$ in a tree~$S_\mu^i$ there exists a unique
Q-node~$q'$ in~$S_\mu$ that contains~$q$; either forward or backward.
For every~$i\in\{1,\dots,r\}$, every~$\mu \in K(B)$ and every Q-node~$q$ in~$S_\mu^i$ we
determine the Q-node~$q'$ in~$S_\mu$ that contains~$q$ and add the clause~$(x_q
\leftrightarrow x_{q'})$ if~$q$ has its forward orientation in~$q'$ and~$(x_q
\not \leftrightarrow x_{q'})$ otherwise.

\begin{lemma}
\label{lemma:schnitt}
It is~$\TO(B, S, \varphi) = \bigcap_{i = 1}^r \TO(B, S^i, \varphi_i)$ and we
can compute~$(B, S, \varphi)$ in linear time from~$\{(B, S^i, \varphi_i) \mid 1
\leq i \leq r\}$. 
\end{lemma}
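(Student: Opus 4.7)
The plan is to establish the set equality by two symmetric inclusions, both of which reduce to checking that the synchronisation clauses of $\varphi'$ are automatically satisfied by the assignment induced by any valid orientation, and then to bound the running time via Lemma~\ref{lemma:pqIntersection}.

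For $\bigcap_{i=1}^r \TO(B, S^i, \varphi_i) \subseteq \TO(B, S, \varphi)$, take $O$ in the intersection. For every complete $\mu \in B$, the order $O_{\downarrow \mu}$ is represented by every $S_\mu^i$ and therefore by $S_\mu = \bigcap_i S_\mu^i$. Extend the assignment induced by $O$ on the prime nodes of $B$ and the Q-nodes of $S$ to the Q-nodes of all $S^i$ and to the prime nodes of all $T^i$ by combining the witnesses for $O \in \TO(B, S^i, \varphi_i)$; this satisfies every $\varphi_i$. The synchronisation clauses in $\varphi'$ are also satisfied, because a Q-node $q$ of some $S_\mu^i$ and its container $q'$ in $S_\mu$ both have their states determined by the single order $O_{\downarrow \mu}$: reversing $q'$ reverses $q$ exactly when $q$ is contained forward in $q'$, which matches the form of the corresponding clause $(x_q \leftrightarrow x_{q'})$ or $(x_q \not\leftrightarrow x_{q'})$.

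The reverse inclusion is analogous. If $O \in \TO(B, S, \varphi)$, then $O_{\downarrow \mu}$ is represented by $S_\mu$ and hence by every $S_\mu^i$, and any extension of the induced assignment satisfying $\varphi$ restricts to an assignment satisfying each $\varphi_i$. The only point to verify is that the values this extension assigns to the Q-nodes of $S^i$ coincide with the values induced by $O$ itself; this follows from the same forward/backward containment, used now to force $x_q$ from $x_{q'}$ via the clauses of $\varphi'$.

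For the running time, Lemma~\ref{lemma:pqIntersection} computes, for each complete $\mu \in B$, the tree $S_\mu = \bigcap_i S_\mu^i$ together with the containing Q-node and the containment direction for every Q-node of every $S_\mu^i$ in time linear in $r$ times the number of children of $\mu$. Summing over complete $\mu \in B$ yields $O(r \cdot |V(H)|)$, which is linear in the total input size. Building $\varphi'$ then requires only a constant-size clause per processed Q-node, and concatenating the $\varphi_i$'s is trivially linear. The main obstacle is the bookkeeping of forward/backward containment between Q-nodes of $S$ and of the $S^i$'s, which is precisely the additional data supplied by Lemma~\ref{lemma:pqIntersection}.
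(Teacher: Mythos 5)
Your proof is correct and follows essentially the same line as the paper's: both inclusions are proved by observing that the order $O_{\downarrow\mu}$ is represented by $S_\mu$ iff by all $S_\mu^i$, that the induced Q-node assignments are automatically consistent with the forward/backward containment relation encoded in $\varphi'$, and the runtime is bounded via Lemma~\ref{lemma:pqIntersection} exactly as the paper does.
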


\begin{proof}
 Let~$O_H \in \bigcap_{i = 1}^r \TO(B, S^i, \varphi_i)$ and let~$\vec{B}[H]$ be
 the orientation of the quotient graphs of~$B$ that induces~$O_H$.
 Then for every~$i\in\{1,\dots,r\}$ the variable assignment induced by~$O_H$
  satisfies~$\varphi_i$ and for every complete node~$\mu$ in~$B$, the total
  order~$O_{H\downarrow \mu}$ is represented by~$S_\mu^i$, hence it is
  also represented by~$S_\mu$. 
  Let~$i \in \{1, \dots, r\}$ and let~$q$ be a Q-node in~$S_\mu^i$ and let~$q'$
  be the Q-node in~$S_\mu$ containing~$q$.
  Without loss of generality assume~$q'$ contains~$q$ forwards and~$\vec{B}[H]$
  induces~$x_{q} = \mathtt{true}$ with respect to~$(B, S^i, \varphi^i)$.
  Then~$\vec{B}[H]$ induces~$x_{q'} = \mathtt{true}$ with respect to~$(B, S,
  \varphi)$ and~$\varphi'$ contains the clause~$(x_{q'} \leftrightarrow
  x_q)$ for $x_q$.
  Hence~$\varphi'$ is satisfied in any extension of an assignment of variables
  corresponding to Q-nodes in~$S_\mu,S_\mu^{1},\dots,S_\mu^r$ induced
  by~$\vec{B}[H]$.  Hence~$O_H$ induces a variable assignment that
  satisfies~$\varphi$ and thus~$O_H \in \TO(B, S, \varphi)$. 
 
  Conversely let~$O_H \in \TO(B, S, \varphi)$ and let~$\vec{B}[H]$ be the
  orientation of the quotient graphs of~$B$ that induces~$O_H$. 
  Then for every complete node~$\mu\in B$, the total order~$O_{H\downarrow\mu}$
  is represented by~$S$ and thus by~$S^i_\mu$ for~$i \in \{1, \dots, r\}$.
  Further~$\vec{B}[H]$ induces assignments of the variables corresponding to the
  prime nodes in~$B$ and to the Q-nodes in~$S$. These induced variable
  assignments can be extended to a solution of~$\varphi$.  Let~$i \in \{1,
  \dots, r\}$ and let~$q$ be a Q-node in~$S_\mu^i$ such that~$q'$ is the Q-node
  in~$S_\mu$ containing~$q$.
  Without loss of generality assume~$q'$ contains~$q$ forwards and~$\vec{B}[H]$
  induces~$x_{q'} = \mathtt{true}$ with respect to~$(B, S, \varphi)$. 
  Then~$\varphi$ contains the clause~$(x_{q'} \leftrightarrow x_q)$ and
  hence~$x_{q} = \mathtt{true}$ in any solution of~$\varphi$ that is an
  extension of an assignment of variables corresponding to Q-nodes in~$S_\mu$
  induced by~$\vec{B}[H]$.
  Since for every complete node~$\mu$ in~$B$, the total
  order~$O_{H\downarrow\mu}$ is also represented by~$S^i_\mu$,~$\vec{B}[H]$ also
  induces assignments of the variables corresponding to Q-nodes in~$S_\mu^i$.
  Since~$q'$ contains~$q$ forwards~$\vec{B}[H]$ must induce ~$x_{q} =
  \mathtt{true}$ as well with respect to~$(B, S^i, \varphi_i)$.
  Hence the variable assignment induced by~$\vec{B}[H]$ and~$S_\mu^i$
  satisfies~$\varphi_i$ and thus~$O_H \in \bigcap_{i = 1}^r \TO(B, S^i, \varphi_i)$.  
  
  It remains to show the linear runtime. 
  Let~$G_1 = (V_1, E_1), \dots, G_r = (V_r, E_r)$ with~$n_i
  = |V_i|$ and~$m_i = |E_i|$ for all~$1 \leq i \leq r$ and let~$n = \sum_{i
  = 1}^r n_i$,~$m = \sum_{i = 1}^r m_i$. 
  Since every~$S^i_\mu$ for a node~$\mu\in K(B)$ has one leaf per child
  of~$\mu$, by Lemma~\ref{lemma:pqIntersection} their intersection~$S_\mu$ can
  be computed in~$O(r \cdot |\deg(\mu)|)$ time.  
  Hence in total we need~$\sum_{\mu \in K(B)} 
  O(r \cdot |\deg(\mu)|) = O(n)$ time to compute~$S$.
  For every~$i$, by Theorem~\ref{theo: singleInput}, we can compute~$\varphi_i$
  in~$O(n_i + m_i)$ time.
  By Lemma~\ref{lemma:pqIntersection} we can also find out in~$O(n)$ time which
  Q-nodes are merged and in which direction and hence the construction
  of~$\varphi$ in total takes~$O(n+m)$ time.
\end{proof}

Now consider the case where~$(B, S^1, \varphi_1), \dots, (B, S^r, \varphi_r)$
are the constrained modular decompositions we get from the restricted modular
decompositions~$T_1|_H,\dots,T_r|_H$. By Lemma~\ref{lemma:
schnitt} and Theorem~\ref{theo: singleInput} we have~$\TO(B, S, \varphi) =
\bigcap_{i = 1}^r \TO(B, S_i, \varphi_i)
= \bigcap_{i = 1}^r \TO(T_i, D)$ and by Lemma~\ref{lemma: mdS}~$G_1, \dots, G_r$
are simultaneous comparability graphs if and only if~$\varphi$ is satisfiable
and~$S$ does not contain the null tree.

\begin{theorem}
  \simor can be solved in linear time.
\end{theorem}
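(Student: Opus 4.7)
The plan is to reduce \simor to checking satisfiability of a 2-SAT formula plus a non-emptiness check on a collection of PQ-trees, by combining the machinery already established. Let $G_1,\dots,G_r$ be the input graphs with shared subgraph $H$, and set $n=\sum_i|V(G_i)|$, $m=\sum_i|E(G_i)|$.

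First, I would compute $H$ and its canonical modular decomposition $B$ in $O(|V(H)|+|E(H)|)$ time using McConnell and Spinrad's algorithm, together with default orientations for all prime nodes of $B$. Then, for each $i$, I compute the canonical modular decomposition $T_i$ of $G_i$ (with default orientations) in $O(|V(G_i)|+|E(G_i)|)$ time, and derive the restricted modular decomposition $T_i|_H$ by a bottom-up traversal: delete leaves not in $V(H)$ and iteratively contract inner nodes of degree at most~2, labelling each resulting node by the type of its corresponding quotient graph in $T_i$. This costs $O(|V(G_i)|+|E(G_i)|)$, so all restricted decompositions together take $O(n+m)$.

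Next, I invoke Theorem~\ref{theo: singleInput} on each $T_i|_H$ to compute the constrained modular decomposition $(B, S^i, \varphi_i)$ in time linear in the size of $G_i$, so that $\TO(B, S^i, \varphi_i)=\TO(T_i|_H, D_i)$; by Lemma~\ref{lemma: mdS}, this is exactly the set of transitive orientations of $H$ that extend to $G_i$. Then I compute the intersection $(B, S, \varphi)$ of all $(B, S^i, \varphi_i)$ via Lemma~\ref{lemma:schnitt}, which runs in $O(n+m)$ and satisfies $\TO(B, S, \varphi)=\bigcap_{i=1}^r \TO(B, S^i, \varphi_i)$. Hence a simultaneous transitive orientation of $G_1,\dots,G_r$ exists if and only if (i) no PQ-tree in $S$ is the null tree and (ii) the 2-SAT formula $\varphi$ is satisfiable.

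Condition (i) can be tested in $O(n)$ time while building $S$. Condition (ii) is a 2-SAT instance, solvable in time linear in $|\varphi|=O(n+m)$ by the algorithm of Aspvall, Plass and Tarjan. If both are satisfied, a satisfying assignment together with orderings of the complete nodes chosen from their PQ-trees induces an orientation $O_H$ of $H$ in $\TO(B,S,\varphi)$; lifting $O_H$ through each $T_i|_H$ via Lemma~\ref{lemma: mdS} gives the desired transitive orientations of the $G_i$. Extracting a concrete orientation from the combined structure costs $O(n+m)$, so the total runtime is $O(n+m)$. The only step that requires care is ensuring that the synchronization clauses between Q-nodes in $S$ and Q-nodes in the $S^i$ are generated consistently during the PQ-tree intersection, which is handled by the extension of Booth's algorithm from Lemma~\ref{lemma:pqIntersection}.
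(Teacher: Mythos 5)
Your proposal is correct and follows essentially the same route as the paper: compute the canonical modular decompositions, derive the restricted and then constrained modular decompositions via Theorem~\ref{theo: singleInput}, intersect them with Lemma~\ref{lemma:schnitt}, and reduce the decision to a null-tree check plus 2-SAT. The only small imprecision is the final extraction step: Lemma~\ref{lemma: mdS} only guarantees that the chosen $O_H$ extends to each $G_i$, whereas the paper actually constructs the extensions by running the linear-time \orext algorithm of Section~\ref{sec:ParComp} on each $G_i$ with $O_H$ as the partial orientation.
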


\begin{proof}
  Let~$G_1 = (V_1, E_1),\dots, G_r = (V_r, E_r)$ be~$r$-sunflower graphs
  with~$n_i = |V_i|$ and~$m_i = |E_i|$ for all~$1 \leq i \leq r$ and let~$n
  = \sum\limits_{i = 1}^r n_i$,~$m = \sum\limits_{i = 1}^r m_i$.
  We solve~\simor as follows.
  \begin{enumerate}
    \item Compute the canonical modular decomposition~$T_i$ for every~$G_i$ and the canonical modular decomposition~$B$ of~$H$ in ~$O(n+m)$ time by McConnell and Spinrad~\cite{mcconnell1999modular}.
    \item Compute~$T_i|_H$ for every~$i$ in~$O(n)$ time in total.
    \item Compute~$(B, S^i, \varphi_i)$ for every~$i$, in~$O(n +m)$ time in total by Theorem~\ref{theo: singleInput}.
    \item Compute~$(B, S, \varphi)$ in linear time by Lemma~\ref{lemma: schnitt}.
    
    \item Check whether~$S$ contains the null tree and whether~$\varphi$ is satisfiable in linear time. 
  \end{enumerate}
  
  We execute Step 5 as follows.  For~$i\in\{1,\dots,r\}$,~$\varphi_i$
   contains one variable and one constraint per prime node in~$B$, one variable
   per prime node in~$T_i|_H$ and one variable and one constraint per Q-node
   in~$S_\mu^i$.
  Since~$S_\mu^i$ has~$O(n_i)$ nodes, it contains~$O(n_i)$ Q-nodes.
  Hence in total every~$\varphi_i$ contains~$O(n_i)$ variables and clauses. 
  Note that~$\varphi'$ contains one clause per Q-node in the PQ-trees in~$S$. 
  Hence~$\varphi'$ contains~$O(n)$ clauses and variables and thus the 2-SAT
  formula~$\varphi$ can be solved in~$O(n)$ time by Aspvall et
  al.~\cite{aspvall1979linear}.  
  
   If~$S$ does not contain the null tree and~$\varphi$ has a solution, we get
   simultaneous transitive orientations of~$G_1, \dots, G_r$ in linear time by
   proceeding as follows. We orient every complete quotient graph of~$B$
   according to a total order induced by the corresponding PQ-tree where
   every~$Q$-node is oriented according to the solution of~$\varphi$. For a
   prime quotient graph~$H[\mu]$ we choose~$D_\mu$ if in the chosen solution
   of~$\varphi$ we have~$x_\mu = \texttt{true}$ and~$D_\mu^{-1}$ otherwise.
   Together, all these orientations of quotient graphs of~$B$ induce a
   transitive
   orientation on~$H$ and by applying the linear-time algorithm from
   Section~\ref{sec:ParComp} to solve~$\orext$ 
   we can extend it to a transitive orientation of~$G_i$ for every~$i \in \{1,
  \dots, r\}$.
\end{proof}

\section{Permutation Graphs}
\label{sec:perm}

We give algorithms that solve $\repext(\perm)$ and $\sunflower(\perm)$ in linear time using modular 
decomposition and the results from Sections~\ref{sec:ParComp} and~\ref{sec: simComp}. 
To do so, we need the following definitions and observations.
Let~$G = (V, E)$ be a permutation graph and let~$D$ be a representation of~$G$. We denote the upper horizontal line of~$D$ by~$L_1$ and the lower line by~$L_2$. 
\begin{proposition}[{\cite{montgolfier2003decomposition}}]
\label{prop:mD1}
Let~$G= (V, E)$ be a permutation graph and let~$T$ be the canonical modular decomposition of~$G$.
Then for every representation~$D$ of~$G$ and for every~$\mu \in T$, the vertices in~$L(\mu)$ appear consecutively along both~$L_1$ and~$L_2$.
\end{proposition}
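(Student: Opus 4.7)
I plan to argue by contradiction. Assume $L(\mu)$ is not consecutive on one of the two lines for some node $\mu \in T$; by symmetry it suffices to consider $L_1$. Setting $M := L(\mu)$, I would use the geometry of permutation diagrams together with the module property of $M$ to split $M$ into two proper non-empty modules of $G$, and then derive a contradiction from $\mu$ being a node of the canonical modular decomposition.

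Suppose $M$ is not consecutive on $L_1$: there exist $u, w \in M$ and $v \in V \setminus M$ with $u_1 < v_1 < w_1$, where subscripts denote $L_i$-positions. Since $M$ is a module, $v$ is adjacent to every vertex of $M$ (Case~A) or to none of them (Case~B). The standard crossing criterion---two segments cross (equivalently, the corresponding vertices are adjacent) iff their $L_1$- and $L_2$-orders disagree---immediately yields that in Case~A every $x \in M$ with $x_1 < v_1$ has $x_2 > v_2$, and every $x \in M$ with $x_1 > v_1$ has $x_2 < v_2$; the symmetric statement holds in Case~B. Writing $M_C := \{x \in M : x_1 < v_1\}$ and $M_D := \{x \in M : x_1 > v_1\}$, both parts are non-empty (they contain $u$ and $w$, respectively), and a short check shows that every pair in $M_C \times M_D$ is adjacent in Case~A (the segments must cross) and non-adjacent in Case~B. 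Combining this with the module property of $M$ and with the uniformity of $v$ on each part, both $M_C$ and $M_D$ are modules of $G$; hence $M = M_C \sqcup M_D$ is a decomposition of $M$ into two proper non-empty modules of $G$.

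The contradiction then follows from the type of the quotient $G[\mu]$. If $\mu$ is prime, $G[\mu]$ admits only trivial modules; tracing $M_C$ and $M_D$ down the canonical decomposition forces each to coincide with $L(\lambda)$ for some descendant $\lambda$ of $\mu$, but then $L(\mu)$ is covered by two descendant leaf-sets lying in disjoint subtrees, which is impossible since a prime quotient has at least four children with non-empty leaf-sets. If $\mu$ is empty, Case~A contradicts the edgeless quotient because the all-edges condition between $M_C$ and $M_D$ cannot be realized across children of an empty node; the symmetric statement handles $\mu$ complete in Case~B. The remaining mixed cases ($\mu$ empty in Case~B and $\mu$ complete in Case~A) I would treat by descending into a child $\nu$ of $\mu$, which by the no-adjacent-same-type property of the canonical decomposition has the opposite type and therefore provides the required contradiction at the next level. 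I expect this descent step---tracking how $M_C$ and $M_D$ behave inside a split child and chaining the arguments across levels while respecting the alternation of quotient types---to be the main obstacle of the proof.
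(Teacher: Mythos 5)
The paper does not prove this proposition; it is cited directly from Montgolfier~\cite{montgolfier2003decomposition}, so there is no in-paper argument for you to match. Judged on its own terms, your outline has a genuine gap precisely in the two ``mixed cases'' you flag ($\mu$ empty with $v$ non-adjacent to $M$, and $\mu$ complete with $v$ adjacent to $M$), and your proposed fix---descending into a child of $\mu$---does not work. Once $v$ is fixed, each child $\nu$ of $\mu$ lies entirely inside $M_C$ or entirely inside $M_D$ (for $\mu$ empty, $M_C$ and $M_D$ are unions of connected components of $G[L(\mu)]$ and the children are exactly those components; dually for $\mu$ complete), so $L(\nu)$ sits wholly on one side of $v$ on both lines. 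There is therefore no reason for $L(\nu)$ to be non-consecutive, and the non-consecutiveness you are trying to exploit does not propagate downward; the descent gives you nothing to iterate on.

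What actually closes these cases is a top-down induction combined with a look at the \emph{parent} $\pi$ of $\mu$, not its children. Assuming inductively that $L(\pi)$ is already known to be consecutive, the witness $v$ must lie in $L(\pi)\setminus L(\mu)$, hence in some sibling child $\nu_v$. If $\pi$ is complete, then $v$ is adjacent to all of $L(\mu)$, which is incompatible with the empty-$\mu$/Case~B configuration; dually if $\pi$ is empty. That leaves only $\pi$ prime, where the argument must change character: since $G[\pi]$ is prime, $\{\mu,\nu_v\}$ is not a module of $G[\pi]$, so some sibling $\lambda$ is adjacent to exactly one of $\mu,\nu_v$, and the resulting position constraints on $L(\lambda)$ relative to the low-low block $M_C$, the high-high block $M_D$, and $v$ can be driven to a contradiction. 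None of this machinery (top-down induction on $T$, case split on the type of the \emph{parent}, primality of $G[\pi]$ applied to the pair $\{\mu,\nu_v\}$) appears in your sketch, and it is the actual crux; as written, the proposal is incomplete.
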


Let~$G$ be a permutation graph and let~$T$ be the canonical modular decomposition of~$G$.
We use Proposition~\ref{prop:mD1} to show that for node~$\mu \in T$
we can compute a permutation diagram
of~$G[L(\mu)]$ by iteratively replacing
a line segment representing a descendent~$\nu$
of~$\mu$ in~$T$ by a permutation diagram 
of~$G[\nu]$.
\begin{theorem}
\label{theo: permConstruction}
  Let~$G$ be a permutation graph and let~$T$ be the canonical modular decomposition of~$G$.
  There is a bijection~$\phi$ between the 
  permutation diagrams of~$G$ and the choice of a permutation diagram~$D_\mu$ for each quotient graph~$\mu \in T$.
  Both~$\phi$ and~$\phi^{-1}$ can be computed in~$O(n)$ time.
\end{theorem}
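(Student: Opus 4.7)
My plan is to describe $\phi$ and $\phi^{-1}$ explicitly as block-wise operations on the diagram, verify by induction over $T$ that they are well-defined and mutually inverse, and then bound the total work by the size of $T$.

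For $\phi$, given a permutation diagram $D$ of $G$, I would read off a diagram $D_\mu$ for each inner node $\mu$ as follows. By Proposition~\ref{prop:mD1}, the vertices of $L(\mu)$ occupy a consecutive block on each of $L_1,L_2$, and within that block the vertices of $L(\nu)$ for every child $\nu$ of $\mu$ form a consecutive sub-block on both lines. Contracting each child's sub-block to a single segment (keeping the ordering on $L_1$ and on $L_2$ inherited from $D$) yields a diagram $D_\mu$ on the children of $\mu$, which I would take as the image under $\phi$. For $\phi^{-1}$, given a diagram $D_\mu$ for every inner node, I would build a diagram of $G$ top-down: start with $D_\rho$ at the root and substitute, for each child $\nu$, the segment of $\nu$ by the recursively built diagram of $G[L(\nu)]$, scaled into the slot on $L_1,L_2$ previously occupied by $\nu$'s endpoints.

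The main delicate point, and the step I expect to be the core obstacle, is verifying that $\phi^{-1}$ really produces a diagram of $G$ (and symmetrically that $\phi$ yields a diagram of each $G[\mu]$). For two distinct children $\nu_1,\nu_2$ of an inner node $\mu$, the substituted blocks occupy disjoint consecutive slots on each of $L_1,L_2$, so every segment in the $\nu_1$-block crosses every segment in the $\nu_2$-block if and only if the two slots appear in opposite relative order on the two lines --- which by construction of $D_\mu$ happens exactly when $\nu_1\nu_2\in E(G[\mu])$. Since $L(\nu_1)$ and $L(\nu_2)$ are modules of $G[L(\mu)]$, either all edges between $L(\nu_1)$ and $L(\nu_2)$ are present in $G$ or none are, matching the crossing pattern exactly. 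Crossings inside a single child's block are correct by induction. That $\phi$ and $\phi^{-1}$ are mutually inverse then follows because both operations are block-wise: contracting each child's block and then re-expanding it restores the original diagram, and expanding and then contracting restores the original family of quotient diagrams.

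For the runtime, I would store positions symbolically (as ranks on each line rather than numerical coordinates) so that each child of each inner node is handled in $O(1)$. Since $T$ has $O(n)$ nodes and $\sum_{\mu \in T}|V(G[\mu])|$ equals the number of parent-child pairs in $T$, which is $O(n)$, a single top-down traversal computes $\phi^{-1}$, and a single bottom-up traversal computes $\phi$, each in $O(n)$ time.
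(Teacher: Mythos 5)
Your correctness argument follows the same approach as the paper: Proposition~\ref{prop:mD1} gives consecutivity of each $L(\nu)$-block on both lines, the module property of $L(\nu_1),L(\nu_2)$ matches the all-or-nothing edge pattern with the all-or-nothing crossing pattern between disjoint blocks, and recursive block substitution/contraction gives the two directions of the bijection. This is exactly the paper's argument (the paper happens to run both directions bottom-up; your top-down construction for $\phi^{-1}$ is equivalent).

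Where your write-up falls short is the $O(n)$ bound. ``Store positions symbolically as ranks'' does not by itself give $O(1)$ per child: an integer rank on each line must be shifted whenever a block of size $k$ is spliced in, so a naive implementation is $\Theta(n)$ per substitution. The paper avoids this by storing each line as a doubly-linked list of labels, so splicing a child's diagram into the slot occupied by a single list entry is genuinely constant time. For the $\phi$ direction you also leave unstated how the boundaries of the block for $L(\mu)$ are found without rescanning the whole list; the paper's trick is to maintain, via a bottom-up traversal, a shrinking list that keeps exactly one representative per already-processed subtree, locate $\mu$'s block by starting at any representative of a child of $\mu$ and scanning outward only until the first non-$L(\mu)$ entry on either side, and then remove all but one representative --- which bounds the work at node $\mu$ by $O(\deg(\mu))$ and hence the total by $O(n)$. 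Replacing ``ranks'' with a linked-list (or an implicit tree of substitutions read out at the end) and spelling out this boundary-search invariant would close the gap.
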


\begin{proof}
  To compute a permutation diagram of~$G$ 
  from~$\{ D_\mu \mid \mu \in T\}$ we traverse~$T$ bottom-up and compute for 
  every~$\mu \in T$ a permutation diagram
  representing~$G[L(\mu)]$ as follows.
  For every child~$\nu$ of~$\mu$ replace the line segment representing~$\nu$ in~$D_\mu$ by the permutation diagram of~$G[L(\nu)]$.
 For every permutation diagram we store 
 two doubly-linked lists containing the order of labels along the two horizontal 
 lines, respectively. Then the replacements described above take linear time in total. 

Conversely, assume that a permutation diagram~$D$ for~$G$ is given as two double linked
lists~$l_1(D)$ and~$l_2(D)$ containing the order of labels along the two horizontal lines, respectively.
We traverse~$T$ bottom-up and consider all leaves as initially visited.
Let~$l_1$ and $l_2$ be two double linked lists and 
initially~$l_1 = l_1(D)$ and~$l_2 = l_2(D)$.
When visiting a non-leaf node~$\mu \in T$ we compute the two double linked lists~$l_1(\mu)$ and~$l_2(\mu)$ 
representing its
permutation diagram as follows.
As an invariant we claim that 
at any time~$l_1$ and~$l_2$ are sublists of~$l_1(D)$
and~$l_2(D)$ and represent the permutation diagram of a 
subgraph of~$G$.
Further~$l_1$ and~$l_2$ contain for every visited child~$\nu$ of an unvisited node 
exactly one entry corresponding to a leaf 
in~$L(\nu)$. 
Clearly the invariant holds at the beginning of the traversal.
Now assume we visit a non-leaf node~$\mu$ and 
the invariant holds.
Let~$l_1(\mu)$ and~$l_2(\mu)$ be the sublist of~$l_1$ and $l_2$ consisting of
all entries corresponding to leaves in~$L(\mu)$, respectively.
The invariant together with Proposition~\ref{prop:mD1} gives 
us that when visiting a node~$\mu$,~$l_1(\mu)$ and~$l_2(\mu)$
are consecutive sublists of~$l_1$ and $l_2$.
To compute~$l_1 (\mu)$ and~$l_2 (\mu)$ we start at an arbitrary child~$\nu$ of~$\mu$
in~$l_1$ and~$l_2$ and search for the right and left end of~$l_1 (\mu)$ and~$l_2 (\mu)$, i.e. on both sides we
search for the first entry that does not correspond to a child of~$\mu$. Finally we remove all list entries corresponding 
to a leaf
in~$L(\mu)$ that is not in~$L(\nu)$ in~$l_1$ and~$l_2$. 
After this step the invariant still holds.
By definition of the quotient graphs and since they contain exactly
one representative per child of~$\mu$,~$l_1(\mu)$ and~$l_2(\mu)$
represent the permutation diagram for $G[\mu]$.
In total these computations take linear time since for every~$\mu \in T$ the
number of visited list entries is in~$O(\deg(\mu))$.
\end{proof}

\subsection{Extending Partial Representations}
\label{sec:ParPerm}
For solving~$\repext(\perm)$ efficiently, we exploit that a given partial representation $D'$
of a permutation graph $G$ is extendible if and only if, 
for every prime node $\mu$ in the canonical modular decomposition of $G$, the partial representation of $G[\mu]$ induced by $D'$ is extendible. Since $G[\mu]$ has only a constant number of representations this can be checked in linear time.

In the following, let~$G = (V, E)$ be a permutation graph and let~$D'$ be a corresponding partial representation of a subgraph~$H = (V', E')$ of~$G$.
Furthermore, let~$T$ be the canonical modular decomposition  for~$G$ and
let~$\mu$ be a node in~$T$.
Let~$U'_\mu \subseteq V'$ be a (not necessarily maximal)~$\mu$-set containing as many vertices in~$V'$ as possible.
Let~$D'_\mu$ be the partial representation of~$G[\mu]$ we get from~$D'$ by removing all line segments corresponding to vertices not in~$U'_\mu$ and
replacing every label u by the label~$\repMu{u}$.
Let~$U_\mu$ be a maximal~$\mu$-set with~$U'_\mu \subseteq U_\mu$.

\begin{lemma}
\label{lemma:mD2}
Let~$D'$ be a partial representation of~$G$. Then~$D'$ can be extended to a representation~$D$ of~$G$ if and only if for every inner node~$\mu$ in~$T$,~$D'_\mu$ can be extended to a representation~$D_\mu$ of~$G[\mu]$.
\end{lemma}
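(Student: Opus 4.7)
The plan is to leverage the bijection of Theorem~\ref{theo: permConstruction} between permutation diagrams of~$G$ and choices of permutation diagrams~$\{D_\mu\}_{\mu \in T}$ for the quotient graphs of~$T$, translating extendibility of~$D'$ into the corresponding property for each~$D'_\mu$.

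For the forward direction, suppose~$D$ extends~$D'$. Applying~$\phi$ from Theorem~\ref{theo: permConstruction} to~$D$ yields a representation~$D_\mu$ of~$G[\mu]$ for each inner~$\mu\in T$. I would then show that each~$D_\mu$ extends~$D'_\mu$: by Proposition~\ref{prop:mD1}, for every child~$\lambda$ of~$\mu$ the leaves of~$L(\lambda)$ appear consecutively on both lines of~$D$, so the single segment labeled~$\repMu{u}$ in~$D_\mu$ (obtained by contracting each block~$L(\lambda)$) faithfully records the position of the block containing~$u$. Since for every~$u\in U'_\mu\subseteq V'$ the position of~$u$ in~$D$ agrees with its position in~$D'$, the relative order of the segments labeled~$\repMu{u}$ in~$D_\mu$ coincides with the order prescribed by~$D'_\mu$.

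For the backward direction, assume that for every inner~$\mu$ there is some~$D_\mu$ extending~$D'_\mu$. Applying~$\phi^{-1}$ to~$\{D_\mu\}$ produces a representation~$D$ of~$G$, and it remains to verify that~$D$ agrees with~$D'$ on~$V'$. Given two distinct~$u,v\in V'$, let~$\mu := \mathrm{lca}_T(u,v)$ and let~$\lambda_u,\lambda_v$ be the distinct children of~$\mu$ containing~$u$ and~$v$. In the construction of~$D$ each of~$L(\lambda_u),L(\lambda_v)$ is a consecutive block, so the relative order of~$u,v$ in~$D$ equals the order of~$\lambda_u,\lambda_v$ in~$D_\mu$; as~$D_\mu$ extends~$D'_\mu$, this equals the order of the representatives~$u^*\in L(\lambda_u)\cap U'_\mu$ and~$v^*\in L(\lambda_v)\cap U'_\mu$ in~$D'$. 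The last step is to argue that the order of~$u,v$ in~$D'$ matches the order of~$u^*,v^*$ in~$D'$; this follows from the fact that~$L(\lambda_u)\cap V'$ and~$L(\lambda_v)\cap V'$ inherit the module property from~$G$, so they force the same separating behaviour on~$D'$ that Proposition~\ref{prop:mD1} imposes on any full representation.

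The main obstacle is this final step of the backward direction: one must ensure that the partial representation~$D'$ is itself consistent with the modular structure of~$G$ inherited by~$G[V']$, so that representatives lying in a common subtree~$L(\lambda)$ sit on the same side of every vertex of a sibling subtree~$L(\lambda')$. It is precisely this module-compatibility, combined with the pointwise extendibility of each~$D'_\mu$, that allows the individually chosen~$D_\mu$ to glue together consistently via~$\phi^{-1}$ into an extension of~$D'$.
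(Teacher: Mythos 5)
Your proof is correct and takes essentially the same route as the paper: both directions go through the bijection of Theorem~\ref{theo: permConstruction} and the block-consecutivity from Proposition~\ref{prop:mD1}. The ``main obstacle'' you flag in the backward direction --- that $L(\lambda_u)\cap V'$ is a module of $G[V']$ and therefore appears consecutively in~$D'$, so the order of $u,v$ in~$D'$ agrees with that of the chosen representatives $u^*,v^*$ --- is handled exactly as you sketch, and is precisely what the paper implicitly appeals to when it asserts the order is ``independent of the choice of $U'_\mu$.''
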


\begin{proof}
  Assume that~$D'$ can be extended to a representation~$D$ of~$G$.
  Then for every inner node~$\mu$ in~$T$,~$D$ restricted to~$U'_\mu$ where every label u is replaced by the label~$\repMu{u}$
  is an extension of~$D_\mu'$ representing~$G[\mu]$.
  
  Conversely assume that for every inner node~$\mu$ in~$T$,~$D'_\mu$ can be extended to a representation~$D_\mu$ of~$G[\mu]$.
  By Theorem~\ref{theo: permConstruction} we get a permutation diagram~$D$ representing~$G$ from the~$D_\mu$s. We show that~$D$ extends~$D'$.
  
  Let~$L_1$ and~$L_2$ denote the upper and the bottom line of~$D$, respectively.
  Now for every pair~$u, v \in V'$ let~$\mu = \lca_T(u, v)$.
  By Proposition~\ref{prop:mD1}~$L(\mu)$ appears consecutively along~$L_1$ and~$L_2$. Hence independently of the choice of~$U'_\mu$,  
  in~$D_\mu$~$\rep{u}$ and~$\rep{v}$ appear in the same order as the labels corresponding to~$u$ and~$v$ in~$D'$ along both horizontal lines.
  Thus~$D$ restricted to~$V'$ coincides with~$D'$.
\end{proof} 

\begin{theorem}
\label{theorem:REPEXT(PG)}
$\repext(\perm)$ can be solved in linear time. 
\end{theorem}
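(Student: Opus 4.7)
The plan is to use Lemma~\ref{lemma:mD2} together with Theorem~\ref{theo: permConstruction} to reduce $\repext(\perm)$ to independently deciding extendability of the induced partial representation~$D'_\mu$ in every quotient graph~$G[\mu]$ of the canonical modular decomposition~$T$ of~$G$, and then to exploit the structural simplicity of each such~$G[\mu]$ (empty, complete, or prime with only a constant number of representations) to settle one node in time linear in $|V(G[\mu])|+|E(G[\mu])|$.

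First I would compute~$T$ together with its annotated quotient graphs in~$O(n+m)$ time via McConnell and Spinrad~\cite{mcconnell1999modular} and the preprocessing of Lemma~\ref{lemma:qoutientGraphs}. Then, by a single bottom-up pass of~$T$ that stores at each node~$\nu$ one arbitrarily chosen vertex of $V' \cap L(\nu)$ (or nothing if that intersection is empty), I build, for every inner node~$\mu$, the $\mu$-set~$U'_\mu$ as the set of chosen vertices attached to children of~$\mu$; this takes $O(|T|)=O(n)$ time. Reading off the positions of $U'_\mu$ along $L_1$ and $L_2$ from~$D'$ yields the induced partial representation~$D'_\mu$, i.e.\ two relative orders on the same set of representatives.

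For each~$\mu$ I decide extendability of~$D'_\mu$ by type. If~$\mu$ is empty, every permutation diagram of~$G[\mu]$ has parallel segments, so the two orders along $L_1$ and $L_2$ on any subset of vertices coincide; I verify that the two orders of $D'_\mu$ on $U'_\mu$ agree. If~$\mu$ is complete, the two orders of any diagram of~$G[\mu]$ are reversals of each other, which is tested analogously. If~$\mu$ is prime, then~$G[\mu]$ and $\overline{G[\mu]}$ are both prime comparability graphs, each admitting a transitive orientation unique up to reversal, so~$G[\mu]$ has at most four permutation diagrams, one per pair of transitive orientations of $G[\mu]$ and $\overline{G[\mu]}$; I obtain all four in $O(|V(G[\mu])|+|E(G[\mu])|)$ time using a known linear-time permutation-graph representation algorithm and check each against~$D'_\mu$. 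Summed over all~$\mu$, the per-node work is $O(n+m)$.

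If some test fails, Lemma~\ref{lemma:mD2} lets me reject. Otherwise I have, for every inner node~$\mu$, a permutation diagram~$D_\mu$ of~$G[\mu]$ that extends~$D'_\mu$, and I apply~$\phi^{-1}$ of Theorem~\ref{theo: permConstruction} to compose them into a permutation diagram~$D$ of~$G$ in $O(n)$ additional time; Lemma~\ref{lemma:mD2} guarantees that~$D$ extends~$D'$. The main obstacle I anticipate is the prime case: one must produce the (at most four) candidate diagrams for $G[\mu]$ in time linear in $|V(G[\mu])|+|E(G[\mu])|$ without ever materialising the complement $\overline{G[\mu]}$, which could have $\Theta(|V(G[\mu])|^2)$ edges. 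This should follow from combining a linear-time transitive orientation of $G[\mu]$ with the structural link between permutation diagrams and pairs of transitive orientations described above, so that both orderings $<_1,<_2$ can be read off directly from the orientation data.
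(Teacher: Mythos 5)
Your proposal is correct and takes essentially the same route as the paper's proof: reduce via Lemma~\ref{lemma:mD2} to per-quotient-graph extendability, handle each node by its type (empty, complete, prime), and recompose via Theorem~\ref{theo: permConstruction}. The prime-case obstacle you flag is resolved in the paper by letting the McConnell--Spinrad algorithm produce one permutation diagram of~$G[\mu]$ directly in $O(|V(G[\mu])|+|E(G[\mu])|)$ time (never materialising~$\overline{G[\mu]}$), and then generating the other three candidates by reversing the order along both lines, swapping~$L_1$ and~$L_2$, or doing both.
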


\begin{proof}
Let~$G = (V, E)$ be a permutation graph with~$n$ vertices and~$m$ edges and let~$D'$ be a permutation diagram of an induced subgraph~$H = (V', E')$ of~$G$. We compute the canonical modular decomposition~$T$ of~$G$ in~$O(n+m)$ time~\cite{mcconnell1999modular}.
By Lemma~\ref{lemma:mD2}, it suffices to check whether~$D_\mu'$ can be extended to a representation of~$G[\mu]$ for every~$\mu \in T$.

If~$G[\mu]$ is empty or complete, we can easily extend~$D_\mu'$
to a representation~$D_\mu$ of~$G[\mu]$.
If~$\mu$ is prime, each of~$G[\mu]$ and~$\overline{G}[\mu]$ has exactly two transitive orientations where one is the reverse of the other.
Hence  
there exist only four permutation diagrams 
representing~$G[\mu]$~\cite{golumbic2004algorithmic}.
Note that given an arbitrary permutation diagram~$D_\mu$ for~$G[\mu]$, we get the other three representations by either reversing the order along both horizontal lines, switching~$L_1$ and~$L_2$, or applying both. Hence we can compute all four representations in linear time and check whether one of them contains~$D_\mu'$.

In the positive case, by Theorem~\ref{theo: permConstruction} we
get a representation~$D$ of~$G$ that extends~$D'$ from the representations of the quotient graphs in linear time.
\end{proof}

\subsection{Simultaneous Representations}
\label{sec: simPerm}

Recall that a graph~$G$ is a permutation graph if and only if~$G$ is a comparability and a co-comparability graph.
To solve $\sunflower(\perm)$ efficiently, we build on the following characterization due to 
Jampani and Lubiw.

\begin{proposition}[\cite{jampani2012simultaneous}]
\label{prop:jampani}
  Sunflower permutation graphs 
  are simultaneous permutation graphs if and only if 
  they are simultaneous comparability graphs and 
  simultaneous co-comparability graphs.
\end{proposition}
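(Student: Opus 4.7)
The plan is to deduce this characterization from two classical facts: (i) a graph $G$ is a permutation graph if and only if $G$ and $\overline{G}$ are both comparability graphs (Pnueli--Lempel--Even), and (ii) there is a bijection between permutation diagrams of $G$ and pairs consisting of a transitive orientation of $G$ together with a transitive orientation of $\overline{G}$. Concretely, a diagram $D$ induces an orientation of $G$ by orienting $uv \in E$ from $u$ to $v$ whenever the top endpoint of $u$'s segment lies left of that of $v$, and an orientation of $\overline{G}$ by doing the same on non-edges; conversely, the two orientations together determine the linear order of segment endpoints along each horizontal line.

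For the forward direction, let $D_1, \dots, D_r$ be a simultaneous permutation representation. Extract from each $D_i$ the transitive orientation $O_i$ of $G_i$ and the transitive orientation $\overline{O_i}$ of $\overline{G_i}$ via (i). Since $D_i|_{V(H)} = D_j|_{V(H)}$, both the order along $L_1$ and that along $L_2$ restricted to $V(H)$ are independent of $i$, so $O_i$ and $O_j$ agree on $E(H)$ and $\overline{O_i}$ and $\overline{O_j}$ agree on $E(\overline{H})$. Hence the $G_i$ are simultaneously comparability and simultaneously co-comparability.

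For the backward direction, suppose we are given simultaneous transitive orientations $O_1, \dots, O_r$ of $G_1, \dots, G_r$ agreeing on $E(H)$, and simultaneous transitive orientations $\overline{O_1}, \dots, \overline{O_r}$ of $\overline{G_1}, \dots, \overline{G_r}$ agreeing on $E(\overline{H})$. For each $i$, combine $O_i$ and $\overline{O_i}$ via (ii) to produce a permutation diagram $D_i$ of $G_i$. It remains to verify $D_i|_{V(H)} = D_j|_{V(H)}$ for all $i, j$. For any pair $u, v \in V(H)$, their relative order on $L_1$ in $D_i$ is determined by the orientation of the pair $uv$: by $O_i$ if $uv \in E(H)$, and by $\overline{O_i}$ otherwise; the order on $L_2$ is determined analogously. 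Since in both cases these orientations agree across all $i$ (because of the sunflower hypothesis combined with the simultaneous orientations), the relative positions of $u$ and $v$ are identical in every $D_i$, yielding a simultaneous representation.

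The main obstacle is the backward direction, specifically the step invoking (ii): one must be sure that combining a transitive orientation of $G_i$ with one of $\overline{G_i}$ actually yields a consistent pair of linear orders of $V(G_i)$ on $L_1$ and $L_2$ corresponding to a valid permutation diagram of $G_i$. This is precisely the content of the Pnueli--Lempel--Even correspondence and hinges on transitivity propagating across edges and non-edges. Once (ii) is granted, the sunflower agreement on $V(H)$ follows immediately because each pairwise comparison within $V(H)$ is controlled by a single orientation that is shared across all input graphs.
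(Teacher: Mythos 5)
This proposition is cited from Jampani and Lubiw~\cite{jampani2012simultaneous}; the paper does not reprove it, so there is no in-paper proof to compare against. Your reconstruction is correct and is essentially the standard argument underlying the cited result: it rests on the Pnueli--Lempel--Even correspondence, which says that for a permutation graph $G$, pairing a transitive orientation $F_1$ of $G$ with a transitive orientation $F_2$ of $\overline{G}$ makes $F_1\cup F_2$ and $F_1^{-1}\cup F_2$ transitive tournaments, i.e.\ the two linear orders along $L_1$ and $L_2$ of a permutation diagram. One small point worth spelling out when you invoke the sunflower hypothesis in the backward direction: you implicitly use that $\overline{G_i}\cap\overline{G_j}=\overline{H}$, which holds because for $u,v\in V(H)$ the sunflower condition forces $uv\in E(G_i)\Leftrightarrow uv\in E(H)$ for every $i$, and hence $uv\in E(\overline{G_i})\Leftrightarrow uv\in E(\overline{H})$. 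With that noted, every pair in $V(H)$ is handled by exactly one of the two shared orientations, so the resulting diagrams agree on $V(H)$ as you claim.
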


Let~$G_1, \dots, G_r$ be sunflower permutation graphs.
It follows from Proposition~\ref{prop:jampani} that we can solve $\sunflower(\perm)$ by applying the algorithm to solve~\simor from Section~\ref{sec: simComp} to~$G_1, \dots, G_r$ and their complements~$\overline{G_1}, \dots, \overline{G_r}$. 
Since the algorithm from Section~\ref{sec: simComp} needs time linear in the size of the input graphs and~$\overline{G}$ may have a quadratic number of edges, this approach takes quadratic time in total. This already improves the cubic runtime of the algorithm presented 
by Jampani and Lubiw~\cite{jampani2012simultaneous}.

We show that it is possible to use the 
algorithm from Section~\ref{sec: simComp} to solve also~\simor for co-comparability graphs and thus~$\sunflower(\perm)$ in linear time.  
Let~$G$ be a permutation graph with induced subgraph~$H$.
Recall that a graph~$G$ and its complement have the same canonical modular decomposition, they only differ in the type of the nodes.
For a node~$\mu$ with~$G[\mu]$ prime also~$\overline{G}[\mu]$ is prime; if~$G[\mu]$ is complete or empty,~$\overline{G}[\mu]$ is empty or complete, respectively~\cite{gallai1967transitiv}.

Let~$B$ be the canonical modular decomposition of~$H$
and let~$T$ be the canonical modular decomposition of~$G$ restricted to~$H$. The goal is to compute constrained modular decompositions~$(B, S^1, \varphi_1), \dots, (B, S^r, \varphi_r)$ such that for every~$i \in \{1, \dots, r\}$~$(B, S^i, \varphi_i)$ contains exactly those transitive orientations of~$\overline{H}$ that can be extended to a transitive orientation of~$\overline{G}_i$, and~$(B, S, \varphi)$ with ~$\TO(B, S, \varphi) = \bigcap_{i = 1}^r \TO(B, S^i, \varphi_i)$.
In linear time we cannot explicitly compute~$\overline{G}$ and the corresponding quotient graphs. 
Hence we cannot store a default orientation
for the prime quotient graphs~$\overline{H}[\mu]$.  We can, however, compute a \emph{default permutation diagram}~$D_\mu$ representing~$\overline H[\mu]$ from its default orientation~$O_\mu$ in linear time~\cite{mcconnell1999modular}.
We apply the algorithm from Section~\ref{sec: simComp}. 
The 2-\textsc{Sat}-formula~$\psi$ describing the dependencies between the
orientations of the prime nodes can be computed in~$O(|V(H)| + |E(H)|)$
time since when it is required to check whether an oriented edge~$uv$ is contained in the default orientation of a prime quotient graph~$\overline{H}[\mu]$, 
we do this in constant time by checking for the non-edge~$uv$ in~$H[L(\mu)]$
whether~$\repMu{u} <_\mu \repMu{v}$. 
We can determine one non-edge for each prime quotient graph of~$B$ in~$O(|E(H)|)$ time in total.
Note that one non-edge per prime quotient graph suffices to answer the queries.
Further we have to compute the PQ-trees for the complete nodes in~$T$ and~$\chi$. Note that the only step in the computation of the PQ-trees that takes potentially quadratic time for~$\overline{H}$ is the computation of the order of the children for the Q-nodes.
But since every prime node is labeled with a default representation, we can compute the PQ-trees and~$\chi$ in~$O(|V(H)|)$ time instead of~$O(|V(\overline{H})|) + |E(\overline{H})|)$ time, since we get the order of the children of a Q-node directly from the corresponding permutation diagram. 

\begin{theorem}
\label{theorem:simrepPG}
    $\sunflower(\perm)$ can be solved in linear time.
\end{theorem}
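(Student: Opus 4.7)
The plan follows the road map already sketched in the paragraph preceding the theorem: combine the comparability and co-comparability viewpoints via Proposition~\ref{prop:jampani}, which tells us that sunflower permutation graphs $G_1,\dots,G_r$ admit a simultaneous permutation representation iff they admit simultaneous transitive orientations both as comparability and as co-comparability graphs. I will therefore run the \simor algorithm from Section~\ref{sec: simComp} twice in parallel: once on $G_1,\dots,G_r$ and once, conceptually, on $\overline{G_1},\dots,\overline{G_r}$. The main obstacle is the second run, because $\overline{G_i}$ may have a quadratic number of edges and cannot be constructed explicitly; essentially the whole argument consists in showing that each ingredient of \simor can be produced in $O(n+m)$ time from the \emph{original} graphs together with default permutation diagrams of the prime quotients.

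First I would compute, in $O(n+m)$ time, the canonical modular decompositions $T_i$ of each $G_i$ and $B$ of $H$ via McConnell--Spinrad, exploiting that $G$ and $\overline G$ share the same modular tree (only node types interchange, with complete and empty swapped and prime unchanged). For every prime quotient graph I additionally store both a transitive orientation and a default permutation diagram; the diagram is obtained from the orientation in linear time. The comparability run on $G_1,\dots,G_r$ is then exactly the algorithm of Section~\ref{sec: simComp} and yields a constrained modular decomposition $(B,S,\varphi)$ whose set \TO(B,S,\varphi) consists of precisely the transitive orientations of $H$ extendible to every $G_i$.

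For the co-comparability run I execute the same algorithm but every query that would inspect an edge of $\overline H$ or $\overline{G_i}$ is answered using the default permutation diagrams instead. Two points require care: (i) building the 2-SAT formula $\psi$ demands, for each prime quotient $\overline H[\mu]$, the ability to test whether an oriented edge lies in its default orientation or its reverse; I answer this by fixing a single non-edge of $H$ inside $H[L(\mu)]$ and reading its order off the default permutation diagram, so the total cost is $O(|V(H)|+|E(H)|)$; (ii) when constructing the PQ-trees $S_\mu$ for the complete nodes of $\overline H$ (which are the empty nodes of $H$), topologically sorting the transitive orientation induced on the children of a Q-node would be quadratic, but the required linear order is read directly from the stored default permutation diagram in $O(|V(H)|)$ total time. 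Every remaining step of \simor depends only on the shape of the modular tree and on the precomputed default data, so the run stays within $O(n+m)$ and produces a second constrained modular decomposition of $\overline H$.

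Finally, I intersect the two constrained modular decompositions using Lemma~\ref{lemma:schnitt} (applied independently to the comparability and co-comparability sides) and check satisfiability of the combined constraints; if either side forces a null PQ-tree or an unsatisfiable 2-SAT formula, we reject. Otherwise a satisfying assignment yields both a transitive orientation of $H$ and one of $\overline H$, which together determine a permutation diagram $D_H$ of $H$. Extending $D_H$ to each $G_i$ is done by running the \orext solver of Section~\ref{sec:ParComp} on $G_i$ and, analogously, on $\overline{G_i}$ (again without ever materialising $\overline{G_i}$), then assembling the representation via the bijection of Theorem~\ref{theo: permConstruction}. Correctness is immediate from Proposition~\ref{prop:jampani}, and every stage was shown to run in $O(n+m)$ time, giving the claimed linear-time bound.
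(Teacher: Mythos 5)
Your proposal is correct and follows essentially the same approach as the paper: it invokes Proposition~\ref{prop:jampani}, runs the \simor machinery on the graphs and (conceptually) on their complements, and uses default permutation diagrams together with the shared canonical modular tree to avoid materializing the quadratic-sized complements. The only small divergence is in the final extension step, where the paper delegates to the $\repext(\perm)$ algorithm of Section~\ref{sec:ParPerm} while you propose running \orext on both $G_i$ and $\overline{G_i}$ and reassembling via Theorem~\ref{theo: permConstruction}; both variants give the same linear-time bound.
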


\begin{proof} 
Let~$\varphi$ be the 2-\textsc{Sat} formula we get for~$H$ and let~$\overline{\varphi}$ be the formula from~$\overline{H}$.
Further let~$S$ and~$\overline{S}$ be the set of 
intersected PQ-trees we get for~$G_1, \dots, G_r$ and~$\overline{G_1}, \dots, \overline{G_r}$, respectively. 
Then
$G_1, \dots, G_r$ are simultaneous permutation graphs if and only if~$\varphi$ and~$\overline{\varphi}$ are satisfiable and
neither~$S$ nor~$\overline{S}$ contain the null tree. This can be checked in linear time.
In the positive case we proceed as follows.
\begin{enumerate}
  \item For every quotient graph in~$B$ compute the permutation diagram induced by the solutions of~$\varphi$ and~$\overline{\varphi}$.
  \item Compute a permutation diagram~$D_H$ representing~$H$ from the representations of the quotient graphs (see Theorem~\ref{theo: permConstruction}).
  \item For every input graph~$G_i$ use the algorithm from Section~\ref{sec:ParPerm} to extend~$D_H$ to a representation of~$G_i$.
\end{enumerate}

More precisely, Step~2 works as follows.
For complete or empty nodes~$\mu$ we 
construct the representation by choosing a 
linear order of their children represented by 
the corresponding PQ-tree where every Q-node is oriented 
according to a solution of~$\varphi$ or~$\overline{\varphi}$, for the 
upper line. If~$H[\mu]$ is empty, we choose 
the same order for the bottom line, if 
$H[\mu]$ is complete, the bottom line is labeled 
with the reversed order.

For a prime quotient graph~$\mu$ we 
distinguish four cases.
Let~$x_\mu$ be the variable encoding the orientation of~$H[\mu]$ and let~$y_\mu$ be the variable encoding the orientation of~$\overline{H}[\mu]$.
If~$x_\mu = \texttt{true}, y_\mu = 
\texttt{true}$
we choose the default representation~$D_\mu$,
if~$x_\mu = \texttt{true}, y_\mu = 
\texttt{false}$
we reverse the orders along both horizontal 
lines of~$D_\mu$, if~$x_\mu = \texttt{false}, 
y_\mu = \texttt{true}$
we switch the orders along the horizontal 
lines of~$D_\mu$ and if~$x_\mu = 
\texttt{false}, y_\mu = \texttt{false}$
we reverse the orders along both horizontal 
lines of~$D_\mu$ and switch them.
\end{proof}

\section{Circular permutation graphs}
\label{sec:cperm}
In this section we give efficient algorithms for solving \repext(\cperm) and \sunflower(\cperm)
based on the linear time algorithms for solving \repext(\perm) and \sunflower(\perm) and
the \emph{switch} operation.

Let~$G = (V,E)$ be a circular permutation graph. \emph{Switching} a vertex~$v$ in~$G$, i.e., connecting it to all 
vertices it was not adjacent to in~$G$ and removing all edges to its former neighbors, gives us the graph~$G_v = (V, 
E_v)$ with~$E_v = \binom{V}{2} \setminus E$.  The graph we obtain by switching all neighbors of a vertex~$v$ we denote by~$G_{N(v)}$.

Let~$C$ be a circular permutation diagram representing a permutation graph~$G = (V, E)$ and let~$v \in V$ be a vertex in 
$G$. The chord~$v$ in~$C$ can be \emph{switched} to the chord~$v'$, if~$v'$ has the 
same endpoints as~$v$, but intersects exactly the chords~$v$ does not intersect in~$C$. Hence, 
this modified circular permutation diagram~$C'$ is a representation of~$G_v$~\cite{rotem1982circular}. 
Here it suffices to know that 
there exists a~$v'$ that~$v$ can be switched to, hence, we use 
the term \emph{switching chord v}, without further specifying~$v'$.
\subsection{Extending Partial Representations}

In this section we show that $\repext(\cperm)$ can be solved via a linear-time Turing reduction to $\repext(\perm)$ using the switch operation.

 Let~$G$ be a circular permutation graph and let~$C_p$ be a circular permutation diagram representing an induced subgraph~$H$ of~$G$. Let~$v$ be an arbitrary vertex in~$G$ and let~$G' = G_{N(v)}$. The circular permutation diagram we obtain by switching all chords corresponding to neighbors of~$v$ in~$C_p$ we denote by~$C_p'$.
Since~$G'$ is a permutation graph~$C_p'$ can be transformed into a permutation diagram~$D_p'$ representing the same induced subgraph. Observe that~$C_p$ is extendible to a representation of~$G$ if and only if~$D'_p$ can be extended to a representation of~$G'$ since an extension of~$D'_p$ can easily be transformed back into a circular permutation diagram extending~$C_p$.
This can be checked in linear time using the algorithm from Section~\ref{sec:ParPerm}.

To achieve a linear runtime in total we have to switch 
the neighborhood of a vertex~$v$ with minimum degree in~$G$.
In case~$v \in V(H)$ we can easily transform~$C_p'$ into a permutation diagram~$D_p'$ representing the same induced subgraph by opening~$C_p'$ along the isolated chord~$v$.
If~$v \notin V(H)$ it is more difficult to find a position where we can open~$C_p'$. 

\begin{theorem}
  $\repext(\cperm)$ can be solved in linear time.
\end{theorem}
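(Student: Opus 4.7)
The plan is to reduce $\repext(\cperm)$ to $\repext(\perm)$ via a single switch operation and then apply Theorem~\ref{theorem:REPEXT(PG)}.

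Let $G=(V,E)$ be the input circular permutation graph and $C_p$ the given partial circular permutation diagram of an induced subgraph $H$. First I compute in $O(n+m)$ time a vertex $v$ of minimum degree in $G$; by the handshake lemma $\deg(v)\leq 2m/n$. Second, I construct $G'=G_{N(v)}$. The edges on which $G$ and $G'$ differ are exactly those with one endpoint in $N(v)$ and one endpoint in $V\setminus N(v)$, whose number is at most $\deg(v)\cdot n = O(m)$, so $G'$ can be built in linear time. Observe that $v$ is isolated in $G'$, because every edge incident to $v$ gets flipped by the switches of its neighbors.

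Third, I obtain $C_p'$ from $C_p$ by switching each chord corresponding to a vertex of $N(v)\cap V(H)$, maintaining the endpoint orderings on $\ell_1$ and $\ell_2$ as doubly-linked lists; by the switch operation of Rotem and Urrutia~\cite{rotem1982circular}, $C_p'$ is a circular permutation diagram of $G'[V(H)]$. To turn $C_p'$ into a permutation diagram $D_p'$, I cut the annulus along a curve from $\ell_1$ to $\ell_2$ that crosses no chord. If $v\in V(H)$, the chord $v$ in $C_p'$ is isolated (since $v$ is isolated in $G'$) and I cut along it. Afterwards I invoke Theorem~\ref{theorem:REPEXT(PG)} on $G'\setminus\{v\}$ with partial representation $D_p'$, re-insert $v$ as an isolated chord, close the resulting permutation diagram back into a circular diagram, and switch the chords corresponding to $N(v)$ once more to undo the initial switch; this yields a circular permutation diagram of $G$ that extends $C_p$.

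The main obstacle is the case $v\notin V(H)$, where no chord of $v$ is present in $C_p'$ and so no immediate cut position is available. The key idea would be that in any extension of $C_p'$ to a circular representation of $G'$ the chord for $v$ is isolated, hence it must occupy a common gap of the cyclic orderings of endpoints of $C_p'$ along $\ell_1$ and $\ell_2$. The plan is to locate such a gap in linear time by scanning the two cyclic sequences, cut there, and proceed as above; the crux of the argument is to show that such a gap can always be found when an extension exists and that one suitably chosen gap suffices, so that the reduction remains linear rather than requiring $\Theta(n)$ trial cuts.
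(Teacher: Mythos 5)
Your overall plan coincides with the paper's: pick a minimum-degree vertex $v$, switch its neighborhood to get a permutation graph $G'=G_{N(v)}$ of linear size, transform the given circular partial diagram accordingly, cut the annulus to get a linear partial diagram, and invoke Theorem~\ref{theorem:REPEXT(PG)}. The case $v\in V(H)$ is handled the same way as in the paper, by cutting along the isolated chord~$v$.

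However, you do not actually prove the theorem: the case $v\notin V(H)$ is left open, and you acknowledge this yourself (``the crux of the argument is to show that such a gap can always be found when an extension exists and that one suitably chosen gap suffices''). This is precisely where the paper's proof does the real work, and the issue is not merely one of detail. After switching, the cyclic diagram $C_p'$ of $H'$ decomposes into consecutive blocks, one per connected component of $H'$, and the cut must go between two blocks; but there may be $\Theta(n)$ such gaps, and it is far from clear that an arbitrary gap works or that a single cleverly chosen one always suffices. The paper resolves this with a structural case analysis: it uses vertices of $G$ that have neighbors in every component of $H'$ (but not in all of $H$) to rule out almost all gaps, yielding at most two candidate cut positions, and when no such vertex exists it iteratively merges components using the remaining vertices until only one or two gaps are left. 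Without some argument of this kind, your reduction could require $\Theta(n)$ trial cuts, each costing linear time, for a quadratic total. So the proposal has a genuine gap exactly at the step you flag as the crux, and a correct proof needs the structural analysis the paper supplies there.
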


\begin{proof}
Let~$G$ be a circular permutation graph and let~$C'$ be a circular permutation diagram representing an induced subgraph~$H$ of~$G$. Let~$v$ be a vertex of minimum degree in~$G$.
Then we compute~$G'$ in linear time~\cite{sritharan1996linear}.
In case~$v \in V(H)$ we open~$C_p'$ along the isolated chord~$v$
and check in linear time whether the resulting permutation diagram can be extended to a representation of~$G'$.
If~$v \notin V(H)$ we proceed as follows.

Let~$H'$ be the graph we obtain from~$H$ by switching all vertices that are adjacent to~$v$ in~$G$ and let~$CC(H')$ be the set of all connected components in~$H'$. Note that in~$C_p'$
the endpoint of chords corresponding to vertices of the same connected component of~$H'$ appear as consecutive blocks along both the outer and the inner circle and that the only positions where we can open~$C_p'$ are \emph{between} these blocks; see Figure~\ref{fig:cc}.
\begin{figure}
  \centering
  \includegraphics{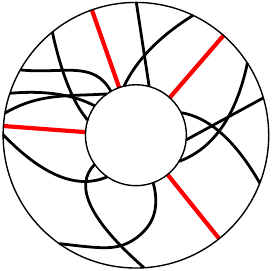}
  \caption{The circular permutation representation~$C$ of a graph with four connected components. The red line segments mark the positions where~$C$ can be opened.}
  \label{fig:cc}
\end{figure}
If~$|CC(H')| = 1$ or~$|CC(H)| = 2$
there are only one or two positions, respectively, where
we can open~$C_p'$. Hence in these cases we can 
construct all one or two possible permutation diagrams~$D_p'$ representing~$H'$ and check whether one of them is extendible.  
Else we distinguish several further cases.

\noindent\textbf{Case 1:~$|CC(H)| > 2$} and there exists a vertex~$u$ in~$G$ such that~$u$ has a neighbor in every 
connected component of~$H'$ but is not adjacent to all vertices in~$H$.
Note that if more than two connected components in~$CC(H')$ contain vertices not adjacent to~$u$,~$C'_p$ is not extendible. The same holds if two non-adjacent connected components in~$CC(H')$ contain vertices not adjacent to~$u$.
Else we distinguish further subcases.

\noindent\textbf{Case 1a:} Two connected components in~$CC(H')$ contain vertices not adjacent to~$u$ and the corresponding blocks are adjacent in~$C_p'$. Then the only position where we can open~$C_p'$ is between these two blocks.

\noindent\textbf{Case 1b:} Only one connected components in~$CC(H')$ contains vertices not adjacent to~$u$. Then we can either open to the left or to the right of the block. In this case we check whether one of the to possibilities gives us an extendible permutation diagram.

\noindent\textbf{Case 2:} there exists no vertex~$u$ in~$G$ that has a neighbor in every connected component of~$H'$ but is not adjacent to every vertex in~$H$. Then we remove all vertices from~$G$ that are adjacent to every other vertex in~$H$.
If~$G$ is empty afterwards, we can choose an arbitrary gap between two adjacent blocks and open~$C_p'$ there.
Else we iteratively merge for each remaining vertex~$u$ all connected components of~$H'$ that contain a vertex adjacent to~$u$. Since~$G$ is connected we end up with either one or two blocks which gives us only one or two positions to open~$C_p'$, respectively. 
We check whether one of them leads to an extendible permutation diagram.
\end{proof}

\subsection{The Simultaneous Representation Problem}
In this section we show that $\sunflower(\cperm)$ can be solved via a quadratic-time Turing reduction to \sunflower(\perm)
using the switch operation. Here we need to switch the
neighborhood of a vertex v shared by all input graphs. Hence in contrast to the reduction in
Section 6.1, where we switched the neighborhood of a vertex of minimum degree, the graph~$G_{N(v)}$ may have quadratic size.

\begin{lemma}
\label{lemma: help}
  Let~$G_1, G_2, \dots, G_r$ be sunflower circular permutation graphs sharing an induced subgraph~$H$. Let~$v$ be a vertex in~$H$ and let~$G_i'$ be the graph we obtain by switching all neighbors of 
  vertex~$v$ in~$G_i$ for~$i \in \{1, \dots, r\}$.
  Then~$G_1, G_2, \dots, G_r$ are simultaneous circular permutation graphs if and only 
  if~$G_1', G_2', \dots, G_r'$ are simultaneous permutation graphs.
\end{lemma}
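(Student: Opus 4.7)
The plan is to use the involutive nature of the switch operation on circular permutation diagrams together with the observation that after switching every neighbor of $v$, the chord $v$ becomes isolated. This lets us toggle between a circular permutation diagram and a permutation diagram by opening or closing along $v$.

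First I would establish the following compatibility observation. Since $H$ is an induced subgraph of each $G_i$ and $v\in V(H)$, we have $N_{G_i}(v)\cap V(H)=N_H(v)$ for every $i$. So although the set $S_i := N_{G_i}(v)$ of vertices that gets switched varies with $i$, its intersection with the shared vertex set $V(H)$ is always the fixed set $N_H(v)$. Hence switching $S_i$ transforms the induced representation of $H$ in exactly the same way for every $i$, namely by switching the chords of $N_H(v)$; the remaining switched chords lie outside $V(H)$ and do not affect the shared subdiagram.

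For the forward direction, suppose $C_1,\dots,C_r$ are simultaneous circular permutation diagrams agreeing on a representation $C_H$ of $H$. Switching in each $C_i$ the chords corresponding to $S_i$ yields a circular permutation diagram $C_i'$ of $G_i'$, and by the compatibility observation the shared $C_H$ is transformed uniformly into a common representation $C_H'$ of $H'$. Since every vertex of $S_i$ has its adjacency to $v$ flipped, the chord $v$ is now isolated in every $C_i'$ as well as in $C_H'$. Opening each $C_i'$ along $v$ produces a permutation diagram $D_i'$ of $G_i'$; because $v$ lies in the shared part, all the $D_i'$ are opened at the same position and thus agree on a common permutation diagram $D_H'$ of $H'$. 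Conversely, given simultaneous permutation diagrams $D_1',\dots,D_r'$ of $G_1',\dots,G_r'$ agreeing on $D_H'$, we close each $D_i'$ canonically along the isolated chord $v$ into a circular permutation diagram $C_i'$ of $G_i'$, obtaining a shared circular representation $C_H'$ of $H'$. Switching $S_i$ in each $C_i'$ is the inverse of the switch used in the forward direction and recovers a circular permutation diagram $C_i$ of $G_i$; the compatibility observation again ensures that the shared subdiagram transforms uniformly into a common $C_H$ representing $H$.

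The main obstacle is justifying that the switch operation interacts correctly with the shared subgraph $H$, so that a single common representation of $H$ (respectively $H'$) is obtained across all input graphs after the switching step. This is precisely what the compatibility observation $N_{G_i}(v)\cap V(H)=N_H(v)$ provides, and it holds exactly because $v$ is chosen from the shared subgraph $H$. Once this identity is in hand, the remaining content is the involutivity of chord switching and the standard equivalence between a permutation diagram and a circular permutation diagram with an added or removed isolated chord.
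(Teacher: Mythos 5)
Your proof is correct and follows essentially the same approach as the paper: switch the neighbors of $v$ in each $G_i$, observe that this transforms the shared diagram uniformly (you make this explicit via $N_{G_i}(v)\cap V(H)=N_H(v)$, while the paper phrases it as the cyclic orders being preserved and the $V(H)$-chords being switched in all $C_i$ or in none), then pass between circular and linear permutation diagrams using the now-isolated chord $v$. The paper states the same ingredients, with the compatibility observation appearing in the converse direction; you simply make it symmetric and more explicit.
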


\begin{proof}
  Recall that~$G_1', G_2', \dots, G_r'$ are indeed permutation graphs~\cite{rotem1982circular}.
  Let~$G_1, G_2, \dots, G_r$ be simultaneous circular permutation graphs.
  Then there exist circular permutation diagrams~$C_1, C_2, \dots, C_r$ such that for 
  every~$1 \leq i \leq r$,~$C_i$ represents~$G_i$ and all~$C_i$s coincide restricted to the vertices of~$H$.
  We get a circular permutation diagram~$C_i'$ representing~$G_i'$ by switching all chords corresponding
  to neighbors of vertex~$v$ in~$C_i$.
  Since the order of the vertices along the inner and outer circle is not affected by the switch operation, we know that all~$C_i'$s coincide restricted to the vertices of~$H$. 
  Recall that after switching all neighbors of chord~$v$ in a circular permutation diagram, no 
  chord intersects~$v$ any more and hence we obtain a permutation diagram~$D_i'$ 
  representing~$G_i'$ by opening~$C_i'$ along the chord~$v$.
  Then the~$D_i'$s also coincide on the vertices of~$H$ and hence~$G_1', G_2', \dots, G_r'$ are simultaneous 
  permutation graphs.
  
  Conversely let~$G_1', G_2', \dots, G_r'$ be simultaneous permutation graphs.
  Then there exist permutation diagrams~$D_1', D_2', \dots, D_r'$ such that for every~$1 \in \{1, \dots, 
  r\}$,~$D_i'$ represents~$G_i'$ and all~$D_i'$s coincide on the vertices of~$H$.
  Recall that we can transform every linear permutation diagram~$D_i'$ into a circular permutation 
  diagram~$C_i'$ representing~$G_i'$.
  Note that the~$C_i'$s also coincide on the vertices of~$H$.
  Now we obtain a circular permutation diagram~$C_i$ representing~$G_i$ by switching all chords that
  correspond to vertices adjacent to~$v$.
  The~$C_i$s still coincide on the vertices of~$H$ since the switch operation 
  does not change the order of the vertices along the outer or the inner circle of a circular 
  permutation diagram and chords corresponding to vertices in~$H$ are either switched in every~$C_i$ or 
  in none of them.
\end{proof}

\begin{theorem}
  $\sunflower(\cperm)$ can be solved in $O(n^2$) time.
\end{theorem}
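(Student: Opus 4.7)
The plan is to apply Lemma~\ref{lemma: help} to reduce $\sunflower(\cperm)$ to $\sunflower(\perm)$ by picking a vertex $v \in V(H)$ and switching the neighborhood of $v$ in each input graph, then invoking the linear-time algorithm of Theorem~\ref{theorem:simrepPG}. The preceding section established this reduction at the level of representations; here we just need to bound the runtime and handle the degenerate case where $H$ is empty.

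First I would dispose of the case $V(H) = \emptyset$: the input graphs share no vertex, so $G_1,\dots,G_r$ are simultaneous circular permutation graphs if and only if each $G_i$ individually is a circular permutation graph. This can be decided by running the linear-time $\repext(\cperm)$ algorithm from the previous section on each $G_i$ with an empty partial representation, for a total of $O(n+m) \subseteq O(n^2)$ time.

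Otherwise, fix an arbitrary vertex $v \in V(H)$. For each $i \in \{1,\dots,r\}$, construct $G_i' = G_{i,N(v)}$ by switching all neighbors of $v$ in $G_i$. Since $v$ lies in every $G_i$, this is well defined, and by~\cite{rotem1982circular} each $G_i'$ is a permutation graph. Each $G_i'$ has at most $\binom{n_i}{2}$ edges, so constructing all of them takes $O(n^2)$ time and $\sum_i(|V(G_i')| + |E(G_i')|) = O(n^2)$. By Lemma~\ref{lemma: help}, $G_1,\dots,G_r$ admit simultaneous circular permutation diagrams if and only if $G_1',\dots,G_r'$ admit simultaneous permutation diagrams; this is decided by Theorem~\ref{theorem:simrepPG} in time linear in the input size, hence $O(n^2)$ overall.

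In the positive case, the $\sunflower(\perm)$ algorithm returns simultaneous permutation diagrams $D_1',\dots,D_r'$. To produce simultaneous circular permutation diagrams for the original input, I would invert the reduction exactly as in the proof of Lemma~\ref{lemma: help}: wrap each $D_i'$ into a circular permutation diagram $C_i'$ representing $G_i'$, then switch back all chords corresponding to the original neighbors of $v$ in $G_i$, yielding a circular permutation diagram $C_i$ of $G_i$. Since the chords shared with $H$ are either all switched or all not switched uniformly across the $C_i'$, the resulting $C_i$'s agree on $V(H)$. This output step is again bounded by $O(n^2)$. The main obstacle, and the reason the runtime is $O(n^2)$ rather than linear, is unavoidable: the shared vertex $v$ must be used for switching in every input graph simultaneously, so we lose the freedom (exploited in Section~\ref{sec:cperm}'s partial-representation algorithm) to pick a minimum-degree vertex, and the complement structure of $G_{i,N(v)}$ can genuinely be quadratic in size.
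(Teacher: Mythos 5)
Your proposal is correct and follows essentially the same route as the paper: switch the neighborhood of a shared vertex $v\in V(H)$, invoke Lemma~\ref{lemma: help} to reduce to $\sunflower(\perm)$, apply the linear-time algorithm of Theorem~\ref{theorem:simrepPG}, and transform back, all within $O(n^2)$ because constructing the switched graphs can be quadratic. The only addition you make is the explicit treatment of the degenerate case $V(H)=\emptyset$, which the paper's proof tacitly assumes away; that is a harmless and correct refinement.
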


\begin{proof}
  Let~$G_1, \dots, G_r$ be sunflower circular permutation graphs.
  Let~$v$ be a vertex in~$H$ and let~$G_i'$ be the graph we obtain by switching all neighbors of 
  vertex~$v$ in~$G_i$ for~$i \in \{1, \dots, r\}$.
  By Lemma~\ref{lemma: help} to solve \textsc{SimRep(\cperm)} for~$G_1, G_2, \dots, G_r$ it suffices 
  to solve \textsc{SimRep(\perm)} for~$G_1', G_2', \dots, G_r'$. Computing~$G_1', G_2', \dots, G_r'$ takes quadratic time. 
  In Section~\ref{sec: simPerm} we have seen that \textsc{SimRep(perm)} can be solved in 
linear time for sunflower permutation graphs~$G_1', \dots, G_r'$, hence in total we need
quadratic time to solve $\sunflower(\cperm)$.
  
  If~$G_1, G_2, \dots, G_r$ are simultaneous 
  circular 
  permutation graphs we get corresponding simultaneous representations by transforming simultaneous linear
  permutation diagrams representing~$G_1', G_2', \dots, G_r'$ into circular permutation diagrams and 
  switching all chords corresponding to a neighbor of vertex~$v$ in~$H$. 
  This takes quadratic time in total.
\end{proof}

\section{Simultaneous Orientations and Representations for General Comparability Permutation Graphs}
\label{sec:hard}
We show that the simultaneous orientation problem for comparability graphs and the simultaneous representation problem for permutation graphs are \texttt{NP}-complete in the non-sunflower case.

\begin{theorem}
\label{theo:np}
  \simorNon for~$k$ comparability graphs where~$k$ is part of the input is \texttt{NP}-complete.
\end{theorem}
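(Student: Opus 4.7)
The plan is to prove \texttt{NP}-completeness via reduction from Not-All-Equal 3-SAT (NAE 3-SAT), which is classically \texttt{NP}-complete. Membership in \texttt{NP} is immediate: given a candidate family $(O_1, \ldots, O_k)$ one verifies in polynomial time that each $O_i$ is a transitive orientation of $G_i$ and that for every pair $i < j$ the orientations $O_i$ and $O_j$ agree on $E(G_i \cap G_j)$.

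For hardness, given a NAE 3-SAT instance $\phi$ with variables $x_1, \ldots, x_n$ and clauses $C_1, \ldots, C_m$, I would construct $n + 4m$ comparability graphs as follows. For each variable $x_i$ the \emph{variable graph} $G_i^{\mathrm{var}}$ consists of the single edge $e_i = a_i b_i$, whose orientation encodes the truth value of $x_i$. For each clause $C_j$ the \emph{clause graph} $G_j^{\mathrm{cls}}$ is a triangle on three fresh vertices $v_{j,1}, v_{j,2}, v_{j,3}$, whose edges $f_{j,k} = v_{j,k} v_{j,(k \bmod 3)+1}$ encode the truth values of the three literals. For each occurrence of a literal involving variable $x_i$ as the $k$-th literal of $C_j$, I add a \emph{link graph} whose two end-edges are $e_i$ and $f_{j,k}$ and whose remaining vertices are fresh; concretely this link graph is $P_4$ if the literal is positive and $P_5$ if it is negative.

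Correctness rests on three structural observations. First, the triangle $K_3$ has exactly six transitive orientations (one per linear order of its three vertices), and the two forbidden orientations are precisely the two directed $3$-cycles, which occur exactly when all three $f_{j,k}$ are rotated consistently around the triangle, i.e., when all three literals share the same truth value; hence $G_j^{\mathrm{cls}}$ admits a transitive orientation extending the chosen $f_{j,k}$-directions if and only if $C_j$ is NAE-satisfied. Second, $P_4$ is a prime comparability graph with two transitive orientations, both of which direct the two end-edges the same way along the path, which \emph{synchronizes} the orientations of $e_i$ and $f_{j,k}$ as needed when $l_{j,k}$ is positive. Third, $P_5$ is likewise prime with two transitive orientations that direct the end-edges oppositely, giving the \emph{anti-synchronization} required for negation.

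Combining these, a simultaneous transitive orientation determines a truth assignment via the $e_i$'s, the link graphs propagate it with the correct polarity to the $f_{j,k}$'s, and the triangle at each clause witnesses the NAE condition; the converse construction is equally direct. The main obstacle I anticipate is verifying carefully that the pairwise intersections $G \cap G'$ of input graphs contain exactly the intended shared edges---in particular that two link graphs referring to the same variable $x_i$ intersect only in the edge $e_i$, and that two link graphs in the same clause intersect only in a single triangle vertex with no induced edge, so that no unintended synchronization is imposed through a chain of link graphs. Since all internal vertices of link graphs and all clause vertices are chosen fresh, this reduces to a straightforward case analysis on pairs of gadgets.
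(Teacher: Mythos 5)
Your proof is correct in substance, but it reduces from a different \texttt{NP}-complete problem than the paper and uses a different family of gadgets, so it is worth comparing. The paper reduces from \textsc{TotalOrdering} (Betweenness): one complete graph $G_0$ on the ground set plays the role of an order variable, and each triple $(x,y,z)$ gets a single five-vertex gadget (a triangle on $x,y,z$ with a pendant vertex on $x$ and a pendant vertex on $z$) whose unique pair of transitive orientations encodes exactly $x<y<z$ and $z<y<x$. Your reduction is from NAE 3-SAT, with separate variable, clause, and link gadgets; the triangle enforces the NAE condition directly (the two cyclic orientations are exactly the non-transitive ones), and the paths $P_4$ and $P_5$ propagate truth values with the right polarity by exploiting that they are prime comparability graphs whose two orientations align or anti-align the end-edges. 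Both reductions are sound and of comparable difficulty. The paper's construction is more compact (only $t+1$ graphs, one gadget type) and gets the ``global order'' for free from the shared complete graph; yours is more modular in the style of standard SAT reductions but requires a larger zoo of gadgets ($n + 4m$ graphs, three gadget types) and more bookkeeping about pairwise intersections.

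Two small inaccuracies in your write-up that do not affect correctness but should be cleaned up if you were to flesh it out: (i) a $P_4$ link gadget with prescribed end-edges $e_i$ and $f_{j,k}$ has \emph{no} fresh internal vertices at all (only $P_5$ contributes one), so the phrase ``whose remaining vertices are fresh'' is vacuous in the positive case; (ii) you need to fix explicitly which endpoint of $e_i$ is joined to which endpoint of $f_{j,k}$ in the path, since the ``end-edges point the same/opposite way along the path'' property translates into synchronization or anti-synchronization of the clockwise encoding only under one of the two choices. Also note that the variable gadgets $G_i^{\mathrm{var}}$ (single edges) impose no constraint at all and could be dropped, since the link gadgets already share $e_i$; they are harmless but redundant.
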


\begin{proof}
  Clearly the problem is in \texttt{NP}.
  
   To show the \texttt{NP}-hardness, we give a reduction from the known \texttt{NP}-complete problem \textsc{TotalOrdering}~\cite{opatrny1979total}, which is defined as follows. Given a finite set~$S$ and a finite set~$T$ of triples of elements in~$S$, decide 
  whether there exists a total ordering~$<$ of~$S$ such that for all triples~$(x, y, z) \in T$ either 
 ~$x < y < z$ or~$x > y > z$. Let~$H_T$ be an instance of \textsc{TotalOrdering} 
  with~$s = |S|$ and~$t = |T|$. 
  We number the triples in~$T$ with~$1, \dots, t$ and denote the~$i$th triple by 
 ~$(x_i, y_i, z_i)$. 
  We construct an instance~$H_S$ of \simorNon, consisting of undirected 
  graphs~$G_0, \dots, G_t$ as follows.
  
  \begin{itemize}
    \item~$G_0 := (S, E_0)$ is the complete graph with one vertex for each element in~$S$. 
    \item~$G_i := (V_i, E_i)$ for~$1 \leq i \leq t$ is the graph with vertex set~$V_i = \{x_i, y_i, z_i, a_i, b_i \}$, where $a_i,b_i \notin S$ are new vertices, and edges $E_i = \{x_ia_i, x_iy_i, x_iz_i,$~$y_iz_i, z_ib_i \}$; see Figure \ref{fig:G_i}.
    \end{itemize}
  \begin{figure}[t]
    \centering
    \includegraphics{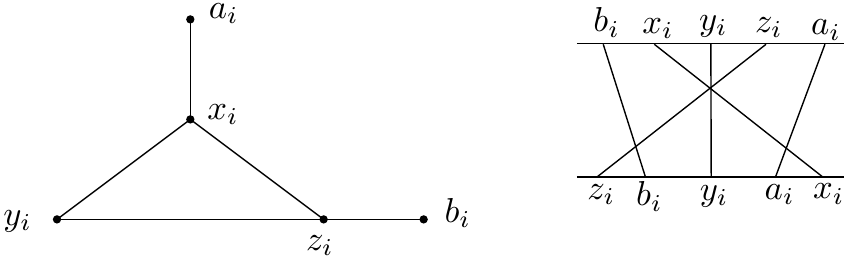}
    \caption{The (permutation) graph~$G_i$ corresponding to the~$i-th$ triple~$(x_i, y_i, z_i)$ of a    
    \textsc{TotalOrdering} instance with corresponding permutation diagram.}
    \label{fig:G_i}
  \end{figure}
  
  For every~$1 \leq i \leq t$, the graph~$G_i$ has  
  exactly two transitive orientations, namely~$\{\overrightarrow{x_ia_i}, 
  \overrightarrow{x_iy_i},$ $\overrightarrow{x_iz_i}, \overrightarrow{y_iz_i}, \overrightarrow{b_iz_i}\}$ 
  and its reversal. 
  We now have to show that~$H_T$ has a solution if and only if~$G_0, \dots, G_t$ are simultaneous 
  comparability graphs. 
  
  First, assume that~$G_0, \dots, G_t$ are simultaneous comparability graphs. Hence there exist 
  orientations~$T_0, \dots, T_t$ such that for every~$0 \leq i \leq t$,~$T_i$ is an 
  orientation of~$G_i$ and for 
  every~$j, k \in \{0, \dots t\}$ every edge in~$E_j \cap E_k$ is oriented in the same way in both~$T_j$ 
  and~$T_k$. Then the orientation of the complete graph~$G_0$ implies a total order on the elements 
  of~$T$, where~$u < v$ if and only if the edge~$uv$ is oriented from~$u$ to~$v$. By construction, there 
  are only two valid transitive orientations for~$G_i$. The one given above implies~$x_i < y_i < z_i$ and 
  for the reverse orientation we get~$z_i < y_i < x_i$. Hence the received total order satisfies that for 
  every triple~$(x, y, z) \in T$ we have either~$x < y < z$ or~$x > y > z$. 
  
  Conversely, assume that~$H_T$ has a solution. Then there exists a total order~$<$ such that for all triples 
 ~$(x, y, z) \in T$ either~$x < y < z$ or~$x > y > z$ holds. We get a transitive orientation~$T_i$ of~$G_i$ for~$1 \leq i \leq t$ by orienting all edges between the vertices~$x_i, y_i$ and~$z_i$ towards 
  the greater element according to the order $<$. If~$x_i$ is the smallest element of the triple, then 
  we choose~$\overrightarrow{x_ia_i}$ and~$\overrightarrow{b_iz_i}$, else~$z_i$ is the smallest element 
  and we choose~$\overrightarrow{a_ix_i}$ and~$\overrightarrow{z_ib_i}$. Finally, we orient the edges in~$G_0$ also towards the greater element according to $<$. This gives us orientations~$T_0, \dots, T_t$ of~$G_0, \dots, G_t$ with the property that for every~$j, k \in \{0, \dots t\}$ every edge 
  in~$E_j \cap E_k$ is oriented in the same way in both~$T_j$ and~$T_k$. 
  Hence~$G_0, \dots, G_t$ are simultaneous comparability graphs. 

  Hence the instance~$H_T$ of \textsc{TotalOrdering} has a solution if and only if the instance~$H_S$ is a yes-instance of    
  \simorNon.  Since~$H_S$ can be constructed from~$H_T$ in polynomial time, it follows that \simorNon is \texttt{NP}-complete.
\end{proof}

\begin{theorem}
  \textsc{SimRep(Perm)} for~$k$ permutation graphs where~$k$ is not fixed is \texttt{NP}-complete.
\end{theorem}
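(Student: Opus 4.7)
My plan is to reuse the reduction from \textsc{TotalOrdering} used in the proof of Theorem~\ref{theo:np}, observing that the graphs $G_0, G_1, \ldots, G_t$ constructed there are already permutation graphs: $G_0$ is a clique on $S$, and each gadget $G_i$ is equipped with the permutation diagram depicted in Figure~\ref{fig:G_i}. Membership in \texttt{NP} is immediate by guessing permutation diagrams and verifying consistency on shared vertex sets. It then suffices to prove that the \textsc{TotalOrdering} instance has a solution if and only if $G_0, \ldots, G_t$ admit a simultaneous permutation representation.

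The ``only if'' direction is essentially free: a simultaneous permutation representation induces a simultaneous transitive orientation of the $G_i$'s (permutation graphs are comparability graphs), and the argument from the proof of Theorem~\ref{theo:np} extracts a valid total order on $S$ from the transitive orientation of the clique $G_0$.

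For the converse, given a total order $<$ on $S$ making every triple monotone, I would construct the diagrams explicitly. For $G_0$, place the vertices of $S$ along the upper line in the order $<$ and along the lower line in the reverse order; this is the unique permutation diagram of $K_{|S|}$ with that upper-line order. For each gadget $G_i$, which is prime and hence has exactly four permutation diagrams (cf.\ the proof of Theorem~\ref{theorem:REPEXT(PG)}), a direct check shows that in the case $x_i < y_i < z_i$ the diagram with upper line $(x_i, a_i, y_i, b_i, z_i)$ and lower line $(a_i, z_i, y_i, x_i, b_i)$ represents $G_i$ and places the triple in order on the upper line and in reverse on the lower line; the case $x_i > y_i > z_i$ is realised by reversing both lines of this diagram. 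The private vertices $a_i, b_i$ occupy the remaining positions and interact with no other input graph.

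Consistency across pairs is the critical point, and I expect it to be the main obstacle. Every intersection $G_i \cap G_j$ is a clique on the shared triple vertices (any two shared triple vertices form an edge in both gadgets), so there are no shared non-edges and the co-comparability side of the permutation representation is vacuous on the intersections. The only condition to enforce between diagrams is therefore that the shared vertices appear in the same relative order on the upper line of each diagram, which holds by construction since all diagrams respect $<$. The central conceptual step of the proof is this observation that the non-sunflower setting imposes no constraint on shared non-edges, so the transitive-orientation reduction of Theorem~\ref{theo:np} can be lifted to a representation reduction without modifying the gadgets; the explicit diagram above then closes the argument and, combined with the polynomial-time bound from Theorem~\ref{theo:np}, establishes \texttt{NP}-completeness.
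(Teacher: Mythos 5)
Your proposal is correct and rests on the same reduction and the same structural observation as the paper, namely that every pairwise intersection $G_i\cap G_j$ is a clique. The paper phrases this at the level of complements: it notes that $\overline{G_0},\dots,\overline{G_t}$ pairwise share no edges, so they are trivially simultaneous comparability graphs, and then concludes via the Jampani--Lubiw characterization (simultaneous permutation $\Leftrightarrow$ simultaneous comparability $\wedge$ simultaneous co-comparability). You instead construct the permutation diagrams explicitly and verify consistency by hand, using the fact that a clique's lower-line order is forced to be the reverse of its upper-line order, so agreement on the upper line automatically yields agreement on the lower line. The two arguments are equivalent in substance, but your explicit construction has a small advantage in rigour: the paper's appeal to the permutation characterization relies on Proposition~\ref{prop:jampani}, which is stated only for the sunflower setting, whereas your direct diagram construction closes the argument without needing that proposition in the general (non-sunflower) case. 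Both give the same polynomial-time reduction and hence the same \texttt{NP}-completeness conclusion.
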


\begin{proof}
  Clearly the problem is in \texttt{NP}.
  
  To show the \texttt{NP}-hardness, we
  use the same reduction as in the proof of Theorem~\ref{theo:np}. 
  Let $H_T$ be the corresponding instance of \textsc{TotalOrdering}. 
  Note that~$G_0, \dots, G_t$ are permutation graphs
  and thus~$\overline{G_0}, \dots, \overline{G_t}$ are comparability graphs.. 
  We already know that~$H_T$ has a solution if and only if~$G_0, \dots, G_t$ are simultaneous comparability graphs.
  Hence it remains to show that~$G_0, \dots, G_t$ are simultaneous co-comparability graphs if~$H_T$ has a solution.
  Note that~$E(\overline{G_0}) = \emptyset$ and for every~$1 \leq i \leq t$ for every edge~$uv \in 
  E(\overline{G_i})$ at least one of the endpoints~$u$ and~$v$ is in~$\{a_i, b_i\}$. 
  Hence,~$\overline{G_0}, \dots, \overline{G_t}$ pairwise do not share any edges and thus they are also 
  simultaneous comparability graphs. 
\end{proof}

\section{Conclusion}
We showed that the orientation extension 
problem and the simultaneous orientation 
problem for sunflower comparability graphs
can be solved in linear time using the 
concept of modular decompositions. 
Further we were able to use these 
algorithms to solve the partial 
representation problem for permutation and 
circular permutation graphs and the simultaneous representation 
problem for sunflower permutation graphs
also in linear time.
For the simultaneous representation problem for circular permutation
graphs we gave a quadratic-time algorithm.
The non-sunflower case of the simultaneous orientation 
problem and the simultaneous representation 
problem turned out to be \texttt{NP}-complete.

It remains an open problem whether the simultaneous representation problem for sunflower circular permutation graphs can be solved in subquadratic time. Furthermore it would be interesting to examine whether the concept of modular decomposition is also applicable to solve the partial representation and the simultaneous representation problem for further graph classes, e.g. trapezoid graphs.
There may also be other related problems that can be solved for comparability, permutation and circular permutation graphs with the concept of modular decompositions.

\bibliography{partial-modular}

\end{document}